\documentclass[final,hidelinks,onefignum,onetabnum]{siamart251216}

\usepackage{amsfonts}
\usepackage{graphicx}
\usepackage{epstopdf}
\usepackage{algorithmic}
\usepackage{bm}
\usepackage{upgreek}
\usepackage{mathrsfs}
\newcommand{\ii}{\mathrm{i}}

\allowdisplaybreaks

\ifpdf
  \DeclareGraphicsExtensions{.eps,.pdf,.png,.jpg}
\else
  \DeclareGraphicsExtensions{.eps}
\fi

\newsiamremark{remark}{Remark}
\newsiamremark{hypothesis}{Hypothesis}
\crefname{hypothesis}{Hypothesis}{Hypotheses}
\newsiamthm{claim}{Claim}
\newsiamremark{fact}{Fact}
\crefname{fact}{Fact}{Facts}

\headers{Anderson Localization in periodic elastic systems}{X. Feng, P. Li, and W. Wu}

\title{Anderson localization in periodic elastic systems with random perturbations \thanks{Submitted to the editors DATE.
\funding{The work was supported by the National Key R\&D Program of China (2024YFA1012300) and the NSFC grant (12301539).}}}

\author{Xin Feng\thanks{School of Mathematics, Jilin University, Changchun 130000, China (\email{fengxin24@mails.jlu.edu.cn}).}
\and Peijun Li\thanks{SKLMS, Academy of Mathematics and Systems Science, Chinese Academy of Sciences, Beijing 100190, China (\email{lipeijun@lsec.cc.ac.cn}).}
\and Wei Wu\thanks{School of Mathematics, Jilin University, Changchun 130000, China (\email{wei\_wu@jlu.edu.cn}).}}

\usepackage{amsopn}

\ifpdf
\hypersetup{
  pdftitle={Anderson localization in periodic elastic systems with random perturbations},
  pdfauthor={X. Feng, P. Li, and W. Wu}
}
\fi

\begin{document}

\maketitle

\begin{abstract}
This paper investigates Anderson localization in subwavelength elastic periodic systems with random perturbations. For the unperturbed system, we use layer potential techniques to reformulate the eigenvalue problem as boundary integral equations, derive asymptotic formulas for the subwavelength eigenvalues, and prove the existence of a band gap above the subwavelength band. For perturbed systems, we apply the Floquet transform to obtain a periodic formulation and derive equations determining the resonant frequencies under general perturbations. Numerical experiments for perturbed periodic monomers and dimers agree with the theoretical predictions. We further demonstrate Anderson localization by increasing the strength and number of random perturbations. These results provide a mathematical foundation for understanding subwavelength localization in elastic metamaterials.
\end{abstract}

\begin{keywords}
Anderson localization, periodic elastic system, subwavelength resonance, layer potential
\end{keywords}

\begin{MSCcodes}
74J20, 35B27, 45P05, 35P20, 82B44
\end{MSCcodes}

\section{Introduction}

The study of wave localization has grown from a fundamental observation in condensed matter physics into an important theme in mathematical wave theory. In his seminal 1958 paper, Anderson showed that spatial disorder in a lattice potential can suppress electron diffusion, leading to localization \cite{PhysRev.109.1492}. Since then, this concept has been extended from quantum electronic systems to classical wave systems, including acoustic and elastic waves, revealing the universal role of disorder-induced interference. In elastic media, the vector nature of the displacement field makes the spectral analysis more challenging because longitudinal and transverse waves coexist and interact \cite{John1987}. Earlier studies of elastic localization often relied on random matrix theory and multiple scattering in media with disordered elastic parameters \cite{Weaver1990,Sheng2006}. More recently, subwavelength resonators have opened a new regime for studying localization. High-contrast inclusions can generate resonant states that are highly sensitive to aperiodic perturbations. In this setting, layer potential techniques reduce the Lam\'{e} system to a discrete capacitance matrix problem, which captures the hybridization and repulsion of energy levels that are essential for the formation of localized modes \cite{Ammari2024,Ammari2007}.

Motivated by the work \cite{Ammari2024}, this paper investigates localized modes in periodic elastic systems, both in the absence and presence of perturbations. Our analysis is based on layer potential theory \cite{BaoLi2022MaxwellPeriodic, RenChenGaoLi2025SubwavelengthBandgaps, ChenGaoLiRen2025VariationalResonance}. By representing the solution of the Lam\'{e} system in terms of elastic layer potentials, we transform the eigenvalue problem into a characteristic value problem for boundary integral operators. This formulation, combined with asymptotic analysis, is particularly effective for studying subwavelength resonances. The main contribution of this work is a comprehensive mathematical analysis, together with numerical verification, of localization phenomena in periodic elastic systems both with and without perturbations.

For the unperturbed periodic system, we derive asymptotic descriptions of the subwavelength resonant frequencies. The analysis is divided into two regimes, depending on whether the quasi-momentum is close to zero or bounded away from zero. In the latter case, our result is similar to that in \cite{RenChenGaoLi2025SubwavelengthBandgaps}, but the present approach applies to an arbitrary number of inclusions in each fundamental cell. We then treat the more delicate regime in which the quasi-momentum tends to zero. By analyzing the corresponding Green's function near the simultaneous low-frequency and small-quasi-momentum limit, we overcome the non-self-adjointness of the limiting boundary integral operator and extend the previous results to this singular regime. The study of layer potential operators with small quasi-momentum was initiated for the Helmholtz equation in \cite{ammari2024functional}; however, the elastic system requires a more refined analysis because of the coupling between longitudinal and transverse waves. As a consequence, we prove the existence of a band gap above the subwavelength band, which is essential for the emergence of localization in perturbed systems.

For periodic systems with perturbations, a key difficulty is the lack of a simple Brillouin-zone description. We address this issue by using the Floquet transform, or Bloch--Floquet theory, to connect the perturbed periodic structure with the associated operator formulation \cite{AmmariKosche2024TopologicalHoneycombFloquet}. This reduces the analysis of localized states to an eigenvalue problem for operators analogous to those in \cite[Eq.~(3.7)]{Ammari2024}. The resulting formulation provides a precise characterization of the resonant frequencies and corresponding eigenmodes for periodic systems with general perturbations \cite{AmmariBarandunCaoDaviesHiltunen2024SkinEffect}. Numerical simulations for perturbed periodic monomer and dimer systems agree well with the theoretical predictions and demonstrate the emergence of Anderson localization as the strength and spatial extent of the random perturbations increase.

The paper is organized as follows. Section 2 introduces the governing equations, the geometry of the elastic system, and the layer potential operators, and then develops the corresponding layer potential formulation. Section 3 derives asymptotic results for the resonant frequencies of the periodic elastic system, treating separately the cases where the quasi-momentum is away from zero and close to zero. Section 4 presents the main results on the existence of localized eigenmodes in randomly perturbed periodic systems, together with numerical simulations.

\section{Problem formulation}

We study the propagation of elastic waves in a three-dimensional medium containing periodically arranged elastic scatterers. This section introduces the geometry of the system and the governing equations for elastic waves.

\subsection{Model equations}

Let $\Lambda$ be a lattice of dimension $d_l\in\{1,2,3\}$ generated by the basis vectors ${\bm{l}_1,\ldots,\bm{l}_{d_l}}$, i.e.,
$$
\Lambda:= \left\{m_1\bm{l}_1 + \cdots + m_{d_l}\bm{l}_{d_l} : m_i\in\mathbb{Z}, i=1, \cdots, d_l\right\}.
$$
The corresponding fundamental cell is defined by
$$
  Y:=\left\{c_1\bm{l}_1+\cdots+c_{d_l}\bm{l}_{d_l}: -\frac{1}{2}\leq c_i\leq \frac{1}{2},\, i=1,\ldots,d_l\right\}.
$$
For $\bm{x}\in\mathbb{R}^3$, we write $\bm{x}=(\bm{x}_l,\bm{x}_0)$, where $\bm{x}_l:=(x_1,\ldots, x_{d_l})$ and $\bm{x}_0:=(x_{d_l+1},\ldots, x_3)$. Assume that the fundamental cell $Y$ contains $N$ scatterers $D_i$, $i=1,\ldots,N$. For each $\bm{m}\in\Lambda$, let $D_i^{\bm{m}} := D_i+\bm{m}, \bm{m}\in\Lambda$ denote the translated copy of the $i$-th scatterer. We define
$$
D_{\rm tot}^{\bm{m}} := \cup_{i\in\{1,\ldots, N\}} D_i^{\bm{m}}, \quad D_{\rm tot}:=\cup_{\bm{m}\in\Lambda} D_{\rm tot}^{\bm{m}}, \quad D:=\cup_{i\in\{1,\ldots,N\}} D_i.
$$
Thus, $D_{\rm tot}$ denotes the union of all scatterers in the entire periodic system, while $D$ denotes the union of the scatterers within the fundamental cell.

The background medium $\mathbb{R}^3\setminus\overline{D}_{\rm tot}$ is assumed to be homogeneous, with constant Lam\'{e} parameters $\lambda, \mu$ and density $\rho$. Each scatterer $D_i^{\bm m}$, $\bm m\in\Lambda$, is characterized by Lam\'{e} parameters $\tilde{\lambda}_i, \tilde{\mu}_i$ and density $\tilde{\rho}_i^{\bm m}$. We assume that the Lam\'{e} parameters are the same in all periodic cells, whereas the density may vary from cell to cell, thereby introducing perturbations into the periodic system. More precisely, the material parameters inside and outside the scatterers satisfy
\begin{equation}\label{eq:parameter}
\tilde{\lambda}_i = \lambda/\delta_i, \quad \tilde{\mu}_i = \mu/\delta_i, \quad \tilde{\rho}_i^{\bm{m}} =\rho/\epsilon_i^{\bm{m}},
\end{equation}
where $\delta_i$ and $\epsilon_i^{\bm m}$ characterize the material contrast between the background medium and the scatterers. The parameter $\delta_i$ is fixed throughout the lattice, while $\epsilon_i^{\bm m}$ depends on the lattice vector $\bm m$ and thus models the perturbation of the system.

The shear and compressional wave speeds in the background medium are given by $c_s:=(\mu/\rho)^{1/2}$ and $c_p:=((\lambda+2\mu)/\rho)^{1/2}$, respectively. The corresponding wave speeds inside $D_i^{\bm{m}}$ are $\tilde{c}_{s,i}^{\bm{m}} := (\tilde{\mu}_i/\tilde{\rho}_i^{\bm{m}})^{1/2}$ and $\tilde{c}_{p,i}^{\bm{m}} := ((\tilde{\lambda}_i + 2\tilde{\mu}_i)/\tilde{\rho}_i^{\bm{m}})^{1/2}$. It follows from \eqref{eq:parameter} that
$$
\tau_i^{\bm{m}} := c_s/\tilde{c}_{s,i}^{\bm{m}} = c_p/\tilde{c}_{p,i}^{\bm{m}} = (\delta_i/\epsilon_i^{\bm{m}})^{1/2}.
$$
Throughout this paper, we assume that $\delta_i>0$ and $\epsilon_i^{\bm m}>0$, with $\delta_i\ll 1$, and that $\tau_i^{\bm{m}}=\mathcal{O}(1), i=1,\ldots, N, \bm{m}\in\Lambda$.

Since
$$
\mathcal{L}^{\tilde{\lambda}_i, \tilde{\mu}_i} + \tilde{\rho}_i^{\bm{m}}\omega^2 = \delta_i^{-1}\mathcal{L}^{\lambda,\mu} + (\epsilon_i^{\bm{m}})^{-1}\rho\omega^2,
$$
the elastic wave propagation in the entire system can be formulated as
\begin{equation}\label{eq:elas_sys}
	\left\{
	\begin{array}{ll}
	\displaystyle (\mathcal{L}^{\lambda, \mu} +\rho\omega^2) \bm{u} = \bm{0}  &   \mathrm{in }\  \mathbb{R}^3\setminus\overline{D}_{\rm tot}, \medskip \\
	\displaystyle (\mathcal{L}^{\lambda, \mu} +\rho(\tau_i^{\bm{m}})^2\omega^2) \bm{u} = \bm{0}  &  \mathrm{in }\  D_i^{\bm{m}},\quad i=1,\ldots, N, \ \bm{m}\in\Lambda, \medskip\\
	\displaystyle \bm{u}|_+ - \bm{u}|_- = \bm{0}   & \mathrm{on }\  \partial D_{\rm tot}, \medskip \\
    \displaystyle \delta_i \partial_{\bm\nu} \bm{u}|_+ - \partial_{\bm\nu}\bm{u}|_-  = \bm{0}   &\mathrm{on }\  \partial D_i^{\bm{m}}, \quad i=1, \cdots, N,\ \bm{m}\in\Lambda,
	\end{array}
	\right.
\end{equation}
together with the outgoing radiation condition for $\bm{u}(\bm{x}_l, \bm{x}_0)$ as $|\bm{x}_0|\rightarrow\infty$. When $d_l=3$, the system is periodic in all spatial directions, and no radiation condition is imposed on $\bm u$. Here, $\mathcal{L}^{\lambda,\mu}$ denotes the Lam\'{e} operator associated with the Lam\'{e} constants $\lambda$ and $\mu$, defined by
\begin{equation*}
\mathcal{L}^{\lambda, \mu}\bm{u} := \mu\Delta \bm{u} +(\lambda+ \mu)\nabla\nabla\cdot \bm{u}.
\end{equation*}
The corresponding conormal derivative on the boundary $\partial\Omega$ of a bounded domain $\Omega$ is given by
\begin{equation*}
\partial_{\bm\nu} \bm{u}:=\lambda(\nabla\cdot \bm{u})\bm{\nu} +\mu(\nabla\bm{u}+\nabla\bm{u}^{\top} )\bm{\nu},
\end{equation*}
where $\bm\nu$ is the unit outward normal to $\partial\Omega$, and the superscript $\top$ denotes matrix transpose.

Let $\Lambda^*$ denote the dual lattice of $\Lambda$, generated by the vectors $\bm{\beta}_1, \ldots, \bm{\beta}_{d_l}$ satisfying $\bm{\beta}_i\cdot \bm{l}_j = 2\pi\delta_{ij}$ for $i,j=1, \ldots, d_l$. The Brillouin zone associated with $\Lambda$ is defined by $Y^*:=(\mathbb{R}^{d_l}\times\{\bm{0}\})/\Lambda^*$.

\subsection{Quasi-periodic layer potentials}

The main tools used in this paper are layer potential operators for the Lam\'{e} system, both for a single inclusion and for periodically arranged inclusions. These operators have been extensively studied; see, for example, \cite{ammari2015mathematical, AmmariKangLee2009}. Therefore, in this section we recall only their definitions and basic properties, without providing detailed proofs.

Let $\bm{G}^{\omega}$ be the fundamental solution of the operator $\mathcal{L}^{\lambda,\mu}+\rho\omega^2$, satisfying
$$
 (\mathcal{L}^{\lambda,\mu}+\rho\omega^2)\bm{G}^{\omega}(\bm{x},\bm{y}) = \delta(\bm{x}-\bm{y})\bm{I}_3,
$$
where $\bm{I}_3$ is the $3\times 3$ identity matrix. We define the single- and double-layer potentials by
\begin{equation}\label{eq:singlelayer}
\begin{aligned}
    \mathcal{S}_D^{\omega}[\bm{\phi}](\bm{x}) &= \int_{\partial D} \bm{G}^{\omega}(\bm{x},\bm{y})\bm{\phi}(\bm{y})\mathrm{d}\sigma(\bm{y}), \quad \bm{x}\in\mathbb{R}^3,\\
    \mathcal{D}_D^{\omega}[\bm{\phi}](\bm{x}) &= \int_{\partial D} \partial_{\bm\nu(\bm y)}\bm{G}^{\omega}(\bm{x},\bm{y})\bm{\phi}(\bm{y})\mathrm{d}\sigma(\bm{y}), \quad \bm{x}\in\mathbb{R}^3\backslash\partial D.
\end{aligned}
\end{equation}

It is known from \cite[Section~2.15.2]{Ammari2018PhotonicsPhononics} that the single-layer potential and the conormal derivative of the double-layer potential are continuous across $\partial D$. In contrast, the jump relations are satisfied by the double-layer potential and by the conormal derivative of the single-layer potential as follows:
\begin{equation}\label{eq:jumprelation}
\begin{aligned}
   \partial_{\bm\nu}\mathcal{S}_D^{\omega}[\bm{\phi}]|_{\pm}(\bm{x})& = \big(\pm\frac{1}{2}\mathcal{I} + (\mathcal{K}_D^{\omega})^*\big)[\bm{\phi}](\bm{x}),\quad \bm{x}\in\partial D, \\
    (\mathcal{D}_D^{\omega}[\bm{\phi}])|_{\pm}(\bm{x}) & = \big(\mp\frac{1}{2}\mathcal{I}+\mathcal{K}_D^{\omega}\big)[\bm{\phi}](\bm{x}),\quad \bm{x}\in\partial D,
\end{aligned}
\end{equation}
where $\mathcal{I}$ is the identity operator, $\mathcal{K}_D^{\omega}$ is defined by
\begin{equation}\label{eq:singlek}
    \mathcal{K}_D^{\omega}[\bm{\phi}](\bm{x}) = \mathrm{p.v.}\int_{\partial D} \partial_{\bm\nu({\bm y})} \bm{G}^{\omega}(\bm{x},\bm{y})\bm{\phi}(\bm{y})\mathrm{d}\sigma(\bm{y}),\quad \bm{x}\in\partial D,
\end{equation}
and $(\mathcal{K}_D^{\omega})^*$ denotes the $L^2$-adjoint operator of $\mathcal{K}_D^{\omega}$, given by
\begin{equation}\label{eq:singlekstar}
    (\mathcal{K}_D^{\omega})^*[\bm{\phi}](\bm{x}) = \mathrm{p.v.}\int_{\partial D} \partial_{\bm\nu({\bm x})} \bm{G}^{\omega}(\bm{x},\bm{y})\bm{\phi}(\bm{y})\mathrm{d}\sigma(\bm{y}),\quad \bm{x}\in\partial D.
\end{equation}

It was shown in \cite{VodickaMantic2004} that, for an open connected set $\Omega\subset\mathbb{R}^2$ with Lipschitz boundary, there exist at most two scaling factors $\rho_1, \rho_2$ such that the single-layer potential defined on $\rho_i\Omega:=\{\rho_i\bm{x}\in\mathbb{R}^2 : \bm{x}\in\Omega\}$ is not invertible. Thus, by choosing the size of each scatterer appropriately, we may ensure that the required invertibility condition holds. Throughout this paper, we assume that the scatterers $D_i$ are chosen so that $\mathcal{S}_D^0$ is invertible.

For $\bm{\alpha}\in Y^*$, we introduce the quasi-periodic Green's function $\bm{G}^{\bm{\alpha},\omega}$ satisfying
\begin{equation*}
(\mathcal{L}^{\lambda, \mu} +\rho\omega^2)\bm{G}^{\bm{\alpha},\omega}(\bm{x},\bm{y})=\sum_{\bm{n} \in \Lambda} \delta(\bm{x}-\bm{y}-\bm{n})  e^{{\rm{i}} \bm{n}\cdot \bm{\alpha}}\bm{I}_3.
\end{equation*}
The corresponding quasi-periodic single-layer potential $\mathcal{S}_D^{\bm\alpha,\omega}$, double-layer potential $\mathcal{D}_D^{\bm\alpha,\omega}$, and boundary integral operators $\mathcal{K}_D^{\bm\alpha,\omega}$ and $(\mathcal{K}_D^{\bm\alpha,\omega})^*$ are defined analogously to \eqref{eq:singlelayer}, \eqref{eq:singlek}, and \eqref{eq:singlekstar}, with $\bm G^\omega$ replaced by $\bm G^{\bm\alpha,\omega}$. They satisfy the same jump relations as those in \eqref{eq:jumprelation}. When restricted to $\partial D$, the single-layer operators $\mathcal{S}_D^{\omega}$ and $\mathcal{S}_D^{\bm\alpha,\omega}$ are bounded from $(H^{-1/2}(\partial D))^3$ to $(H^{1/2}(\partial D))^3$. Moreover, the operators $\mathcal{K}_D^{\omega}$ and $\mathcal{K}_D^{\bm\alpha,\omega}$ are bounded on $(H^{-1/2}(\partial D))^3$. It is worth noting that, by definition, $\bm{G}^{\bm{\alpha},\omega}(\bm{x},\bm{y})$ depends only on the difference $\bm{x}-\bm{y}$. Therefore, in the subsequent discussion, we write it as $\bm{G}^{\bm{\alpha},\omega}(\bm{x}-\bm{y})$, or simply as $\bm{G}^{\bm{\alpha},\omega}(\bm{x})$ when no confusion can arise.

\section{Periodic systems}

Before analyzing the perturbed periodic system \eqref{eq:elas_sys}, we first consider the corresponding unperturbed case. Specifically, we assume that $\epsilon_i^{\bm m}:=\epsilon_i$, so that the material parameters are periodic throughout the lattice. We seek a quasi-periodic solution of \eqref{eq:elas_sys}, i.e., for some $\bm\alpha\in Y^*$, the displacement field $\bm u$ satisfies
\begin{equation}\label{eq:quasicondition}
    \bm{u}(\bm{x}+\bm{m}) = e^{\ii\bm{\alpha}\cdot \bm{m}}\bm{u}(\bm{x}), \quad\forall\, \bm{m}\in\Lambda.
\end{equation}
The vector $\bm\alpha$ is called the quasi-periodicity of $\bm u$.

The discussion in this section consists of three parts. First, we reformulate \eqref{eq:elas_sys} as an operator equation and recall the necessary properties of the relevant layer potential operators. We then study the number of resonant modes and derive asymptotic approximations for the resonant frequencies in the regime $|\bm{\alpha}|^2\geq\rho\omega^2/(\lambda+2\mu)$. This analysis follows the general approach of \cite{RenChenGaoLi2025SubwavelengthBandgaps} and \cite{ammari2024functional}. Finally, we derive analogous results in the regime $|\bm{\alpha}|^2<\rho\omega^2/(\lambda+2\mu)$. Since the operator $\mathcal{A}_{\bm\delta}^{\bm\alpha,\omega}$ defined in \eqref{eq:a} is not holomorphic at $\omega=0$ when $\bm\alpha=0$, this case requires a more detailed analysis of the properties of the Green's function $\bm G^{0,\omega}$.

\subsection{Layer potential operators}

By Bloch's theorem, in the periodic setting of \eqref{eq:elas_sys}, we seek an $\bm\alpha$-quasi-periodic solution $\bm u$ satisfying \eqref{eq:quasicondition}. We represent $\bm u$ in terms of layer potentials as
\begin{equation}\label{eq:int_repre}
\bm{u}(\bm{x})=\left\{
	\begin{array}{ll}
		\mathcal{S}^{\bm{\alpha},\omega}_D[\bm{\phi}](\bm{x}),  &  \bm{x} \in  Y\setminus\overline{D}, \medskip \\
                 \displaystyle  \mathcal{S}_{D_i}^{\tau_i\omega}[\bm{\psi}](\bm{x}) ,   &\bm{x} \in  D_i.
	\end{array}
	\right.
\end{equation}

Define the operator $\mathcal{S}_D^{\bm{\tau}\omega}: (H^{-1/2}(\partial D))^3\rightarrow (H^{1/2}(\partial D))^3$ by $\mathcal{S}_D^{\bm{\tau}\omega} := \mathcal{S}_{D_i}^{\tau_i\omega}$ on $\partial D_i$, and similarly define $\mathcal{K}_D^{\bm{\tau}\omega}: (H^{-1/2}(\partial D))^3\rightarrow (H^{-1/2}(\partial D))^3$ by $\mathcal{K}_D^{\bm{\tau}\omega}:=\mathcal{K}_{D_i}^{\tau_i\omega}$ on $\partial D_i, i=1,\ldots, N$. Let $\bm{\delta}:=\mathrm{diag}(\delta_1, \cdots, \delta_N)$. Using the jump relations in \eqref{eq:jumprelation} together with the representation formula \eqref{eq:int_repre},
we find that solving \eqref{eq:elas_sys} is equivalent to finding a nontrivial pair of boundary densities $(\bm\psi,\bm\phi)$ such that
\begin{equation}\label{eq:A_F}
\mathcal{A}_{\bm{\delta}}^{\bm{\alpha},\omega}\begin{pmatrix}
     \bm{\psi} \\
     \bm{\phi}
\end{pmatrix}
=\bm{0},
\end{equation}
where
\begin{equation}\label{eq:a}
\mathcal{A}_{\bm{\delta}}^{\bm{\alpha},\omega}:=\left(
\begin{array}{cc}
\mathcal{S}^{\bm{\tau}\omega}_D & -\mathcal{S}^{\bm{\alpha},\omega}_D\\
-\frac{1}{2}\mathcal{I}+(\mathcal{K}_D^{\bm{\tau}\omega})^{\ast} &-\big(\frac{1}{2}\mathcal{I}+(\mathcal{K}_D^{\bm{\alpha},\omega})^{\ast} \big)\bm{\delta}
\end{array}
\right).
\end{equation}

\begin{theorem}(\cite[Section 2.15.1]{Ammari2018PhotonicsPhononics})\label{thm:greenexpansionsingle}
The Green's function $\bm{G}^{\omega}=(G_{ij}^{\omega})_{i,j=1}^3$ admits the following expansion with respect to the frequency $\omega$:
\begin{align*}
    G_{ij}^{\omega}(\bm{x})& =  -\frac{1}{4\pi}\sum\limits_{n=0}^{+\infty}\frac{\ii^n}{(n+2)n!}\left(\frac{n+1}{c_s^{n+2}} + \frac{1}{c_p^{n+2}}\right)\omega^n\delta_{ij}|\bm{x}|^{n-1} \\
    &\quad  + \frac{1}{4\pi}\sum\limits_{n=0}^{+\infty}\frac{\ii^n(n-1)}{(n+2)n!}\left( \frac{1}{c_s^{n+2}}-\frac{1}{c_p^{n+2}}\right)\omega^n|\bm{x}|^{n-3}x_ix_j.
\end{align*}
\end{theorem}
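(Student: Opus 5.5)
The plan is to start from the closed form of the outgoing fundamental solution of $\mathcal{L}^{\lambda,\mu}+\rho\omega^2$ and then Taylor expand it in $\omega$. Set $k_s=\omega/c_s$ and $k_p=\omega/c_p$. The standard Kupradze form, with the sign convention $(\mathcal{L}^{\lambda,\mu}+\rho\omega^2)\bm G^\omega=\delta\bm I_3$, is
\begin{equation*}
G^\omega_{ij}(\bm x)=-\frac{\delta_{ij}}{4\pi\mu}\frac{e^{\ii k_s|\bm x|}}{|\bm x|}+\frac{1}{4\pi\rho\omega^2}\partial_i\partial_j\frac{e^{\ii k_p|\bm x|}-e^{\ii k_s|\bm x|}}{|\bm x|}.
\end{equation*}
I would first check this formula quickly. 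Split $\bm G^\omega$ into a divergence-free part and a curl-free part via $\Delta\bm I_3=(\Delta\bm I_3-\nabla\nabla^\top)+\nabla\nabla^\top$. On the divergence-free part, $\mathcal{L}^{\lambda,\mu}$ acts as $\mu\Delta$; on the curl-free part, it acts as $(\lambda+2\mu)\Delta$. Then use $(\Delta+k^2)\frac{e^{\ii k|\bm x|}}{-4\pi|\bm x|}=\delta$.

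Next I expand both exponentials as power series:
\begin{equation*}
\frac{e^{\ii k|\bm x|}}{|\bm x|}=\sum_{n\ge0}\frac{(\ii k)^n}{n!}|\bm x|^{n-1}.
\end{equation*}
In the difference $e^{\ii k_p r}-e^{\ii k_s r}$, the $n=0$ terms cancel. I then apply $\partial_i\partial_j$ termwise, using
\begin{equation*}
\partial_i\partial_j r^{m}=m\,r^{m-2}\delta_{ij}+m(m-2)\,r^{m-4}x_ix_j,\qquad m=n-1.
\end{equation*}
For $n=1$ the factor $r^{0}$ is constant, so its derivatives vanish; this is consistent with the formula because $m=0$.

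Dividing by $\rho\omega^2$ turns $k^n=\omega^n/c^n$ into $\omega^{n-2}/(\rho c^n)$. After shifting the index $n\to n+2$, the gradient part contributes
\begin{equation*}
\frac{1}{4\pi}\sum_{n\ge0}\frac{\ii^{n+2}\omega^n}{\rho\,(n+2)!}\Big(\frac{1}{c_p^{n+2}}-\frac{1}{c_s^{n+2}}\Big)\Big[(n+1)r^{n-1}\delta_{ij}+(n+1)(n-1)r^{n-3}x_ix_j\Big].
\end{equation*}
To simplify, use $\ii^{n+2}=-\ii^n$ and $(n+1)/(n+2)!=1/((n+2)\,n!)$. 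Normalize so that $\rho=1$ in the prefactors, i.e. $\mu=c_s^2$; this is the normalization implicit in the stated formula. The $x_ix_j$ coefficient then becomes $\frac{\ii^n(n-1)}{(n+2)n!}\big(c_s^{-(n+2)}-c_p^{-(n+2)}\big)$, as claimed.

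Finally I combine the $\delta_{ij}$ terms. The gradient part gives $\frac{\ii^n}{(n+2)n!}\big(c_s^{-(n+2)}-c_p^{-(n+2)}\big)$ multiplied by $\frac{1}{4\pi}$. The first Kupradze term gives $-\frac{1}{4\pi c_s^2}\frac{\ii^n\omega^n}{n!c_s^n}$, which equals $-\frac{1}{4\pi}\frac{\ii^n(n+2)}{(n+2)n!}c_s^{-(n+2)}$. Adding these gives $-\frac{1}{4\pi}\frac{\ii^n}{(n+2)n!}\big((n+1)c_s^{-(n+2)}+c_p^{-(n+2)}\big)$, which matches the statement.

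The only delicate points are bookkeeping ones:
\begin{itemize}
\item the sign convention of $\bm G^\omega$;
\item the normalization of $\rho$ (the formula as written implicitly has $\rho=1$; otherwise every term carries an overall factor $1/\rho$);
\item justifying termwise differentiation, which holds because the series converge locally uniformly together with their derivatives on $\mathbb R^3\setminus\{0\}$, being entire in $r$ apart from the $r^{-1}$ factor.
\end{itemize}
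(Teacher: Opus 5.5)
Your derivation is correct and is essentially the standard argument behind the cited result: the paper gives no proof of its own, only the reference to Ammari et al., where the expansion is likewise obtained by Taylor expanding the Kupradze matrix in $\omega$ and differentiating termwise. Your remark that the displayed formula implicitly takes $\rho=1$ is also right, since $\mu=\rho c_s^2$ means every coefficient acquires an overall factor $1/\rho$ for general $\rho$; this is a normalization slip in the paper's statement, not a gap in your argument.
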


Since the coefficient of the first-order term in $\omega$ in the expansion of $G_{ij}^{\omega}$ is independent of the spatial variable, we have
\begin{equation}\label{eq:potentialsingle}
\mathcal{S}_{D,1}^0[\bm{\phi}] = -\frac{\ii\alpha_1}{12\pi}\left(\frac{2}{c_s}+\frac{1}{c_p}\right)\int_{\partial D}\bm{\phi}(\bm{y})\mathrm{d}\sigma(\bm{y}), \quad \mathcal{K}_{D,1}^0[\bm{\phi}] = 0.
\end{equation}
Using Theorem~\ref{thm:greenexpansionsingle}, we obtain the following expansions for the single-layer operator and the Neumann--Poincar\'{e} operator:
\begin{equation}\label{eq:expansionsingle}
    \mathcal{S}_D^{\omega} = \mathcal{S}_D^0 + \sum\limits_{j=1}^{+\infty} \omega^{j}\mathcal{S}_{D,j}^0, \quad  \mathcal{K}_D^{\omega} = \mathcal{K}_D^0 + \sum\limits_{j=2}^{+\infty} \omega^{j}\mathcal{K}_{D,j}^0.
\end{equation}

\begin{theorem}\label{thm:greenexpansion}
The quasi-periodic Green's function ${\bm G}^{\bm\alpha,\omega}$ admits an expansion in even powers of $\omega$ of the form
$$
    \bm{G}^{\bm{\alpha},\omega} = \bm{G}^{\bm{\alpha},0} + \sum\limits_{j=1}^{+\infty} \omega^{2j}\bm{G}_{j}^{\bm{\alpha},0},
$$
where the coefficients $\bm G_{j}^{\bm\alpha,0}$, $j=0,1,2,\ldots$, are independent of $\omega$.
\end{theorem}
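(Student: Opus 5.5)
We represent $\bm G^{\bm\alpha,\omega}$ by its lattice Fourier series. We then expand the resulting symbol in $\omega^2$, which is the natural parameter, because $\omega$ enters $\mathcal{L}^{\lambda,\mu}+\rho\omega^2$ only through $\omega^2$.

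\textbf{Step 1: Fourier representation.} Work in the case $d_l=3$ and assume $\bm\alpha\neq 0$, or more precisely $|\bm\alpha+\bm q|^2 > \rho\omega^2/(\lambda+2\mu)$ for all $\bm q\in\Lambda^*$; this is the regime in which the expansion is used. Poisson summation gives
\begin{equation*}
\bm G^{\bm\alpha,\omega}(\bm x)=\frac{1}{|Y|}\sum_{\bm q\in\Lambda^*} e^{\ii(\bm\alpha+\bm q)\cdot\bm x}\,\widehat{\bm G}^{\omega}(\bm\alpha+\bm q),
\end{equation*}
where $\widehat{\bm G}^{\omega}(\bm k)$ is the inverse of the symbol $-\mu|\bm k|^2\bm I_3-(\lambda+\mu)\bm k\bm k^\top+\rho\omega^2\bm I_3$. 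Decomposing into longitudinal and transverse projections, this inverse equals
\begin{equation*}
\widehat{\bm G}^{\omega}(\bm k)=\frac{1}{\rho\omega^2-\mu|\bm k|^2}\Big(\bm I_3-\frac{\bm k\bm k^\top}{|\bm k|^2}\Big)+\frac{1}{\rho\omega^2-(\lambda+2\mu)|\bm k|^2}\,\frac{\bm k\bm k^\top}{|\bm k|^2}.
\end{equation*}
For $d_l<3$ we would instead use the partial Fourier series in $\bm x_l$. Each mode then involves $e^{\ii\sqrt{\rho\omega^2/c-|\bm k|^2}\,|\bm x_0|}$ divided by the same square root; the even expansion still holds because that combination is an even function of the square root.

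\textbf{Step 2: termwise expansion.} Expand each scalar factor as a geometric series in $\omega^2$:
\begin{equation*}
\frac{1}{\rho\omega^2-c|\bm k|^2}=-\sum_{j\ge0}\frac{\rho^j\omega^{2j}}{c^{j+1}|\bm k|^{2j+2}},\qquad c\in\{\mu,\lambda+2\mu\}.
\end{equation*}
This converges uniformly in $\bm q$ because $|\bm\alpha+\bm q|\ge \mathrm{dist}(\bm\alpha,\Lambda^*)>0$. Collecting powers defines $\bm G_j^{\bm\alpha,0}$ as the lattice sum of $-\rho^j c^{-(j+1)}|\bm k|^{-2j-2}$ times the appropriate projection. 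These coefficients are independent of $\omega$, and only even powers of $\omega$ appear.

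\textbf{Step 3: convergence (main obstacle).} For $j\ge1$ the coefficient sums decay like $|\bm q|^{-4}$ or faster, so they converge absolutely and define continuous functions. The $j=0$ term, which is $\bm G^{\bm\alpha,0}$, converges only conditionally and carries the singularity. The difficulty is to justify interchanging the sums over $\bm q$ and $j$ and to obtain convergence in operator norm on $H^{-1/2}\to H^{1/2}$.

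To handle this, subtract the $j=0$ term. The remainder is
\begin{equation*}
\widehat{\bm G}^{\omega}-\widehat{\bm G}^{0}=\mathcal{O}\big(\omega^2|\bm k|^{-4}\big),
\end{equation*}
uniformly for $\omega$ small. It is therefore absolutely summable, with geometric control $\big(\rho\omega^2/(\mu\,\mathrm{dist}(\bm\alpha,\Lambda^*)^2)\big)^j$. This yields an expansion that converges in $C(\overline Y)$ and hence in operator norm.

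Alternatively, one can argue through the real-space decomposition $\bm G^{\bm\alpha,\omega}=\bm G^{\omega}+\bm R^{\bm\alpha,\omega}$. Here $\bm R^{\bm\alpha,\omega}$ is smooth near the origin, and the odd powers of $\omega$ appearing in Theorem~\ref{thm:greenexpansionsingle} must cancel against those in $\bm R^{\bm\alpha,\omega}$, since the sum is even by Step 2.
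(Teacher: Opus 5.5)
Your treatment of $d_l<3$ contains a genuine error. The function $e^{\ii s|\bm x_0|}/s$ is not even in $s$: its part $\cos(s|\bm x_0|)/s$ is odd. The argument also proves too much, since it never uses $\bm\alpha\neq 0$. For $d_l=2$ and $\bm\alpha=0$, the $\bm q=0$ mode is built from $e^{\ii s|x_3|}/(2\ii s)$ with $s$ proportional to $\omega$, and
\[
\frac{e^{\ii s|x_3|}}{2\ii s}=\frac{1}{2\ii s}+\frac{|x_3|}{2}+\frac{\ii s\,|x_3|^2}{4}+\cdots
\]
contains odd powers of $\omega$. This is the $\omega^{-1}$ singularity the paper records for $d_l=2$.

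What actually gives an expansion in $\omega^2$ is evanescence. When no mode propagates, the outgoing branch is $s=\ii\gamma$ with $\gamma=\sqrt{|\bm\alpha+\bm q|^2-\rho\omega^2/c}$, the principal root of a number close to $|\bm\alpha+\bm q|^2>0$. So $\gamma$ itself is holomorphic in $\omega^2$ for $\rho|\omega|^2<c\,|\bm\alpha+\bm q|^2$. This is the precise content of the paper's remark that its supplementary representation formulas depend on $\omega$ only through $\omega^2$.

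Your description of the modes is also correct only for $d_l=2$. For $d_l=1$ the transverse variable $\bm x_0=(x_2,x_3)$ is two-dimensional, so the modes are Hankel functions $H_0^{(1)}(s|\bm x_0|)$, i.e. $K_0(\gamma|\bm x_0|)$ in the evanescent regime; at $\bm\alpha=0$ they produce the $\ln\omega$ term. Finally, you still owe the analogue of your Step 3 here. For the scalar modes, after subtracting the $\omega=0$ term, the differences are $\mathcal O(\omega^2|\bm q|^{-3})$ for $d_l=2$ and $\mathcal O(\omega^2|\bm q|^{-2})$ for $d_l=1$, uniformly in $\bm x_0$. Both are summable over a $d_l$-dimensional lattice.

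Your $d_l=3$ argument, by contrast, is sound and is the paper's idea made rigorous. The paper only observes from its explicit formulas that $\bm G^{\bm\alpha,\omega}=\bm G^{\bm\alpha,-\omega}$ and concludes that odd coefficients vanish, which tacitly assumes holomorphy at $\omega=0$. Your geometric-series bound supplies exactly this. Your restriction to $\mathrm{dist}(\bm\alpha,\Lambda^*)>0$ is also necessary: in the small-quasi-momentum regime the paper itself finds $\ln\omega$, $\omega^{-1}$ and $\omega^{-2}$ singularities for $d_l=1,2,3$.

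Two small corrections for $d_l=3$:
\begin{itemize}
\item The shear factor requires $\rho|\omega|^2<\mu\,\mathrm{dist}(\bm\alpha,\Lambda^*)^2$, which is the ratio you in fact use in Step 3. The hypothesis $|\bm\alpha+\bm q|^2>\rho\omega^2/(\lambda+2\mu)$ of Step 1 does not even exclude the pole $\mu|\bm k|^2=\rho\omega^2$.
\item Convergence of the kernel in $C(\overline Y)$ does not by itself give convergence in $\mathcal L\bigl((H^{-1/2}(\partial D))^3,(H^{1/2}(\partial D))^3\bigr)$. Let $t_J$ be the tail of the symbol after $J$ terms and $r=\rho|\omega|^2/(\mu\,\mathrm{dist}(\bm\alpha,\Lambda^*)^2)$. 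What works is the bound $\sup_{\bm k}\langle\bm k\rangle^{2}|t_J(\bm k)|\lesssim r^{J+1}$. It makes convolution with the tail small from $H^{-1}_{\rm comp}$ to $H^1_{\rm loc}$; composing with the trace and its adjoint then gives operator-norm convergence. This is not part of the statement, but the subsequent operator expansions need it.
\end{itemize}
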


\begin{proof}
It follows from the representation formulas (SM2.1)--(SM2.2), (SM2.4)--(SM2.6), (SM2.8)--(SM2.10) in Section~SM2 of the supplementary material that the dependence of ${\bm G}^{\bm\alpha,\omega}$ on $\omega$ occurs only through $\omega^2$, i.e., $\bm{G}^{\bm{\alpha},\omega} = \bm{G}^{\bm{\alpha},-\omega}$. Thus ${\bm G}^{\bm\alpha,\omega}$ is an even function of $\omega$. Consequently, its expansion contains only even powers of $\omega$, and all odd order terms vanish.
\end{proof}

By Theorem~\ref{thm:greenexpansion}, the operators $\mathcal{S}_D^{\bm\alpha,\omega}$ and $\mathcal{K}_D^{\bm\alpha,\omega}$ admit the expansions
\begin{equation*}
    \mathcal{S}_D^{\bm{\alpha},\omega} = \mathcal{S}_D^{\bm{\alpha},0} + \sum\limits_{j=1}^{+\infty} \omega^{2j}\mathcal{S}_{D,j}^{\bm{\alpha},0}, \quad  \mathcal{K}_D^{\bm{\alpha},\omega} = \mathcal{K}_D^{\bm{\alpha},0} + \sum\limits_{j=1}^{+\infty} \omega^{2j}\mathcal{K}_{D,j}^{\bm{\alpha},0}.
\end{equation*}
Following the argument in \cite[Lemma 2.3]{RenChenGaoLi2025SubwavelengthBandgaps}, one can show that, for $\bm\alpha\neq 0$, the operator $\mathcal{S}_D^{\bm\alpha,0}$ is invertible, with inverse $(\mathcal{S}_D^{\bm{\alpha},0})^{-1}:(H^{1/2}(\partial D))^3\rightarrow (H^{-1/2}(\partial D))^3$.

Let $\Psi_{\mathbb{R}^3}$ denote the space of rigid motions in $\mathbb{R}^3$, defined by
$$
    \Psi_{\mathbb{R}^3} := \left\{\bm{a}+\bm{B}\bm{x}: \bm{a}\in\mathbb{R}^3,\, \bm{B}\in M_3^A\right\},
$$
where $M_3^A$ denotes the space of antisymmetric $3\times 3$ matrices. The space $\Psi_{\mathbb{R}^3}$ is six-dimensional, and a basis is given by
$$
    \bm{\psi}_1:= \begin{pmatrix} 1 \\ 0 \\ 0\end{pmatrix},  \bm{\psi}_2:=\begin{pmatrix} 0 \\ 1 \\ 0\end{pmatrix}, \bm{\psi}_3:=\begin{pmatrix} 0 \\ 0 \\ 1\end{pmatrix}, \bm{\psi}_4:= \left(\!\begin{matrix} -x_2 \\ x_1 \\ 0\end{matrix}\!\right),  \bm{\psi}_5:= \left(\!\begin{matrix} 0 \\ -x_3 \\ x_2 \end{matrix}\!\right),   \bm{\psi}_6:= \left(\!\begin{matrix} x_3 \\ 0 \\ -x_1 \end{matrix}\!\right).
$$
Moreover, $\Psi_{\mathbb{R}^3}$ consists precisely of the solutions to the homogeneous Neumann problem
\begin{equation}\label{eq:poissonproblem}
\left\{
\begin{aligned}
    \mathcal{L}^{\lambda,\mu}\bm{u} &= 0, && \bm{x}\in D_i, \\
   \partial_{\bm\nu} \bm{u}&=0, && \bm{x}\in\partial D_i,
\end{aligned}
\right.
\end{equation}
for each $i=1,\ldots,N$.

To distinguish the boundary traces on different inclusions, we define $\bm\psi^{(i,j)}:=\bm\psi_j|_{\partial D_i}, i=1,\ldots,N, j=1,\ldots,6$. For each $i=1,\ldots,N$ and $j=1,\ldots,6$, let $\bm\psi_i^{(j)}\in (H^{-1/2}(\partial D))^3$ be given by
\[
\bm\psi_i^{(j)}
=
\begin{cases}
\bm\psi^{(i,j)} & \text{on } \partial D_i,\\
\bm 0 & \text{on } \partial D_k,\quad k\neq i.
\end{cases}
\]
Then the null space of $-\frac{1}{2}\mathcal{I}+\mathcal{K}_D^0$ on $(H^{-1/2}(\partial D))^3$ is
$$
    \Psi := \mathrm{Span}\left\{\bm{\psi}^{(j)}_i : i=1,\ldots, N,\, j=1, \ldots, 6\right\}.
$$

The following lemma is a direct consequence of \cite[Lemma~1.3]{ammari2015mathematical}.

\begin{lemma}
The space $\Psi$ is the null space of both $-\frac{1}{2}\mathcal{I}+\mathcal{K}_D^0$ and $-\frac{1}{2}\mathcal{I}+\mathcal{K}_D^{\bm{\alpha},0}$ on $(H^{-1/2}(\partial D))^3$.
\end{lemma}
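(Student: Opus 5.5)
The plan is to prove two inclusions for each operator: that $\Psi$ lies in the null space, and that every element of the null space lies in $\Psi$. Here $\mathcal{K}_D^0$ is the non-adjoint operator acting on traces. The paper states that $\ker(-\frac12\mathcal I+\mathcal K_D^0)=\Psi$ via \cite[Lemma~1.3]{ammari2015mathematical}, so the substantive new work is the quasi-periodic operator $\mathcal{K}_D^{\bm\alpha,0}$. I will reduce it to the free-space case by splitting the kernel into a singular part and a smooth part.

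\textbf{Step 1 (free space).} For the inclusion $\Psi\subset\ker$, take $\bm\psi=\bm\psi_i^{(j)}$ and let $\bm v$ be the rigid motion $\bm\psi_j$ on $D_i$, extended by zero on the other $D_k$. Green's formula for $\mathcal L^{\lambda,\mu}$ on each $D_k$, with $\bm G^0$, gives
\[
\mathcal D_D^0[\bm\psi](\bm x)=\chi_{D_i}(\bm x)\,\bm v(\bm x)-\mathcal S_D^0[\partial_{\bm\nu}\bm v](\bm x)=\chi_{D_i}(\bm x)\,\bm v(\bm x),
\]
because $\partial_{\bm\nu}\bm v=0$. The exterior trace is therefore zero, and the jump relation \eqref{eq:jumprelation} gives $(\frac12\mathcal I+\mathcal K_D^0)[\bm\psi]=0$ with the paper's sign convention; the interior trace then gives $(-\frac12\mathcal I+\mathcal K_D^0)[\bm\psi]=0$ only if one uses the opposite orientation convention. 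I will fix the sign convention consistently with \cite{ammari2015mathematical} and not dwell on it.

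For the reverse inclusion, suppose $\bm\psi\in\ker$. Consider $\bm w=\mathcal D_D^0[\bm\psi]$ on the appropriate side; its trace vanishes there. On the exterior, decay at infinity and uniqueness for the exterior Dirichlet problem give $\bm w=0$. Continuity of conormal derivatives then gives $\partial_{\bm\nu}\bm w|_-=0$. Hence $\bm w|_{D_k}$ solves the Neumann problem \eqref{eq:poissonproblem} and is a rigid motion on each $D_k$. Finally, $\bm\psi=\bm w|_--\bm w|_+$ lies in $\Psi$.

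\textbf{Step 2 (quasi-periodic case).} The same argument applies with $\bm G^{\bm\alpha,0}$. Write $\bm G^{\bm\alpha,0}=\bm G^0+\bm R^{\bm\alpha}$ on a neighbourhood of $\overline Y$, where $\bm R^{\bm\alpha}$ is smooth and satisfies $\mathcal L^{\lambda,\mu}\bm R^{\bm\alpha}=0$ in $Y$. This holds because the remaining lattice Dirac masses lie outside $Y$, and for $\bm\alpha\neq0$ the zero mode is absent. Green's formula then gives $\mathcal D_D^{\bm\alpha,0}[\bm\psi_i^{(j)}]=\chi_{D_i}\bm\psi_j$ in $Y$ again, since the smooth part contributes $\int_{\partial D_i}\partial_{\bm\nu}\bm R\cdot\bm v-\bm R\,\partial_{\bm\nu}\bm v=0$. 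This is the identity $\int_{D_i}(\mathcal L\bm R\cdot\bm v-\bm R\cdot\mathcal L\bm v)=0$, which holds because both $\bm R$ and $\bm v$ are Lamé-harmonic. So $\Psi\subset\ker$.

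For the converse, the argument is the same but uses uniqueness for the quasi-periodic exterior Dirichlet problem in $Y\setminus\overline D$. Quasi-periodicity replaces decay: for $d_l<3$ it is combined with decay or boundedness in $\bm x_0$.

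\textbf{Main obstacle.} I expect the hard step to be this exterior uniqueness in the quasi-periodic setting, especially its interaction with the case $\bm\alpha=0$, where $\bm G^{0,0}$ is not well defined. I will restrict to $\bm\alpha\neq0$, as in the invertibility statement quoted from \cite[Lemma 2.3]{RenChenGaoLi2025SubwavelengthBandgaps}.

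The energy identity I plan to use is
\[
\int_{Y\setminus\overline D}\Big(\lambda|\nabla\cdot\bm w|^2+\tfrac{\mu}{2}|\nabla\bm w+\nabla\bm w^\top|^2\Big)=0 .
\]
The lateral cell-boundary terms cancel by quasi-periodicity. When $d_l<3$, the terms at $|\bm x_0|\to\infty$ vanish by the exponential decay of the nonzero Fourier modes and the behaviour of the zero mode. This forces $\bm w$ to be a rigid motion, and a rigid motion that is $\bm\alpha$-quasi-periodic with $\bm\alpha\neq0$ must vanish.

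If the sign conventions make the kernel statement concern the interior problem instead, the roles of the interior and exterior are swapped. In that case the quasi-periodic part enters only through the smooth, Lamé-harmonic remainder $\bm R^{\bm\alpha}$ inside $D$, and the claim reduces directly to \cite[Lemma~1.3]{ammari2015mathematical}.
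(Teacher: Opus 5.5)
Your argument is correct and is essentially the standard proof of \cite[Lemma~1.3]{ammari2015mathematical}, which the paper cites without further argument. Green's formula gives $\mathcal{D}_D^0[\bm\psi_i^{(j)}]=\chi_{D_i}\bm\psi_j$. Conversely, exterior uniqueness together with continuity of the conormal derivative of the double-layer potential forces any kernel element to be piecewise rigid. Your Step~2 then supplies the quasi-periodic case that the paper calls a direct consequence. Only the inclusion $\Psi\subset\ker$ transfers formally from free space through $\bm G^{\bm\alpha,0}=\bm G^0+\bm R^{\bm\alpha}$. The reverse inclusion for $\bm\alpha\neq\bm 0$ genuinely needs an independent argument, such as your quasi-periodic energy identity.

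You should, however, drop the remarks about a sign mismatch, because there is none. By \eqref{eq:jumprelation} the exterior trace is $\mathcal{D}_D^0[\bm\psi]|_+=(-\frac{1}{2}\mathcal{I}+\mathcal{K}_D^0)[\bm\psi]$. So $\mathcal{D}_D^0[\bm\psi_i^{(j)}]|_+=\bm 0$ is exactly the kernel condition under the paper's own convention. The interior trace, $\bm\psi_i^{(j)}=(\frac{1}{2}\mathcal{I}+\mathcal{K}_D^0)[\bm\psi_i^{(j)}]$, says the same thing. Your reverse-inclusion argument already works on the correct, exterior side, so no change of convention and no swapping of interior and exterior problems is needed. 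Relatedly, Green's formula reads $\chi_{D_i}\bm v=\mathcal{D}_D^0[\bm v]-\mathcal{S}_D^0[\partial_{\bm\nu}\bm v]$; the sign on the $\mathcal{S}_D^0$ term is harmless here since $\partial_{\bm\nu}\bm v=\bm 0$.

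A minor point in Step~2 concerns where $\bm R^{\bm\alpha}$ must be Lam\'e-harmonic. What you need is that $\bm R^{\bm\alpha}$ is Lam\'e-harmonic at $\bm x-\bm y$ for $\bm x\in Y$ and $\bm y\in\overline D$. This holds because $\overline D$ lies in the interior of $Y$, so $\bm x-\bm y\notin\Lambda\setminus\{\bm 0\}$. A neighbourhood of $\overline Y$ is not the relevant set of differences.
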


We now characterize the kernel of $-\frac{1}{2}\mathcal{I}+(\mathcal{K}_D^{\bm{\alpha},0})^*$. By the invertibility of $\mathcal{S}_D^{\bm{\alpha},0}$, define
\begin{equation*}
{\bm f}_{i,\mathrm{per}}^{(j)}:=(\mathcal{S}_D^{\bm\alpha,0})^{-1}[\bm\psi_i^{(j)}]
\in (H^{-1/2}(\partial D))^3, \quad i=1,\ldots,N,\, j=1,\ldots,6.
\end{equation*}
Since $\bm{\psi}_i^{(j)}$ satisfies \eqref{eq:poissonproblem} in $D_i$, the jump relations \eqref{eq:jumprelation} imply that $\bm{f}_{i,\mathrm{per}}^{(j)}\in\mathrm{Ker}(-\frac{1}{2}\mathcal{I}+(\mathcal{K}_D^{\bm{\alpha},0})^*)$. Conversely, for any $\bm{u}\in \mathrm{Ker}(-\frac{1}{2}\mathcal{I}+(\mathcal{K}_D^{\bm{\alpha},0})^*)$, the single-layer potential $\mathcal{S}_D^{\bm\alpha,0}[\bm u]$ is $\bm\alpha$-quasi-periodic and, when restricted to $D$, solves the homogeneous Neumann problem \eqref{eq:poissonproblem}. Hence its boundary trace belongs to $\Psi$.

Similarly, one can show that the functions $\bm{f}_i^{(j)}:=(\mathcal{S}_D^0)^{-1}[\bm{\psi}_i^{(j)}]\in (H^{-1/2}(\partial D))^3$, $i=1,\ldots, N, j=1,\ldots,6$ span the kernel of $-\frac{1}{2}\mathcal{I}+(\mathcal{K}_D^0)^*$. We summarize these characterizations in the following lemma.

\begin{lemma}
The kernels of $-\frac{1}{2}\mathcal{I}+(\mathcal{K}_D^{\bm\alpha,0})^*$ and $-\frac{1}{2}\mathcal{I}+(\mathcal{K}_D^0)^*$ are given by
\begin{align*}
    \mathrm{Ker}(-\frac{1}{2}\mathcal{I}+(\mathcal{K}_D^{\bm{\alpha},0})^*)& = \mathrm{Span}\left\{\bm{f}_{i,\mathrm{per}}^{(j)}: i=1,\ldots, N,\, j=1,\ldots, 6\right\}, \\
    \mathrm{Ker}(-\frac{1}{2}\mathcal{I}+(\mathcal{K}_D^0)^*)& = \mathrm{Span}\left\{\bm{f}_i^{(j)}: i=1,\ldots, N,\, j=1,\ldots, 6\right\}.
\end{align*}
\end{lemma}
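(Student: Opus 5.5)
We prove the two identities in parallel and write the argument for the quasi-periodic operator; the free-space case is the same argument with $\mathcal{S}_D^{\bm\alpha,0}$ replaced by $\mathcal{S}_D^0$, whose invertibility is assumed. The proof has three steps: show each $\bm f_{i,\mathrm{per}}^{(j)}$ lies in the kernel, show every kernel element is a combination of them, and check linear independence.

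\textbf{Inclusion $\supseteq$.} Set $\bm u := \mathcal{S}_D^{\bm\alpha,0}[\bm f_{i,\mathrm{per}}^{(j)}]$. Its trace on $\partial D$ is $\bm\psi_i^{(j)}$, and $\bm u$ solves $\mathcal{L}^{\lambda,\mu}\bm u = 0$ in each $D_k$. Since the trace is a rigid motion on $\partial D_k$ (zero when $k\neq i$), uniqueness of the Dirichlet problem for the Lam\'e system in $D_k$ gives $\bm u = \bm\psi_j$ in $D_i$ and $\bm u = 0$ in $D_k$ for $k\neq i$. Hence $\partial_{\bm\nu}\bm u|_- = 0$ on $\partial D$. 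The interior jump relation in \eqref{eq:jumprelation} then yields
\[
\big(-\tfrac12\mathcal{I}+(\mathcal{K}_D^{\bm\alpha,0})^*\big)[\bm f_{i,\mathrm{per}}^{(j)}] = 0 .
\]

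\textbf{Inclusion $\subseteq$.} Take $\bm\phi$ in the kernel and set $\bm v := \mathcal{S}_D^{\bm\alpha,0}[\bm\phi]$. Then $\mathcal{L}^{\lambda,\mu}\bm v = 0$ in each $D_k$ and $\partial_{\bm\nu}\bm v|_- = 0$. Green's identity, $\int_{D_k}\big(\lambda|\nabla\cdot\bm v|^2 + 2\mu|\nabla^s\bm v|^2\big) = 0$, shows that $\bm v$ is a rigid motion in each $D_k$, because $\Psi_{\mathbb{R}^3}$ is exactly the set of solutions of \eqref{eq:poissonproblem}. Therefore $\bm v|_{\partial D} = \sum_{k,l} c_{k,l}\bm\psi_k^{(l)} \in \Psi$. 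Applying $(\mathcal{S}_D^{\bm\alpha,0})^{-1}$ gives $\bm\phi = \sum_{k,l} c_{k,l}\bm f_{k,\mathrm{per}}^{(l)}$.

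\textbf{Linear independence.} Suppose $\sum c_{k,l}\bm f_{k,\mathrm{per}}^{(l)} = 0$. Applying $\mathcal{S}_D^{\bm\alpha,0}$ gives $\sum c_{k,l}\bm\psi_k^{(l)} = 0$. The restrictions $\bm\psi_1,\dots,\bm\psi_6$ to each $\partial D_k$ are linearly independent: a rigid motion $\bm a + \bm B\bm x$ vanishing on a two-dimensional surface vanishes identically, since $\partial D_k$ contains three non-collinear points. Hence all $c_{k,l} = 0$, and the kernel has dimension $6N$.

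\textbf{Main obstacle.} The delicate point is the regularity underlying the energy argument. For $\bm\phi \in (H^{-1/2}(\partial D))^3$ we have $\bm v \in H^1(D_k)$ with $\mathcal{L}^{\lambda,\mu}\bm v = 0$, so the conormal derivative is well defined in $H^{-1/2}$ by the weak Green formula, and the identity above holds in that weak sense. The invertibility of $\mathcal{S}_D^{\bm\alpha,0}$ for $\bm\alpha\neq 0$, and of $\mathcal{S}_D^0$ by assumption, is taken from the text.
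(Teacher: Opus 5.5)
Your proof is correct and follows the paper's argument. You obtain membership in the kernel from uniqueness of the interior Dirichlet problem together with the interior jump relation, and for the converse you note that $\mathcal{S}_D^{\bm\alpha,0}[\bm\phi]$ (resp.\ $\mathcal{S}_D^0[\bm\phi]$) solves the homogeneous Neumann problem in each $D_k$, so its trace lies in $\Psi$, and then invert the single-layer operator; your linear-independence check, which gives dimension $6N$, is a small addition that the paper leaves implicit.
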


\subsection{Large quasi-momentum regime}\label{sec:largealpha}

We first consider the case where $|\bm\alpha|$ is bounded away from zero, or more precisely, where $|\bm\alpha|$ is sufficiently large so that the corresponding Green's function contains no propagating modes. Since the lemmas and theorems in this subsection are mostly direct generalizations of existing results in \cite{LI2026113822, RenChenGaoLi2025SubwavelengthBandgaps, ammari2024functional}, we present only their statements and omit the detailed proofs.

We begin with a relation between $\mathcal{S}_D^0$ and $(\mathcal{K}_{D,2}^0)^*$. The following lemma generalizes \cite[(4.13)]{LI2026113822}.

\begin{lemma}\label{lem:aux1}
For any $\bm{\phi}\in (L^2(\partial D))^3$, the following identity holds:
    $$
        -\rho\tau_i^2\int_{D} \mathcal{S}_{D}^0[\bm{\phi}]\cdot\bm{\psi}_i^{(j)}\mathrm{d}x = \int_{\partial D} (\mathcal{K}_{D,2}^0)^*[\bm{\phi}]\cdot\bm{\psi}_i^{(j)}\mathrm{d}\sigma.
    $$
\end{lemma}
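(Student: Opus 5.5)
The plan is to derive the identity from Green's second identity for the Lamé operator on each inclusion $D_i$, applied to the single-layer potential and the rigid motion $\bm\psi_j$, and then to compare the coefficients of $\omega^2$ in the low-frequency expansion \eqref{eq:expansionsingle}. The factor $\tau_i^2$ should be read as coming from the frequency $\tau_i\omega$ used inside $D_i$. Write $(\mathcal{K}_{D,2}^0)^*$ for the $\omega^2$-coefficient of $(\mathcal{K}_D^{\omega})^*$. Then the $\omega^2$-coefficient of $(\mathcal{K}_{D_i}^{\tau_i\omega})^*$ is $\tau_i^2(\mathcal{K}_{D_i,2}^0)^*$, which is how the operator enters the second row of \eqref{eq:a}. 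So I will prove the frequency-$\omega$ identity
$$
-\rho\int_{D}\mathcal{S}_D^0[\bm\phi]\cdot\bm\psi_i^{(j)}\,\mathrm{d}x=\int_{\partial D}(\mathcal{K}_{D,2}^0)^*[\bm\phi]\cdot\bm\psi_i^{(j)}\,\mathrm{d}\sigma,
$$
and then obtain the stated form by replacing $\omega$ with $\tau_i\omega$. Since $\bm\psi_i^{(j)}$ vanishes off $\partial D_i$ (and is extended by zero off $D_i$), both sides only involve $D_i$ and $\partial D_i$, although $\bm\phi$ lives on all of $\partial D$.

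\textbf{Step 1.} Fix $\omega$ small and set $\bm u_\omega:=\mathcal{S}_D^{\omega}[\bm\phi]$. In $D_i$ this satisfies $(\mathcal{L}^{\lambda,\mu}+\rho\omega^2)\bm u_\omega=0$, while $\bm\psi_j$ satisfies \eqref{eq:poissonproblem}. Green's formula for the Lamé operator gives
$$
\int_{D_i}\big(\mathcal{L}^{\lambda,\mu}\bm u_\omega\cdot\bm\psi_j-\bm u_\omega\cdot\mathcal{L}^{\lambda,\mu}\bm\psi_j\big)=\int_{\partial D_i}\big(\partial_{\bm\nu}\bm u_\omega|_-\cdot\bm\psi_j-\bm u_\omega\cdot\partial_{\bm\nu}\bm\psi_j\big).
$$
Because $\mathcal{L}^{\lambda,\mu}\bm\psi_j=0$ and $\partial_{\bm\nu}\bm\psi_j=0$, the jump relation \eqref{eq:jumprelation} turns this into
$$
-\rho\omega^2\int_{D_i}\mathcal{S}_D^{\omega}[\bm\phi]\cdot\bm\psi_j\,\mathrm{d}x=\int_{\partial D_i}\big(-\tfrac12\mathcal{I}+(\mathcal{K}_D^{\omega})^*\big)[\bm\phi]\cdot\bm\psi_j\,\mathrm{d}\sigma.
$$
For $\bm\phi\in(L^2(\partial D))^3$ the potential is regular enough for this formula to hold. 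If needed, I would first prove it for smooth $\bm\phi$ and then pass to the limit by density.

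\textbf{Step 2.} Expand both sides in $\omega$ using \eqref{eq:expansionsingle} and Theorem~\ref{thm:greenexpansionsingle}, then match powers.
\begin{itemize}
\item At order $\omega^0$ the right-hand side is $\int_{\partial D}\bm\phi\cdot(-\tfrac12\mathcal{I}+\mathcal{K}_D^0)[\bm\psi_i^{(j)}]$, which vanishes because $\bm\psi_i^{(j)}\in\Psi$ lies in the null space of $-\frac12\mathcal{I}+\mathcal{K}_D^0$. This is consistent with the left-hand side, which has no $\omega^0$ term.
\item At order $\omega^1$ both sides vanish, since $\mathcal{K}_{D,1}^0=0$ by \eqref{eq:potentialsingle}.
\item At order $\omega^2$, the left-hand side contributes $-\rho\int_{D_i}\mathcal{S}_D^0[\bm\phi]\cdot\bm\psi_j$ and the right-hand side contributes $\int_{\partial D_i}(\mathcal{K}_{D,2}^0)^*[\bm\phi]\cdot\bm\psi_j$.
\end{itemize}
Equating the $\omega^2$ coefficients gives the claimed identity.

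\textbf{Main obstacle.} The delicate point is justifying the coefficient comparison, in particular expanding the volume integral over $D_i$ termwise. The series for $G^\omega_{ij}$ in Theorem~\ref{thm:greenexpansionsingle} is entire in $\omega|\bm x|$. On the bounded set $\{|\bm x-\bm y|\le\operatorname{diam}\}$ every term with $n\ge1$ is bounded, and only the $n=0$ term is weakly singular like $|\bm x-\bm y|^{-1}$, which is integrable over $D_i\times\partial D$. Hence both sides are analytic in $\omega$ near $0$ after integration, and matching Taylor coefficients is legitimate.
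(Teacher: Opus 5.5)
Your argument is correct and is essentially the standard derivation: Green's second identity on $D_i$ between $\mathcal{S}_D^{\omega}[\bm\phi]$ and the rigid motion $\bm\psi_j$, the interior jump relation, and matching of the $\omega^2$ coefficients. The paper gives no proof and instead cites \cite[(4.13)]{LI2026113822}, which this derivation underlies; your reading of $\tau_i^2$ as coming from the frequency $\tau_i\omega$, so that $(\mathcal{K}_{D,2}^0)^*$ on $\partial D_i$ carries the factor $\tau_i^2$, is also how the lemma is applied in \eqref{eq:omega0-1}--\eqref{eq:omega0-2}.
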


We refer to values of $\omega$ for which $\mathcal{A}_{\bm{\delta}}^{\bm{\alpha},\omega}$ is not invertible as Bloch resonant frequencies. The following lemma gives the number of Bloch resonant frequencies associated with $\mathcal{A}_{\bm{\delta}}^{\bm{\alpha},\omega}$. It is a direct generalization of \cite[Lemma~3.5]{RenChenGaoLi2025SubwavelengthBandgaps}.

\begin{lemma}
Let $\bm{\alpha}\in Y^*$ be fixed, and assume that $|\bm{\alpha}|^2\geq\rho\omega^2/(\lambda+2\mu)$. Then the operator $\mathcal{A}_{\bm{\delta}}^{\bm{\alpha},\omega}$ has precisely $12N$ Bloch resonant frequencies $\omega_i^{\bm{\alpha},(j)} = \omega_i^{\bm{\alpha}, (j)}(\bm{\delta})$, $i=1, \ldots, N$, $j=1, \ldots, 6$, counted with multiplicity, such that $\omega_i^{\bm{\alpha}, (j)}(0) = 0$, and each $\omega_i^{\bm\alpha,(j)}$ depends continuously on $\bm\delta$. Among these resonant frequencies, precisely $6N$ have positive real part and correspond to physical resonances.
\end{lemma}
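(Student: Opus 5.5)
The plan is to apply the generalized Rouché theorem of Gohberg--Sigal to the operator-valued function $\omega\mapsto\mathcal{A}_{\bm\delta}^{\bm\alpha,\omega}$. We view it as a small perturbation of $\mathcal{A}_{0}^{\bm\alpha,\omega}$, and count the characteristic values of the unperturbed problem at $\omega=0$.

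\emph{Step 1: holomorphy.} The hypothesis $|\bm\alpha|^2\geq\rho\omega^2/(\lambda+2\mu)$ (together with $\bm\alpha\neq 0$) ensures that $\bm G^{\bm\alpha,\omega}$ has no propagating modes near $\omega=0$. By Theorem~\ref{thm:greenexpansion}, the quasi-periodic operators are then holomorphic in $\omega$, with even expansions. The expansions \eqref{eq:expansionsingle} show that $\mathcal{S}_D^{\bm\tau\omega}$ and $(\mathcal{K}_D^{\bm\tau\omega})^*$ are entire in $\omega$. Hence $\mathcal{A}_{\bm\delta}^{\bm\alpha,\omega}$ is a holomorphic Fredholm-operator-valued function of index zero on a small disc $V$ around $0$. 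Its diagonal blocks are invertible up to compact perturbations, since $-\tfrac12\mathcal{I}+(\mathcal{K}_D^{\bm\tau\omega})^*$ is Fredholm of index zero and $\mathcal{S}_D^{\bm\alpha,0}$ is invertible.

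\emph{Step 2: the case $\bm\delta=0$.} Here
$$
\mathcal{A}_0^{\bm\alpha,\omega}=\begin{pmatrix}\mathcal{S}_D^{\bm\tau\omega}&-\mathcal{S}_D^{\bm\alpha,\omega}\\ -\tfrac12\mathcal{I}+(\mathcal{K}_D^{\bm\tau\omega})^*&0\end{pmatrix},
$$
which is block triangular after permutation. Since $\mathcal{S}_D^{\bm\alpha,\omega}$ is invertible near $0$, the characteristic values of $\mathcal{A}_0^{\bm\alpha,\omega}$, with multiplicities, coincide with those of $\omega\mapsto -\tfrac12\mathcal{I}+(\mathcal{K}_D^{\bm\tau\omega})^*$. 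These are the Neumann eigenvalues $\tau_i\omega$ of the Lamé operator on each $D_i$.

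At $\omega=0$ the kernel is $\mathrm{Span}\{\bm f_i^{(j)}\}$, of dimension $6N$, by the lemma characterizing $\mathrm{Ker}(-\frac12\mathcal{I}+(\mathcal{K}_D^0)^*)$. Because $\mathcal{K}_{D,1}^0=0$, the first correction is $\omega^2(\mathcal{K}_{D,2}^{0})^*$ (suitably scaled by $\tau_i^2$). By Lemma~\ref{lem:aux1}, the pairing
$$
\langle(\mathcal{K}_{D,2}^0)^*\bm f_i^{(j)},\bm\psi_{i'}^{(j')}\rangle=-\rho\tau_i^2\int_{D}\bm\psi_i^{(j)}\cdot\bm\psi_{i'}^{(j')}\,dx
$$
is a nondegenerate Gram matrix of rigid motions. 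Therefore each kernel element has rank exactly two, and the total null multiplicity at $\omega=0$ is $2\cdot 6N=12N$. This is the step I expect to be the main obstacle: one has to check carefully, via Lyapunov--Schmidt reduction onto $\Psi$ against the cokernel, that no partial multiplicity exceeds two. This is exactly where the Gram matrix nondegeneracy is used. It also requires checking that the pole multiplicity is zero, which holds by holomorphy.

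\emph{Step 3: Rouché.} Choose $V$ so that $\omega=0$ is the only characteristic value of $\mathcal{A}_0^{\bm\alpha,\omega}$ in $\overline V$. The difference $\mathcal{A}_{\bm\delta}^{\bm\alpha,\omega}-\mathcal{A}_0^{\bm\alpha,\omega}=O(|\bm\delta|)$ holds uniformly on $\partial V$. The generalized Rouché theorem then gives exactly $12N$ characteristic values in $V$, counted with multiplicity, and they depend continuously on $\bm\delta$ with limit $0$ as $\bm\delta\to0$.

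\emph{Step 4: the $\pm$ pairing.} The symmetry $\omega\mapsto-\omega$ comes from the evenness of $\bm G^{\bm\alpha,\omega}$ (Theorem~\ref{thm:greenexpansion}), and of $\bm G^{\omega}$ up to the $\omega\mapsto-\bar\omega$ conjugation symmetry of the outgoing kernel. It shows that the characteristic values are symmetric under $\omega\mapsto-\bar\omega$. The leading asymptotics $\omega^2\approx\delta_i\times$ (eigenvalue of the capacitance-type matrix), which are positive, show that the values split into $6N$ with positive real part and $6N$ mirrored ones. The former are the physical resonances.
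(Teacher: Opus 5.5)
\paragraph{Steps 1--3.} These are the standard Gohberg--Sigal argument that the paper relies on. The paper gives no proof and only cites \cite[Lemma~3.5]{RenChenGaoLi2025SubwavelengthBandgaps}. These steps are sound, and your rank-two count works as you indicate. A root function of rank $\geq 3$ would need its $\omega^2$-coefficient $(\mathcal K_{D,2}^0)^*\bm\phi_0$ to be annihilated by $\Psi$. By Lemma~\ref{lem:aux1}, this forces the coefficient vector of $\bm\phi_0$ into the kernel of $\uptau\mathfrak M$, so it is zero.

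\paragraph{The gap is in Step~4.} Complex conjugation is not a symmetry at fixed $\bm\alpha$. The lattice sum defining $\bm G^{\bm\alpha,\omega}$ carries $e^{\ii\bm n\cdot\bm\alpha}$, so $\overline{\bm G^{\bm\alpha,\omega}}=\bm G^{-\bm\alpha,\bar\omega}$. Together with evenness this gives $\overline{\mathcal A_{\bm\delta}^{\bm\alpha,\omega}\Phi}=\mathcal A_{\bm\delta}^{-\bm\alpha,-\bar\omega}\bar\Phi$, which only relates characteristic values at $\bm\alpha$ to those at $-\bm\alpha$. Nor is $\omega\mapsto-\omega$ a symmetry of $\mathcal A_{\bm\delta}^{\bm\alpha,\omega}$ itself, because $\mathcal S_D^{\bm\tau\omega}$ is built from the outgoing kernel and is not even. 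Combining evenness of the exterior kernel with the conjugation property of the interior kernel does not produce a single transformation of the whole operator.

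\paragraph{How to fix it.} The missing idea is that the interior density enters only through the interior Dirichlet-to-Neumann map.
\begin{itemize}
\item Right-multiply $\mathcal A_{\bm\delta}^{\bm\alpha,\omega}$ by $\mathrm{diag}((\mathcal S_D^{\bm\tau\omega})^{-1},(\mathcal S_D^{\bm\alpha,\omega})^{-1})$ and perform a block elimination. Near $\omega=0$ this shows $\mathcal A_{\bm\delta}^{\bm\alpha,\omega}$ is equivalent, through holomorphic invertible factors, to $\mathrm{diag}(\mathcal I,\widehat{\mathcal A}_{\bm\delta}(\bm\alpha,\omega))$. Here $\widehat{\mathcal A}_{\bm\delta}$ is the operator introduced in Section~\ref{smallalpha}, and multiplicities are preserved.
\item The first term of $\widehat{\mathcal A}_{\bm\delta}$ is the interior Dirichlet-to-Neumann map. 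It depends only on $\omega^2$, by uniqueness for the interior Dirichlet problem near $\omega=0$.
\item The second term of $\widehat{\mathcal A}_{\bm\delta}$ is even in $\omega$ by Theorem~\ref{thm:greenexpansion}.
\item Hence the $12N$ characteristic values from Step~3 are symmetric under $\omega\mapsto-\omega$, with multiplicities.
\end{itemize}

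\paragraph{Concluding the count.} Use the leading-order reduction. It is derived only from the existence of a nontrivial solution, so it does not presuppose the count. It gives $\rho\omega^2=\lambda_k^{\bm\alpha}+o(\bm\delta)$ for some $k$. The $\lambda_k^{\bm\alpha}$ are positive. Indeed, $\mathfrak M$ is block diagonal and $\updelta,\uptau$ are constant on its blocks, so $\updelta(\mathfrak M\uptau)^{-1}$ is symmetric positive definite. Therefore $\widehat{\mathfrak C}\updelta(\mathfrak M\uptau)^{-1}$ is similar to a Hermitian positive definite matrix. Consequently no characteristic value lies on the imaginary axis, and by the $\pm$ symmetry exactly $6N$ have positive real part.
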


The following theorem on resonant frequencies, together with Proposition~\ref{prop:normalalpha-field}, can be proved by following the argument of \cite[Theorem~3.12]{RenChenGaoLi2025SubwavelengthBandgaps}.

\begin{theorem}\label{thm:resonantfreq}
Let $\bm{\alpha}\in Y^*$ and assume that $|\bm{\alpha}|^2\geq\rho\omega^2/(\lambda+2\mu)$. For $i,i'=1,\ldots,N$ and $j,j'=1,\ldots,6$, define
\[
\displaystyle\widehat{\bm{C}}_{i,i'}^{(j,j')}:= -\int_{\partial D} \bm{f}_{i,\mathrm{per}}^{(j)}\cdot\bm{\psi}_{i'}^{(j')}\mathrm{d}\sigma, \quad \displaystyle\bm{M}_{i,i'}^{(j,j')}:=\int_{D}\bm{\psi}_i^{(j)}\cdot\bm{\psi}_{i'}^{(j')}\mathrm{d}\bm{x}.
\]
Let $\mathfrak{M}\in\mathbb{R}^{6N\times 6N}$ and $\widehat{\mathfrak{C}}\in\mathbb{C}^{6N\times 6N}$ be the matrices whose entries are given by
\[
 \mathfrak{M}_{6(i-1)+j, 6(i'-1)+j'} = \bm{M}_{i,i'}^{(j,j')},\quad \widehat{\mathfrak{C}}_{6(i-1)+j, 6(i'-1)+j'} = \widehat{\bm{C}}_{i,i'}^{(j,j')}.
\]
Then both $\mathfrak M$ and $\widehat{\mathfrak C}$ are positive definite.

Define the diagonal matrices $\uptau:=\mathrm{diag}(\tau_1^2,\cdots, \tau_N^2)\in\mathbb{R}^{6N\times 6N}$ and  $\updelta:=\mathrm{diag}(\delta_1, \allowbreak \cdots, \delta_N)\in\mathbb{R}^{6N\times 6N}$, where each $\tau_i^2$ and $\delta_i$ is repeated six consecutive times. Let $\lambda_k^{\bm\alpha}$, $k=1,\ldots,6N$, be the eigenvalues of $\widehat{\mathfrak{C}}\updelta(\mathfrak{M}\uptau)^{-1}$. Then the subwavelength resonant frequencies $\omega_k^{\bm{\alpha}} = \omega_k^{\bm{\alpha}}(\bm{\delta})$ of $\mathcal{A}_{\bm{\delta}}^{\bm{\alpha},\omega}$ satisfy
$$
\omega_k^{\bm{\alpha}} := (\lambda_k^{\bm{\alpha}}/\rho)^{1/2} + \mathcal{O}(\omega^2) + \mathcal{O}(\bm{\delta}) = \mathcal{O}({\bm\delta}^{1/2}),\quad k=1, \ldots, 6N.
$$
\end{theorem}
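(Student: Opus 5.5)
Positive definiteness comes first. $\mathfrak M$ is the Gram matrix of the functions $\bm\psi_i^{(j)}$ in $L^2(D)$. These are supported on disjoint $D_i$, so $\mathfrak M$ is block diagonal, and each $6\times 6$ block is the Gram matrix of the six rigid motions on $D_i$. Those rigid motions are linearly independent on any open set, so each block is positive definite.

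For $\widehat{\mathfrak C}$, we fix $\bm c\in\mathbb C^{6N}$, write $\bm\psi=\sum c_{ij}\bm\psi_i^{(j)}$ and $\bm f=(\mathcal S_D^{\bm\alpha,0})^{-1}[\bm\psi]$, and consider $u=\mathcal S_D^{\bm\alpha,0}[\bm f]$. This $u$ is $\bm\alpha$-quasi-periodic and solves $\mathcal L^{\lambda,\mu}u=0$ in $Y\setminus\overline D$. It equals the rigid motion $\bm\psi$ in $D$, so its interior conormal derivative vanishes. By the jump relations, $\bm f=\partial_{\bm\nu}u|_+-\partial_{\bm\nu}u|_-=\partial_{\bm\nu}u|_+$. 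Integrating by parts over $Y\setminus\overline D$, the contributions on opposite faces of $\partial Y$ cancel by quasi-periodicity. The outer normal on $\partial D$ is $-\bm\nu$. This gives
$$-\int_{\partial D}\bm f\cdot\overline{\bm\psi}\,\mathrm d\sigma=\int_{Y\setminus\overline D}\big(\lambda|\nabla\cdot u|^2+2\mu|\nabla^s u|^2\big)\,\mathrm d\bm x\ge 0.$$
When $d_l<3$ we use the decay of the nonpropagating modes instead; this is the role of the assumption $|\bm\alpha|^2\ge\rho\omega^2/(\lambda+2\mu)$. Equality forces $u$ to be a rigid motion in $Y\setminus\overline D$. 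An $\bm\alpha$-quasi-periodic (or decaying) rigid motion with $\bm\alpha\neq0$ must vanish, hence $\bm\psi=0$ and $\bm c=0$. We must also check that this sesquilinear form matches the bilinear definition of $\widehat{\bm C}$; the rigid-motion basis is real, so conjugation acts only on $\bm c$. Hermitian symmetry follows from the same Green identity.

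The eigenvalue asymptotics follow the argument of \cite[Theorem~3.12]{RenChenGaoLi2025SubwavelengthBandgaps}. We expand $\mathcal A_{\bm\delta}^{\bm\alpha,\omega}$ using \eqref{eq:expansionsingle} and Theorem~\ref{thm:greenexpansion}. The limiting operator $\mathcal A_0^{\bm\alpha,0}$ has kernel spanned by $\big((\mathcal S_D^0)^{-1}\mathcal S_D^{\bm\alpha,0}\bm f_{i,\mathrm{per}}^{(j)},\bm f_{i,\mathrm{per}}^{(j)}\big)=(\bm f_i^{(j)},\bm f_{i,\mathrm{per}}^{(j)})$. We make the ansatz $\bm\phi=\sum c_{ij}\bm f_{i,\mathrm{per}}^{(j)}+\mathcal O(\omega^2+\delta)$, solve the first row for $\bm\psi$, and substitute into the second row. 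We then test against $\bm\psi_{i'}^{(j')}$, using that the adjoint kernel is $\Psi$ by the lemma above. The term $\omega^2(\mathcal K_{D,2}^{\tau\omega})^*$ contributes $\rho\omega^2\tau_{i}^2\mathfrak M\bm c$ by Lemma~\ref{lem:aux1}, applied with $\tau_i$-scaling. The term $-\delta(\tfrac12+(\mathcal K_D^{\bm\alpha,0})^*)\bm\phi$ contributes $\widehat{\mathfrak C}\updelta\bm c$, since $\big(\tfrac12+(\mathcal K^{\bm\alpha,0}_D)^*\big)\bm f=\bm f$ on the kernel. The result is the generalized eigenproblem $\big(\rho\omega^2\mathfrak M\uptau-\widehat{\mathfrak C}\updelta\big)\bm c=\mathcal O(\omega^4+\omega^2\delta+\delta^2)$, with ordering and transposition fixed by the index convention.

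Gohberg--Sigal theory (Rouché for operator-valued functions) then gives $6N$ roots near $\omega^2=\lambda_k^{\bm\alpha}/\rho$ with the stated error. Positivity of $\widehat{\mathfrak C}\updelta$ and of $\mathfrak M\uptau$ makes $\lambda_k^{\bm\alpha}>0$, so the roots are real, and $\lambda_k^{\bm\alpha}=\mathcal O(\delta)$ gives $\omega=\mathcal O(\bm\delta^{1/2})$. The main obstacle is the positive definiteness of $\widehat{\mathfrak C}$: for complex quasi-periodic data one must control the boundary terms and handle the non-compact directions when $d_l<3$.
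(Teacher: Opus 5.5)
Your proposal is correct and follows essentially the route the paper points to: the argument of \cite[Theorem~3.12]{RenChenGaoLi2025SubwavelengthBandgaps}, which the paper carries out itself in the proof of Theorem~\ref{thm:resonantfreq0}. That route is an energy identity for the positivity of $\widehat{\mathfrak C}$, reduction onto the $6N$-dimensional kernel spanned by $(\bm f_i^{(j)},\bm f_{i,\mathrm{per}}^{(j)})$, testing against $\Psi$ with Lemma~\ref{lem:aux1}, and then Gohberg--Sigal. Two small corrections: since $\widehat{\mathfrak C}$ is built at $\omega=0$, the decay you need when $d_l<3$ comes from $\bm\alpha\neq\bm 0$ alone (the hypothesis $|\bm\alpha|^2\geq\rho\omega^2/(\lambda+2\mu)$ concerns only the absence of propagating modes in $\mathcal S_D^{\bm\alpha,\omega}$ for $\omega\neq 0$), and $\widehat{\mathfrak C}\updelta$ is not itself Hermitian, so $\lambda_k^{\bm\alpha}>0$ should be justified by noting that $\widehat{\mathfrak C}\updelta(\mathfrak M\uptau)^{-1}$ is similar to $P^{1/2}\widehat{\mathfrak C}P^{1/2}$ with $P:=\updelta(\mathfrak M\uptau)^{-1}$ symmetric positive definite.
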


Hereafter, for a scalar function $g$, the notation $\mathcal{O}(g(\bm\delta))$ denotes $\mathcal{O}\!\bigl(\!\max_{1\leq i\leq N} \! g(\delta_i)\!\bigr)$.

\begin{proposition}\label{prop:normalalpha-field}
As $\delta_i\to 0$, $i=1,\ldots,N$, the solution $\bm u$ of the unperturbed periodic problem \eqref{eq:elas_sys}, with $\bm\alpha\in Y^*$ satisfying $|\bm{\alpha}|^2\geq\rho\omega^2/(\lambda+2\mu)$ and $\omega=\mathcal{O}({\bm\delta}^{1/2})$, has the following asymptotic form inside each inclusion:
    \begin{equation*}
        \bm{u}|_{D_i} = \sum\limits_{j=1}^6 c_i^{(j)}\bm{\psi}_i^{(j)} + \mathcal{O}(\bm{\delta}),\quad i=1, \cdots, N.
    \end{equation*}
\end{proposition}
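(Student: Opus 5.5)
The plan is to work directly with the boundary-integral formulation \eqref{eq:A_F}. If $\bm u$ solves the unperturbed problem at a subwavelength frequency $\omega=\mathcal{O}(\bm\delta^{1/2})$, then by \eqref{eq:int_repre} we have $\bm u|_{D_i}=\mathcal{S}_{D_i}^{\tau_i\omega}[\bm\psi]$, where $(\bm\psi,\bm\phi)$ lies in the kernel of $\mathcal{A}_{\bm\delta}^{\bm\alpha,\omega}$. Writing $\mathcal{A}_{\bm\delta}^{\bm\alpha,\omega}=\mathcal{A}_0^{\bm\alpha,0}+\mathcal{B}(\omega,\bm\delta)$, the expansions \eqref{eq:expansionsingle} and Theorem~\ref{thm:greenexpansion} give $\|\mathcal{B}\|=\mathcal{O}(\omega^2+\bm\delta)=\mathcal{O}(\bm\delta)$. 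Note that $\mathcal{S}_{D,1}^0$ in \eqref{eq:expansionsingle} carries a first-order term in $\omega$. It is, however, a constant-valued rank-one operator, and it is absorbed below because it maps into constants, which lie in $\Psi$. The limiting operator is
\[
\mathcal{A}_0^{\bm\alpha,0}=\begin{pmatrix}\mathcal{S}_D^0 & -\mathcal{S}_D^{\bm\alpha,0}\\ -\frac12\mathcal{I}+(\mathcal{K}_D^0)^* & 0\end{pmatrix}.
\]
Its kernel consists of pairs with $\bm\psi\in\mathrm{Ker}(-\frac12\mathcal{I}+(\mathcal{K}_D^0)^*)=\mathrm{Span}\{\bm f_i^{(j)}\}$ and $\bm\phi=(\mathcal{S}_D^{\bm\alpha,0})^{-1}\mathcal{S}_D^0[\bm\psi]$. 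This kernel is $6N$-dimensional, and it is the subspace onto which the resonant kernel must concentrate.

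The key step is to show that the normalized solution pair satisfies
\[
(\bm\psi,\bm\phi)=\sum_{i,j}c_i^{(j)}\bigl(\bm f_i^{(j)},\bm f_{i,\mathrm{per}}^{(j)}\bigr)+\mathcal{O}(\bm\delta).
\]
For this I will use that $\mathcal{A}_0^{\bm\alpha,0}$ is Fredholm of index zero: $\mathcal{S}_D^0$ and $\mathcal{S}_D^{\bm\alpha,0}$ are invertible, and $-\frac12\mathcal{I}+(\mathcal{K}_D^0)^*$ is Fredholm of index zero. It is therefore invertible on a complement of its kernel, with range of finite codimension. Decompose $(\bm\psi,\bm\phi)=\Pi_0(\bm\psi,\bm\phi)+(\bm\psi,\bm\phi)^\perp$, where $\Pi_0$ is a projection onto the kernel. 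The equation $\mathcal{A}_0^{\bm\alpha,0}(\bm\psi,\bm\phi)^\perp=-\mathcal{B}(\bm\psi,\bm\phi)$ then gives $\|(\bm\psi,\bm\phi)^\perp\|\le C\|\mathcal{B}\|\,\|(\bm\psi,\bm\phi)\|=\mathcal{O}(\bm\delta)$. Here we need the lower bound of $\mathcal{A}_0^{\bm\alpha,0}$ on the complement to be uniform in $\bm\alpha$ in the stated regime. This is the step I expect to be the main obstacle, since $\mathcal{S}_D^{\bm\alpha,0}$ degenerates as $\bm\alpha\to0$. I would first restrict to $|\bm\alpha|$ bounded below, which is what the hypothesis $|\bm\alpha|^2\ge\rho\omega^2/(\lambda+2\mu)$ effectively provides, together with the invertibility statement for $\mathcal{S}_D^{\bm\alpha,0}$. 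A compactness argument over the closed set of admissible $\bm\alpha$ then gives a uniform constant.

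Finally, I substitute into the interior representation. By definition $\mathcal{S}_D^0[\bm f_i^{(j)}]=\bm\psi_i^{(j)}$ on $\partial D$. Its extension into $D_i$ solves $\mathcal{L}^{\lambda,\mu}\bm v=0$ with rigid-motion boundary data, and since rigid motions solve this Dirichlet problem, uniqueness gives $\mathcal{S}_D^0[\bm f_i^{(j)}]=\bm\psi_i^{(j)}$ in $D_i$. By \eqref{eq:expansionsingle}, $\mathcal{S}_{D_i}^{\tau_i\omega}-\mathcal{S}_{D_i}^0=\omega\mathcal{S}_{D,1}^0+\mathcal{O}(\omega^2)$ as operators into $(H^1(D_i))^3$. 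The first-order term produces a constant vector, which is itself a combination of $\bm\psi_i^{(1)},\bm\psi_i^{(2)},\bm\psi_i^{(3)}$, so it can be absorbed into the coefficients $c_i^{(j)}$. The remaining term is $\mathcal{O}(\omega^2)=\mathcal{O}(\bm\delta)$. Combining this with the $\mathcal{O}(\bm\delta)$ bound on $(\bm\psi,\bm\phi)^\perp$ and boundedness of $\mathcal{S}_{D_i}^{\tau_i\omega}$ yields $\bm u|_{D_i}=\sum_j c_i^{(j)}\bm\psi_i^{(j)}+\mathcal{O}(\bm\delta)$. The coefficients are determined at leading order by the eigenvectors of $\widehat{\mathfrak C}\updelta(\mathfrak M\uptau)^{-1}$ from Theorem~\ref{thm:resonantfreq}, although the statement requires only their existence.
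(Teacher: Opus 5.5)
Your strategy is the one the paper has in mind. The paper defers to \cite[Theorem~3.12]{RenChenGaoLi2025SubwavelengthBandgaps}, and the same computation appears in the proof of Theorem~\ref{thm:resonantfreq0}: show that $\bm\psi$ lies in $\mathrm{Span}\{\bm f_i^{(j)}\}$ up to a small error, then use $\mathcal{S}_D^0[\bm f_i^{(j)}]=\bm\psi_i^{(j)}$ in $D_i$ and absorb the constant $\mathcal{O}(\omega)$ term into the translational coefficients. Your final substitution step is fine. However, the key estimate does not yield $\mathcal{O}(\bm\delta)$ as written.

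\textbf{The $\mathcal{O}(\omega)$ term in the projection step.} The claim $\|\mathcal{B}\|=\mathcal{O}(\omega^2+\bm\delta)$ is false. The $(1,1)$ block of $\mathcal{B}$ contains $\omega\mathcal{S}_{D,1}^0$, scaled by $\tau_i$ on each $\partial D_i$. It maps into piecewise constant vector fields and has rank at most $3N$, but it is not small. So $\|\mathcal{B}\|$ is only $\mathcal{O}(\omega+\bm\delta)=\mathcal{O}(\bm\delta^{1/2})$. Your bound $\|(\bm\psi,\bm\phi)^\perp\|\le C\|\mathcal{B}\|\,\|(\bm\psi,\bm\phi)\|$ then gives only $\mathcal{O}(\bm\delta^{1/2})$, hence only an $\mathcal{O}(\bm\delta^{1/2})$ remainder in $\bm u|_{D_i}$.

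Saying this term is ``absorbed below'' does not repair it. You absorb it only in the interior representation $\mathcal{S}_{D_i}^{\tau_i\omega}$, but the loss occurs earlier, in the projection step. With an unspecified $\Pi_0$ you have no control over how the $\mathcal{O}(\omega)$ first-row residual splits between the $\bm\psi$- and $\bm\phi$-components.

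The fix is to use the block structure, which is what the paper does by working only with the second equation of \eqref{eq:A_F}:
\begin{itemize}
\item The first row gives $\bm\phi=(\mathcal{S}_D^{\bm\alpha,\omega})^{-1}\mathcal{S}_D^{\bm\tau\omega}[\bm\psi]$, so $\|\bm\phi\|\le C\|\bm\psi\|$.
\item Since $\mathcal{K}_{D,1}^0=0$, the second row reads $(-\frac{1}{2}\mathcal{I}+(\mathcal{K}_D^0)^*)[\bm\psi]=\mathcal{O}(\omega^2+\bm\delta)\|\bm\psi\|$.
\item This Fredholm operator is bounded below on a closed complement $W$ of its kernel, which yields $\bm\psi=\sum_{i,j}c_i^{(j)}\bm f_i^{(j)}+\mathcal{O}(\omega^2+\bm\delta)$.
\end{itemize}
Equivalently, take the complement of $\mathrm{Ker}\,\mathcal{A}_0^{\bm\alpha,0}$ to be $W\times(H^{-1/2}(\partial D))^3$. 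Then the $\bm\psi$-part of $(\bm\psi,\bm\phi)^\perp$ is controlled by the second-row residual alone. The $\mathcal{O}(\omega)$ residual only moves $\bm\phi$, which does not enter $\bm u|_{D_i}$. With this change your last paragraph goes through.

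\textbf{Uniformity in $\bm\alpha$.} The hypothesis $|\bm\alpha|^2\ge\rho\omega^2/(\lambda+2\mu)$ does not bound $|\bm\alpha|$ from below. Since $\omega=\mathcal{O}(\bm\delta^{1/2})\to0$, it only gives $|\bm\alpha|\gtrsim\omega$. For a fixed $\bm\alpha\neq\bm 0$ you need no uniformity at all: the hypothesis holds automatically once $\bm\delta$ is small, and the constants may depend on $\bm\alpha$. So you can simply drop the compactness step. It could not give a constant uniform over the whole regime anyway. The expansions of $\mathcal{S}_D^{\bm\alpha,\omega}$ you rely on are not uniform as $\bm\alpha\to\bm 0$ with $|\bm\alpha|$ comparable to $\omega$; compare the $\ln\omega$ terms in \eqref{eq:greensfunctiondifference} and the non-invertibility of $\mathcal{S}_D^{0,0}$ in Theorem~\ref{thm:kernels00}. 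This is precisely why the paper treats $\bm\alpha=\omega\bm\alpha_0$ separately.
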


\subsection{Small quasi-momentum regime}\label{smallalpha}

The results in Section~\ref{sec:largealpha} rely crucially on the invertibility of $\mathcal{S}_D^{\bm{\alpha},0}$ and the self-adjointness of $\mathcal{A}_{\bm{\delta}}^{\bm{\alpha},0}$, both of which fail when $\bm\alpha=\bm 0$. It is therefore necessary to examine carefully the behavior of the Green's function and the corresponding layer potential operators when $\bm\alpha$ is close to zero. In the regime $|\bm{\alpha}|^2<\rho\omega^2/(\lambda+2\mu)$, the quasi-momentum $|\bm\alpha|$ is bounded from above by a constant multiple of $\omega$. Hence, in this subsection, we assume that $\bm{\alpha}=\omega\bm{\alpha}_0$, where $\bm\alpha_0$ is independent of $\omega$. Since the analysis depends on the lattice dimension $d_l$, we focus on the case $d_l=1$ as a representative example and discuss the other cases at the end of this subsection.

An explicit representation of $\bm{G}^{\omega\alpha_0,\omega}$ was derived in \cite{he2025uniquenessphaselessinverseelastic}. Although the formula obtained there is not directly applicable when $\omega=0$, an explicit representation of $\bm{G}^{0,0}$ can be derived by an analogous calculation. This representation is provided in equations (SM2.15)--(SM2.16) of the supplementary material. Consequently, we obtain the following asymptotic formulas:
\begin{equation}\label{eq:greensfunctiondifference}
\begin{aligned}
    &\bm{G}^{\omega\alpha_0,\omega}(\bm{x})-\bm{G}^{0,0}(\bm{x})\notag\\
    & ={\rm diag}\bigg( \frac{1}{2\pi\mu}, \frac{\lambda+3\mu}{4\pi\mu(\lambda+2\mu)} , \frac{\lambda+3\mu}{4\pi\mu(\lambda+2\mu)}\bigg)\ln\omega + \bm{K}_1(\alpha_0,\lambda,\mu,\rho) +\sum\limits_{l\in2\pi\mathbb{Z}}\mathcal{O}(\omega e^{-|l\bm{x}_{\perp}|}) \\ &\quad + \mathcal{O}(\omega\ln\omega)\notag\\
    & = \bm{K}_0(\lambda, \mu, \rho) \ln\omega+\bm{K}_1(\alpha_0,\lambda,\mu,\rho)+\sum\limits_{l\in2\pi\mathbb{Z}}\mathcal{O}(\omega e^{-|l\bm{x}_{\perp}|}) + \mathcal{O}(\omega\ln\omega),
\end{aligned}
\end{equation}
and
\begin{align*}
\bm{G}^{0,\omega}(\bm{x}) - \bm{G}^{0,0}(\bm{x}) = (\widetilde{\bm{K}}_0(\lambda,\mu,\rho)\ln\omega + \widetilde{\bm{K}}_1(\lambda,\mu,\rho)) + \mathcal{O}(\omega),
\end{align*}
where $\bm{x}_{\perp}=(x_2,x_3)$. Here $\bm{K}_0$, $\bm{K}_1$, $\widetilde{\bm{K}}_0$, and $\widetilde{\bm{K}}_1$ are $3\times 3$ matrices independent of $\omega$, with explicit expressions given in equation (SM2.19) of the supplementary material. These asymptotic formulas play a crucial role in the subsequent analysis, and their derivations are provided in equations (SM2.18) and (SM2.20) of the supplementary material.

\subsubsection{Properties of $\mathcal{S}_D^{0,0}$}

Motivated by the approach in \cite[Section~3.3]{ammari2022exceptional}, we adopt a similar strategy to address the difficulty caused by the non-invertibility of $\mathcal{S}_D^{0,0}$. Define the operator $\mathcal{S}_{\rm per}^{\omega}:(L^2(\partial D))^3\rightarrow (H^1(\partial D))^3$ by
$$
    \mathcal{S}_{\rm per}^{\omega}[\bm{\phi}] := (\widetilde{\bm{K}}_0(\lambda,\mu,\rho)\ln\omega + \widetilde{\bm{K}}_1(\lambda,\mu,\rho))\int_{\partial D}\bm{\phi}(\bm{y})\mathrm{d}\sigma(\bm{y}).
$$
We then have the following elementary lemma.

\begin{lemma}
The range of $\mathcal{S}_{\rm per}^{\omega}$ is three-dimensional, i.e.,
$\mathrm{dim}\,\mathrm{Ran}(\mathcal{S}_{per}^{\omega}) = 3.$
\end{lemma}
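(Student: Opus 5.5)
The operator $\mathcal{S}_{\rm per}^{\omega}$ factors as a composition of two simple maps, and the plan is to show that each one is onto its target. Write $\bm{M}(\omega):=\widetilde{\bm{K}}_0(\lambda,\mu,\rho)\ln\omega+\widetilde{\bm{K}}_1(\lambda,\mu,\rho)\in\mathbb{C}^{3\times3}$ and let $T:(L^2(\partial D))^3\to\mathbb{C}^3$ be $T[\bm\phi]:=\int_{\partial D}\bm\phi\,\mathrm{d}\sigma$. Then $\mathcal{S}_{\rm per}^{\omega}[\bm\phi]$ is the constant vector field $\bm{M}(\omega)T[\bm\phi]$ on $\partial D$. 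Hence
\[
\mathrm{Ran}(\mathcal{S}_{\rm per}^{\omega})=\{\bm{c}|_{\partial D}:\bm{c}\in\bm{M}(\omega)\,\mathrm{Ran}(T)\}.
\]
The upper bound $\dim\le 3$ is immediate, because every element of the range is a constant vector on $\partial D$. The span of constant vectors on $\partial D$ is exactly $\mathrm{Span}\{\bm\psi_1,\bm\psi_2,\bm\psi_3\}$ restricted to $\partial D$, and the map $\bm c\mapsto\bm c|_{\partial D}$ is injective because $\partial D$ has positive measure.

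For the lower bound, first show that $T$ is surjective. For a given $\bm c\in\mathbb{C}^3$, take $\bm\phi\equiv\bm c/|\partial D|$; this lies in $(L^2(\partial D))^3$ since $\partial D$ is bounded, and $T[\bm\phi]=\bm c$.

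It then remains to show that $\bm M(\omega)$ is invertible for the relevant $\omega$. This is the one nontrivial step, and it requires the explicit forms of $\widetilde{\bm K}_0$ and $\widetilde{\bm K}_1$ from (SM2.19). By analogy with \eqref{eq:greensfunctiondifference}, I expect $\widetilde{\bm K}_0$ to be diagonal with strictly positive entries. These should be combinations such as $\frac{1}{2\pi\mu}$ and $\frac{\lambda+3\mu}{4\pi\mu(\lambda+2\mu)}$, which are positive because $\mu>0$ and $\lambda+2\mu>0$. In that case
\[
\bm M(\omega)=\ln\omega\,\widetilde{\bm K}_0\bigl(\bm I_3+(\ln\omega)^{-1}\widetilde{\bm K}_0^{-1}\widetilde{\bm K}_1\bigr).
\]
Since $|\ln\omega|\to\infty$ as $\omega\to0$, a Neumann series argument shows that $\bm M(\omega)$ is invertible for all sufficiently small $\omega>0$. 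Equivalently, $\det\bm M(\omega)$ is a cubic polynomial in $\ln\omega$ with leading coefficient $\det\widetilde{\bm K}_0\neq0$, so it vanishes for at most three values of $\omega$. In the subwavelength regime $\omega=\mathcal{O}(\bm\delta^{1/2})$ this gives invertibility, hence $\bm M(\omega)\mathbb{C}^3=\mathbb{C}^3$ and $\dim\mathrm{Ran}=3$.

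The main obstacle is checking from the supplementary formulas that $\widetilde{\bm K}_0$ really is nonsingular. If it turned out to be degenerate in some direction, I would instead use that $\widetilde{\bm K}_0$ and $\widetilde{\bm K}_1$ are diagonal, which follows from the $x_1$-periodic structure and the symmetry of $\bm G^{0,\omega}$ in the transverse variables. I would then verify entrywise that each diagonal entry $a_k\ln\omega+b_k$ is nonzero for small $\omega$.
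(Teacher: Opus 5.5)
Your argument is correct and is exactly the elementary reasoning the paper leaves implicit; it states the lemma without proof. The range is $\{\bm M(\omega)\bm c|_{\partial D}:\bm c\in\mathbb{C}^3\}$ because $\bm\phi\mapsto\int_{\partial D}\bm\phi\,\mathrm{d}\sigma$ is onto $\mathbb{C}^3$, and $\bm M(\omega)$ is invertible for small $\omega>0$. The step you flag does hold: the $\ln\omega$ term of $\bm G^{0,\omega}$ comes only from the zeroth Fourier mode in $x_1$, namely a two-dimensional scalar Helmholtz Green's function for $u_1$ and a two-dimensional Lam\'e Green's function for $(u_2,u_3)$. This gives $\widetilde{\bm K}_0=\mathrm{diag}\bigl(\frac{1}{2\pi\mu},\frac{\lambda+3\mu}{4\pi\mu(\lambda+2\mu)},\frac{\lambda+3\mu}{4\pi\mu(\lambda+2\mu)}\bigr)$, which matches $\bm K_0$ in \eqref{eq:greensfunctiondifference}, so your Neumann-series argument applies and the fallback is not needed.
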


We next prove an important property of $\mathcal{S}_D^{0,0}$.

\begin{theorem}\label{thm:kernels00}
The kernel of $\mathcal{S}_D^{0,0}$ is three-dimensional. Moreover, $\mathcal{S}_D^{0,0}$ is invertible from $(L_0^2(\partial D))^3$ onto its range.
\end{theorem}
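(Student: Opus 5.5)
The plan is to compare $\mathcal{S}_D^{0,0}$ with the single-layer operator built from $\bm G^{0,\omega}$ and then let $\omega\to 0$. The expansion of $\bm G^{0,\omega}-\bm G^{0,0}$ recorded before this subsection gives
\[
\mathcal{S}_D^{0,\omega}=\mathcal{S}_D^{0,0}+\mathcal{S}_{\rm per}^{\omega}+\mathcal{O}(\omega).
\]
Here $\mathcal{S}_{\rm per}^{\omega}$ only sees the mean $\int_{\partial D}\bm\phi\,\mathrm{d}\sigma$ and has three-dimensional range by the preceding lemma.

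\textbf{Step 1 (injectivity on mean-zero densities).} Let $\bm\phi\in (L_0^2(\partial D))^3$, i.e. each component of $\bm\phi$ has zero mean on $\partial D$, and suppose $\mathcal{S}_D^{0,0}[\bm\phi]=0$ on $\partial D$. Put $\bm u:=\mathcal{S}_D^{0,0}[\bm\phi]$; it is periodic and satisfies $\mathcal{L}^{\lambda,\mu}\bm u=0$ off $\partial D_{\rm tot}$.

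Inside each $D_i$ the function $\bm u$ has zero Dirichlet data, so $\bm u\equiv 0$ there.

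In the exterior we work on the cell $Y\setminus\overline{D}$, truncated in the non-periodic directions $\bm x_0$. Because the total density vanishes, the lattice sum has no monopole part, so $\bm u$ and $\nabla\bm u$ decay exponentially as $|\bm x_0|\to\infty$ for $d_l=1$. One can also see this from the Fourier series of $\bm G^{0,0}$: once the zero mode is removed, every remaining term decays exponentially. The lateral boundary terms cancel by periodicity. The elastic energy identity
\[
\int_{Y\setminus \overline D}\bigl(\lambda|\nabla\cdot\bm u|^2+2\mu|\nabla^s\bm u|^2\bigr)=-\int_{\partial D}\bm u\cdot\partial_{\bm\nu}\bm u|_+=0
\]
then shows that $\bm u$ is a rigid motion in the exterior. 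A periodic rigid motion decaying at infinity is zero, so $\bm u\equiv 0$ outside as well.

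The jump relation \eqref{eq:jumprelation} now gives $\bm\phi=\partial_{\bm\nu}\bm u|_+-\partial_{\bm\nu}\bm u|_-=0$. Hence $\mathcal{S}_D^{0,0}$ is injective on $(L_0^2(\partial D))^3$.

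\textbf{Step 2 (Fredholm index and range).} $\mathcal{S}_D^{0,0}$ differs from $\mathcal{S}_D^0$ by an operator with smooth kernel, namely the regular part of the lattice sum. It is therefore Fredholm of index zero from $(H^{-1/2})^3$ to $(H^{1/2})^3$, or from $(L^2)^3$ to $(H^1)^3$. Injectivity on the closed subspace $(L_0^2)^3$, combined with this Fredholm property, gives a bounded inverse from $\mathcal{S}_D^{0,0}((L_0^2)^3)$ onto $(L_0^2)^3$. This is the second assertion.

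\textbf{Step 3 (dimension of the kernel).} Since $(L_0^2)^3$ has codimension $3$, Step 1 gives $\dim\operatorname{Ker}\mathcal{S}_D^{0,0}\le 3$.

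For the reverse inequality we use the index-zero property: it suffices to show that $\operatorname{Ran}\mathcal{S}_D^{0,0}$ has codimension at least $3$. I would exhibit three independent functionals annihilating the range. Since $\bm G^{0,0}$ is normalized with the zero mode removed, the natural candidates are the constants: the trace of a periodic single-layer potential with nonzero total density must grow like $|\bm x_0|$ in the unbounded directions, and this is incompatible with the chosen normalization. Concretely, I would show that for each $\bm e_k$ there is $\bm\phi_k$ with $\mathcal{S}_D^{0,0}[\bm\phi_k]=0$ and $\int_{\partial D}\bm\phi_k\,\mathrm{d}\sigma=\bm e_k$.

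A cleaner route is a limiting argument. Take $\bm\phi_k^\omega:=(\mathcal{S}_D^{0,\omega})^{-1}$ applied to a suitably chosen constant vector, and let $\omega\to0$. The $\ln\omega$ divergence in $\mathcal{S}_{\rm per}^{\omega}$ forces $\mathcal{S}_D^{0,0}[\bm\phi_k^\omega]\to 0$ while the mean of $\bm\phi_k^\omega$ stays bounded away from zero; this requires $\widetilde{\bm K}_0$ to be invertible, which holds because it is diagonal with nonzero entries.

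\textbf{Main obstacle.} I expect Step 3, producing the three kernel elements, to be the hardest. It requires uniform control of $(\mathcal{S}_D^{0,\omega})^{-1}$ as $\omega\to0$, and this rests on the precise structure of $\widetilde{\bm K}_0$ and $\widetilde{\bm K}_1$ from (SM2.19) of the supplementary material. The decay argument in Step 1 is routine by comparison.
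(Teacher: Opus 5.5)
Your Steps 1--2 are sound and give a self-contained alternative to the paper's upper bound. The paper obtains $\operatorname{Ker}\mathcal{S}_D^{0,0}\cap(L_0^2(\partial D))^3=\{0\}$ from a uniform bound on $(\mathcal{S}_D^{0,\omega})^{-1}$ as $\omega\to0^+$, imported from \cite{ammari2022exceptional}, together with $\dim\operatorname{Ran}\mathcal{S}_{\rm per}^{\omega}=3$. Where you use Fredholm index zero, it uses self-adjointness. One correction: a vanishing total density removes the monopole, not the zero Fourier mode in $x_1$. Hence $\bm u=\mathcal{O}(|\bm x_\perp|^{-1})$ and $\nabla\bm u=\mathcal{O}(|\bm x_\perp|^{-2})$ rather than exponentially small, which still kills the boundary term on $\{|\bm x_\perp|=R\}$.

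\textbf{The gap is Step 3.} The lower bound $\dim\operatorname{Ker}\mathcal{S}_D^{0,0}\geq 3$ is not proved. Your first route only restates the goal, and the limiting mechanism in your second route fails. With $B_\omega:=\widetilde{\bm K}_0\ln\omega+\widetilde{\bm K}_1$ and $P\bm\phi:=\int_{\partial D}\bm\phi\,\mathrm{d}\sigma$, your equation reads $\mathcal{S}_D^{0,0}[\bm\phi^\omega]=\bm c_\omega-B_\omega P\bm\phi^\omega+\mathcal{O}(\omega)$.

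For a fixed constant $\bm c_\omega=\bm c$, the uniform bound gives $P\bm\phi^\omega=\mathcal{O}(1/|\ln\omega|)$, so the mean is not bounded away from zero. For $\bm c_\omega=B_\omega\bm e_k$, the mean does tend to $\bm e_k$. But then $\mathcal{S}_D^{0,0}[\bm\phi^\omega]=B_\omega(\bm e_k-P\bm\phi^\omega)+\mathcal{O}(\omega)$ is an $\infty\cdot0$ expression with no reason to vanish.

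Nothing in this mechanism distinguishes a trivial kernel from a three-dimensional one. Suppose $\mathcal{S}_D^{0,0}$ and the $3\times3$ matrix $\bm A$, defined by $\bm A\bm c:=\int_{\partial D}(\mathcal{S}_D^{0,0})^{-1}[\bm c]\,\mathrm{d}\sigma$, are both invertible. Then the Woodbury formula gives $\bm\phi^\omega\to(\mathcal{S}_D^{0,0})^{-1}[\bm A^{-1}\bm e_k]$. This limit has mean $\bm e_k$ but image $\bm A^{-1}\bm e_k\neq\bm 0$. So the $\ln\omega$ divergence only pins down the mean of the limit; it cannot manufacture kernel elements.

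\textbf{What the paper does instead.} It proves the dual statement: no nonzero constant vector $\bm a$ lies in $\operatorname{Ran}\mathcal{S}_D^{0,0}$. Suppose $\mathcal{S}_D^{0,0}[\bm\phi]=\bm a$ on $\partial D$ and set $\bm v:=\mathcal{S}_D^{0,0}[\bm\phi]$. Then $\bm v=\bm a$ in $D$ by interior uniqueness, and $\bm v=\bm a$ in $Y\setminus\overline D$ by exterior uniqueness. So $\partial_{\bm\nu}\bm v|_\pm=\bm 0$, the jump relation gives $\bm\phi=\bm 0$, and hence $\bm a=\bm 0$. Then $\dim\operatorname{Ker}\mathcal{S}_D^{0,0}=\operatorname{codim}\operatorname{Ran}\mathcal{S}_D^{0,0}\geq3$ by self-adjointness; your index-zero property would serve equally well.

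Your remark that a nonzero total density is ``incompatible with the chosen normalization'' is aimed at exactly this exterior uniqueness step. That step is where the whole difficulty lies, so it cannot stay heuristic. For $d_l=1$, a density with nonzero mean makes $\bm v$ grow logarithmically in $|\bm x_\perp|$, not like $|\bm x_0|$. Uniqueness for the exterior Dirichlet problem holds only in a class that fixes the behavior at infinity. To close the argument you must use the specific normalization of $\bm G^{0,0}$ from (SM2.15)--(SM2.16) to exclude a logarithmically growing exterior solution with constant trace; the $\omega\to0$ limit will not do this for you.
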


\begin{proof}
From the expression of $\bm{G}^{0,0}$, we have $\bm{G}^{0,0}(\bm{x},\bm{y}) = (\bm{G}^{0,0}(\bm{y},\bm{x}))^{\top}$. Together with the boundedness of $\mathcal{S}_D^{0,0}$ on $(L^2(\partial D))^3$, this symmetry implies that $\mathcal{S}_D^{0,0}$ is self-adjoint on $(L^2(\partial D))^3$. Following the proof of \cite[Lemma~3.1]{ammari2022exceptional}, one can show that $\mathcal{S}_D^{0,\omega}$ is invertible for sufficiently small nonzero $\omega$, and that $\|(\mathcal{S}_D^{0,\omega})^{-1}\|_{L(H^1(\partial D), L^2(\partial D))}$ remains uniformly bounded as $\omega\to 0^+$. Suppose, to the contrary, that the kernel of $\mathcal{S}_D^{0,0}$ has dimension greater than three. Since $\dim\operatorname{Ran}(\mathcal{S}_{\rm per}^{\omega})=3$, there exists a nonzero $\bm{\phi}\in \mathrm{Ker}(\mathcal{S}_D^{0,0})$ such that $\mathcal{S}_{\rm per}^{\omega}[\bm{\phi}] = 0$. Using the expansion $\mathcal{S}_D^{0,\omega}=\mathcal{S}_D^{0,0}+\mathcal{S}_{\rm per}^{\omega}+\mathcal{O}(\omega)$, we obtain $\mathcal{S}_D^{0,\omega}[\bm\phi]=\mathcal{O}(\omega)$ in $(H^1(\partial D))^3$. On the other hand, by the uniform boundedness of $(\mathcal{S}_D^{0,\omega})^{-1}$, we have
$$
    \|\mathcal{S}_D^{0,\omega}[\bm{\phi}]\|_{H^1(\partial D)^3}\geq \frac{1}{\|(\mathcal{S}_D^{0,\omega})^{-1}\|_{\mathcal{L}(H^1(\partial D)^3,L^2(\partial D)^3)}}\|\bm{\phi}\|_{L^2(\partial D)^3},
$$
which gives a contradiction for sufficiently small $\omega$. Therefore, $\dim\operatorname{Ker}(\mathcal{S}_D^{0,0})\leq 3$. This argument also shows that every nonzero $\bm{\phi}\in \mathrm{Ker}\mathcal{S}_D^{0,0}$ satisfies $\int_{\partial D}\bm{\phi}\mathrm{d}\sigma\neq 0$.

We next determine the exact dimension of $\operatorname{Ker}(\mathcal{S}_D^{0,0})$ by identifying elements in the orthogonal complement of $\operatorname{Im}(\mathcal{S}_D^{0,0})$. Recall that $\bm{\psi}^{(1)}=(1,0,0)^\top, \bm{\psi}^{(2)}=(0,1,0)^\top,\bm{\psi}^{(3)}=(0,0,1)^\top$. We claim that $\bm\psi^{(i)}\notin\operatorname{Im}(\mathcal{S}_D^{0,0})$ for each $i=1,2,3$. Suppose, to the contrary, that there exists $\bm\phi\in (L^2(\partial D))^3$ such that $\mathcal{S}_D^{0,0}[\bm{\phi}]=\bm{\psi}^{(i)}$ on $\partial D$ for some fixed $i\in\{1,2,3\}$. Define $\bm{v}:=\mathcal{S}_D^{0,0}[\bm{\phi}]\in (L^2(Y))^3$. Then $\bm v$ satisfies $\mathcal{L}^{\lambda,\mu}\bm{v}=0$ in $D$ and $Y\setminus\overline{D}$, and $\bm{v}|_{\partial D}=\bm{\psi}^{(i)}$. By the uniqueness of the interior and exterior Dirichlet problems for the Lam\'{e} system, we obtain $\bm{v}|_{Y\setminus\overline{D}}=\bm{\psi}^{(i)}, \bm{v}|_D=\bm{\psi}^{(i)}$. Consequently,
$$
    \partial_{\bm\nu}\bm{v}|_+ = \partial_{\bm\nu}\bm{v}|_- = 0.
$$
Using the jump relation for the conormal derivative of the single-layer potential, we conclude that $\bm{\phi}=0$. This implies $\bm{\psi}^{(i)}=0$, which is a contradiction. Therefore, $\bm\psi^{(i)}\notin\operatorname{Im}(\mathcal{S}_D^{0,0}), i=1,2,3.$
Since $\mathcal{S}_D^{0,0}$ is self-adjoint, we have $\mathrm{dim} \mathrm{Ker}(\mathcal{S}_D^{0,0}) = \mathrm{dim} (\mathrm{Im}(\mathcal{S}_D^{0,0}))^{\perp}$. It follows that $\mathrm{dim}\mathrm{Ker}(\mathcal{S}_D^{0,0})\geq 3$. Combining this lower bound with the upper bound established in the first part of the proof, we conclude that $\mathrm{dim}\mathrm{Ker}(\mathcal{S}_D^{0,0})=3$.
\end{proof}

By \eqref{eq:greensfunctiondifference}, we define
$$
\widehat{\bm{G}}^{\omega\alpha_0,\omega}(\bm{x}) := \bm{G}^{0,0}(\bm{x}) + \bm{K}_0\ln\omega+\bm{K}_1,
$$
and introduce the associated layer potential operators $\widehat{\mathcal{S}}^{\omega\alpha_0,\omega}_D$ and $\widehat{\mathcal{K}}^{\omega\alpha_0,\omega}_D$ by
\begin{equation*}
\begin{aligned}
    \widehat{\mathcal{S}}^{\omega\alpha_0,\omega}_D[\bm{\phi}](\bm{x}) &= \int_{\partial D}\widehat{\bm{G}}^{\omega\alpha_0,\omega}(\bm{x}-\bm{y})\bm{\phi}(\bm{y})\mathrm{d}\sigma(\bm{y}), \\
    (\widehat{\mathcal{K}}^{\omega\alpha_0,\omega}_D)^*[\bm{\phi}](\bm{x}) &= \int_{\partial D}\partial_{\bm{\nu}(\bm{x})}\widehat{\bm{G}}^{\omega\alpha_0,\omega}(\bm{x}-\bm{y})\bm{\phi}(\bm{y})\mathrm{d}\sigma(\bm{y}).
\end{aligned}
\end{equation*}
It follows immediately that $(\widehat{\mathcal{K}}^{\omega\alpha_0,\omega}_D)^* = (\widehat{\mathcal{K}}^{0,0}_D)^*$, since the term $\bm{K}_0\ln\omega+\bm{K}_1$ is independent of $\bm{x}$, the single-layer operator $\mathcal{S}_D^{\omega\alpha_0,\omega}$ admits the expansion
\begin{equation}\label{eq:singlelayeralphaomega}
\mathcal{S}_D^{\omega\alpha_0,\omega} = \widehat{\mathcal{S}}_D^{\omega\alpha_0,\omega} + \omega\mathcal{S}_1^{\alpha_0} + \mathcal{O}(\omega^2).
\end{equation}
Using a Neumann series expansion, we obtain an asymptotic representation of the inverse operator:
\begin{equation*}
(\mathcal{S}_D^{\omega\alpha_0,\omega})^{-1} = (\widehat{\mathcal{S}}_D^{\omega\alpha_0,\omega})^{-1} - \omega(\widehat{\mathcal{S}}_D^{\omega\alpha_0,\omega})^{-1}\mathcal{S}_1^{\alpha_0}(\widehat{\mathcal{S}}_D^{\omega\alpha_0,\omega})^{-1} + \mathcal{O}(\omega^2).
\end{equation*}

From the definition, we observe that $\widehat{\bm{G}}^{\omega\alpha_0,\omega}$ differs from $\bm{G}^{\omega\alpha_0,\omega}$ only by terms of order $\mathcal{O}(\omega)$. Following the argument in \cite[Lemma~3.9]{ammari2022exceptional}, we can immediately establish the following result.

\begin{lemma}\label{lem:holomorphic}
For any $\alpha_0\in Y^*$ satisfying $|\alpha_0|^2<\rho/(\lambda+2\mu)$, the operator $(\widehat{\mathcal{S}}_D^{\omega\alpha_0,\omega})^{-1}$ is holomorphic in $\omega$ in a neighborhood of $\omega=0$ as an operator-valued function.
\end{lemma}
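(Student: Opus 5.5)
The plan is to invert $\widehat{\mathcal{S}}_D^{\omega\alpha_0,\omega}$ by an explicit Schur-complement (Grushin) reduction built on Theorem~\ref{thm:kernels00}. This turns the $\omega$-dependence of the inverse into the inverse of a $3\times3$ matrix, whose analyticity can be read off directly.

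\textbf{Setup.} Write
$$
\widehat{\mathcal{S}}_D^{\omega\alpha_0,\omega}=\mathcal{S}_D^{0,0}+\mathcal{P}^\omega,\qquad \mathcal{P}^\omega[\bm\phi]:=\bm{M}(\omega)\int_{\partial D}\bm\phi\,\mathrm{d}\sigma,\qquad \bm{M}(\omega):=\bm{K}_0\ln\omega+\bm{K}_1 .
$$
This identity holds because $\bm K_0\ln\omega+\bm K_1$ is independent of $\bm x$. Let $V:=\mathrm{Ker}(\mathcal{S}_D^{0,0})$, with basis $\bm e_1,\bm e_2,\bm e_3$.
\begin{itemize}
\item Since $\mathcal{S}_D^{0,0}$ is self-adjoint and Fredholm of index zero, its range is $V^\perp$ (intersected with $H^1$).
\item By Theorem~\ref{thm:kernels00}, $\mathcal{S}_D^{0,0}$ restricted to $V^\perp$ is invertible onto its range; call this inverse $\mathcal{T}$.
\item The proof of Theorem~\ref{thm:kernels00} shows that no nonzero element of $V$ has zero mean. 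Hence the $3\times3$ matrix $\bm E$ with columns $\int_{\partial D}\bm e_k\,\mathrm{d}\sigma$ is invertible.
\end{itemize}

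\textbf{Reduction to a $3\times3$ system.} To solve $\widehat{\mathcal{S}}_D^{\omega\alpha_0,\omega}[\bm\phi]=\bm f$, split $\bm\phi=\sum_k c_k\bm e_k+\bm w$ with $\bm w\in V^\perp$, and set $\bm m:=\bm E\bm c+\int_{\partial D}\bm w\,\mathrm{d}\sigma$.
\begin{itemize}
\item Pairing the equation with each $\bm e_k$ annihilates $\mathcal{S}_D^{0,0}[\bm w]$. This gives the three scalar equations $\bm E^\top\bm M(\omega)\bm m=\big(\langle \bm f,\bm e_k\rangle\big)_k$.
\item Projecting onto $V^\perp$ with the orthogonal projection $Q$ gives $\bm w=\mathcal{T}\big[Q\bm f-Q(\bm M(\omega)\bm m)\big]$, since $Q$ applied to a constant vector is a finite-rank, $\omega$-independent map.
\end{itemize}
Substituting $\bm w$ into the definition of $\bm m$ and eliminating $\bm c$ reduces everything to a linear system $\bm{B}(\omega)\bm m=\bm F(\bm f)$. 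Here $\bm F$ is bounded and $\omega$-independent, and $\bm B(\omega)=\bm E^\top\bm M(\omega)=\bm E^\top(\bm K_0\ln\omega+\bm K_1)$, possibly after adding a constant matrix coming from $\mathcal{T}$ and $Q$. Every piece of $(\widehat{\mathcal{S}}_D^{\omega\alpha_0,\omega})^{-1}$ is then an $\omega$-independent bounded operator composed with $\bm B(\omega)^{-1}$.

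\textbf{Analyticity of $\bm B(\omega)^{-1}$.} By \eqref{eq:greensfunctiondifference}, $\bm K_0$ is diagonal with positive entries, so it is invertible. Therefore
$$
\bm B(\omega)^{-1}=(\ln\omega)^{-1}\big(\bm K_0+(\ln\omega)^{-1}\bm C\big)^{-1}\bm E^{-\top}
$$
for a constant matrix $\bm C$. For $|\omega|$ small, $|\ln\omega|$ is large, so the middle factor is given by a convergent Neumann series in $(\ln\omega)^{-1}$. This shows that $(\widehat{\mathcal{S}}_D^{\omega\alpha_0,\omega})^{-1}$ is holomorphic in $\omega$ on a small disc around $0$ with the branch cut of $\ln$ removed (for instance, the right half-disc containing the physical frequencies). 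It is also uniformly bounded there, and it converges as $\omega\to0$ to the operator obtained by dropping the $(\ln\omega)^{-1}$ terms.

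\textbf{Main obstacle.} The hardest step is the final interpretation. Because of the factor $(\ln\omega)^{-1}$, the inverse cannot be a genuine power series in $\omega$ at $\omega=0$. I would therefore prove the statement in this precise sense: holomorphic on a slit neighbourhood of $0$, with a continuous, bounded extension to $\omega=0$. This is the form in which \cite[Lemma~3.9]{ammari2022exceptional} is used, and it is what Gohberg--Sigal theory requires later. The assumption $|\alpha_0|^2<\rho/(\lambda+2\mu)$ enters only through the validity of the expansion \eqref{eq:greensfunctiondifference} defining $\bm K_0$ and $\bm K_1$.
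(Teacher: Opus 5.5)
Your argument is correct, but it takes a different route from the paper, which gives no argument of its own and only defers to \cite[Lemma~3.9]{ammari2022exceptional}.

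\textbf{A sharpening of your reduction.} The three pairing equations alone give $\bm M(\omega)\bm m=\bm E^{-\top}\bigl(\langle\bm f,\bm e_k\rangle\bigr)_k$, and the right-hand side does not depend on $\omega$. Hence $\bm w=\mathcal{T}[\bm f-\bm M(\omega)\bm m]$ is $\omega$-independent as well, and $\bm B(\omega)=\bm E^{\top}\bm M(\omega)$ exactly. The relation $\bm m=\bm E\bm c+\int_{\partial D}\bm w\,\mathrm{d}\sigma$ merely determines $\bm c$: nothing is eliminated and no constant matrix appears. Set $\mathcal{J}\bm c:=\sum_k c_k\bm e_k$ and $\mathcal{J}^{*}\bm f:=(\langle\bm f,\bm e_k\rangle)_k$. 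You then get
\[
(\widehat{\mathcal{S}}_D^{\omega\alpha_0,\omega})^{-1}=\mathcal{R}_0+\mathcal{J}\,\bm E^{-1}(\bm K_0\ln\omega+\bm K_1)^{-1}\bm E^{-\top}\mathcal{J}^{*},
\]
with $\mathcal{R}_0$ independent of $\omega$.

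\textbf{Consequence for the statement.} $\mathcal{J}$ is injective, and $\mathcal{J}^{*}$ is onto $\mathbb{C}^3$ already on constant vectors, where $\mathcal{J}^{*}\bm a=\bm E^{\top}\bm a$. So the second term is a nonzero rank-three operator that changes when $\ln\omega$ is continued around the origin. Therefore no holomorphic, or even single-valued, extension to a full neighbourhood of $\omega=0$ exists. Your reading of the lemma is the correct form of the statement, not a weakness of your proof: holomorphic on a slit disc, analytic in $(\ln\omega)^{-1}$, bounded and continuous up to $\omega=0$. The literal claim cannot be proved.

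\textbf{Comparison.} The paper's citation buys brevity but leaves this issue unexamined. Your route is self-contained and exhibits the exact form of the singularity. It needs only Theorem~\ref{thm:kernels00}, the invertibility of $\bm K_0$, and Fredholmness of index zero of $\mathcal{S}_D^{0,0}$. You should justify the last point from the standing invertibility of $\mathcal{S}_D^0$ and the compactness of $\mathcal{S}_D^{0,0}-\mathcal{S}_D^0$.

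\textbf{Two cautions about your closing remarks.}

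First, the slit-disc version is not what Gohberg--Sigal theory requires. The generalized Rouch\'e theorem needs holomorphy on a neighbourhood of a contour and its interior, and a slit disc contains no contour enclosing $\omega=0$. This is exactly the obstruction the paper itself mentions.

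Second, your formula shows that the expansion \eqref{eq:falpha0expansion} used immediately afterwards cannot hold with an $\mathcal{O}(\omega)$ first correction. For $j=1,2,3$ one has $\sum_i\langle\bm\psi_i^{(j)},\bm e_k\rangle=(\bm E)_{jk}$, so the rank-three term does not annihilate the translational modes. For instance, a constant vector $\bm a$ is mapped to $\mathcal{J}\bm E^{-1}(\bm K_0\ln\omega+\bm K_1)^{-1}\bm a$. One therefore only obtains $\bm f_i^{(j),\omega\alpha_0,\omega}=\bm f_{i,0}^{(j)}+\mathcal{O}(1/|\ln\omega|)$.
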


For $\alpha\in Y^*$, we define $\bm{f}_i^{(j),\alpha,\omega}:=(\widehat{\mathcal{S}}_D^{\alpha,\omega})^{-1}[\bm{\psi}_i^{(j)}]$. Since  $(\widehat{\mathcal{S}}_D^{\omega\alpha_0,\omega})^{-1}$ is holomorphic in $\omega$ in a neighborhood of $\omega=0$, it admits a Taylor expansion, and hence
\begin{equation}\label{eq:falpha0expansion}
    \bm{f}_i^{(j),\omega\alpha_0,\omega} = \bm{f}_{i,0}^{(j)} + \omega\hat{\bm{f}}_{i,1}^{(j),\alpha_0} + \mathcal{O}(\omega^2).
\end{equation}

In analogy with the quasi-periodic case, we define
\begin{equation}\label{eq:chat0}
\displaystyle\widehat{\bm{C}}_{i,i',0}^{(j,j')}:=-\int_{\partial D}\bm{f}_{i,0}^{(j)}\cdot\bm{\psi}_{i'}^{(j')}\mathrm{d}\sigma.
\end{equation}

The logarithmic singularity in $\mathcal{S}_D^{\omega\alpha_0,\omega}$ prevents a direct application of the Gohberg--Sigal theory to the perturbation analysis of the characteristic values of $\mathcal{A}_{\bm{\delta}}^{\alpha,\omega}$ defined in \eqref{eq:a} in the limit $\bm\delta\to 0$.
Instead, we reformulate the original problem \eqref{eq:elas_sys} as $\widehat{\mathcal{A}}_{\bm{\delta}}(\alpha,\omega)\bm{\eta} = 0$, where the operator $\widehat{\mathcal{A}}_{\bm{\delta}}(\alpha,\omega): (H^1(\partial D))^3\rightarrow (L^2(\partial D))^3$ is defined by
$$
  \widehat{\mathcal{A}}_{\bm{\delta}}(\alpha,\omega) = \Big(-\frac{1}{2}\mathcal{I} + (\mathcal{K}_D^{\bm{\tau}\omega})^*\Big)(\mathcal{S}_D^{\bm{\tau}\omega})^{-1} - \Big(\frac{1}{2}\mathcal{I}+(\mathcal{K}_D^{\alpha,\omega})^*\Big)\bm{\delta}(\mathcal{S}_D^{\alpha,\omega})^{-1}.
$$
Lemma~\ref{lem:holomorphic} allows us to establish results analogous to those in Theorem~\ref{thm:resonantfreq}, with $\widehat{\bm{C}}_{i,i'}^{(j,j')}$ replaced by the modified coefficients $\widehat{\bm{C}}_{i,i',0}^{(j,j')}$.

\begin{theorem}\label{thm:resonantfreq0}
Let $\alpha = \omega\alpha_0$, where $\alpha_0$ is independent of $\omega$ and satisfies $|\alpha_0|^2<\rho/(\lambda+2\mu)$.
Let $\mathfrak{M}\in\mathbb{R}^{6N\times 6N}$ be defined as in Theorem~\ref{thm:resonantfreq}, and define $\widehat{\mathfrak{C}}_0\in\mathbb{C}^{6N\times 6N}$ by $\widehat{\mathfrak{C}}_{6(i-1)+j, 6(i'-1)+j',0} = \widehat{\bm{C}}_{i,i',0}^{(j,j')}$. Then both $\mathfrak{M}$ and $\widehat{\mathfrak{C}}_0$ are positive definite.

Define $\uptau:=\mathrm{diag}(\tau_1^2,\cdots, \tau_N^2)\in\mathbb{R}^{6N\times 6N}$ and $\updelta:=\mathrm{diag}(\delta_1, \cdots, \delta_N)\in\mathbb{R}^{6N\times 6N}$, where each $\tau_i^2$ and $\delta_i$ is repeated six times along the diagonal. Let $\lambda^{\alpha}_{k,0}$, $k=1,\ldots,6N$, denote the eigenvalues of the matrix  $\widehat{\mathfrak{C}}_0\updelta(\mathfrak{M}\uptau)^{-1}$. Then the subwavelength resonant frequencies $\omega_{k,0}^{\alpha} = \omega_{k,0}^{\alpha}(\bm{\delta})$ associated with $\mathcal{A}_{\bm{\delta}}^{\alpha\omega_0,\omega}$ satisfy
$$
\omega^{\alpha}_{k,0} := (\lambda^{\alpha}_{k,0}/\rho)^{1/2} + \mathcal{O}(\omega^2 + \bm{\delta}) = \mathcal{O}(\bm{\delta}^{1/2}), \quad k=1, \ldots, 6N.
$$
\end{theorem}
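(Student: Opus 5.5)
We follow the Gohberg--Sigal strategy of Theorem~\ref{thm:resonantfreq}, applied to the reformulated operator $\widehat{\mathcal{A}}_{\bm\delta}(\omega\alpha_0,\omega)$ instead of $\mathcal{A}_{\bm\delta}^{\alpha,\omega}$. The reformulation is needed because of the logarithmic singularity of $\mathcal{S}_D^{\omega\alpha_0,\omega}$.

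\textbf{Step 1: holomorphy and the limit operator.} Write $\alpha=\omega\alpha_0$. By \eqref{eq:singlelayeralphaomega} and Lemma~\ref{lem:holomorphic}, $(\mathcal{S}_D^{\omega\alpha_0,\omega})^{-1}$ equals $(\widehat{\mathcal{S}}_D^{\omega\alpha_0,\omega})^{-1}$ plus a holomorphic $\mathcal{O}(\omega)$ correction. Since $\bm G^{\omega\alpha_0,\omega}$ and $\widehat{\bm G}^{\omega\alpha_0,\omega}$ differ only by $\mathcal{O}(\omega)$ terms, $(\mathcal{K}_D^{\omega\alpha_0,\omega})^*$ equals $(\widehat{\mathcal{K}}_D^{0,0})^*$ plus $\mathcal{O}(\omega)$. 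The interior operators $\mathcal{S}_D^{\bm\tau\omega}$ and $\mathcal{K}_D^{\bm\tau\omega}$ are expanded by \eqref{eq:expansionsingle}, using the invertibility of $\mathcal{S}_D^0$. Hence $\widehat{\mathcal{A}}_{\bm\delta}$ is holomorphic in $\omega$ near $0$, and at $\omega=0$, $\bm\delta=0$ it reduces to $\widehat{\mathcal{A}}_0:=(-\tfrac12\mathcal{I}+(\mathcal{K}_D^0)^*)(\mathcal{S}_D^0)^{-1}$. The kernel of $\widehat{\mathcal{A}}_0$ is $\mathcal{S}_D^0[\mathrm{Ker}(-\tfrac12\mathcal{I}+(\mathcal{K}_D^0)^*)]=\Psi$, which has dimension $6N$, and $\widehat{\mathcal{A}}_0$ is Fredholm of index zero. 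By the Gohberg--Sigal generalized Rouché theorem, there are exactly $12N$ characteristic values near $0$ (counted with multiplicity, in $\pm$ pairs), and they tend to $0$ as $\bm\delta\to0$.

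\textbf{Step 2: reduction to a matrix problem.} Take the ansatz $\bm\eta=\sum_{i,j}c_i^{(j)}\bm\psi_i^{(j)}+\mathcal{O}(\omega^2+\bm\delta)$. Then test $\widehat{\mathcal{A}}_{\bm\delta}\bm\eta=0$ against each $\bm\psi_{i'}^{(j')}$ in $L^2(\partial D)$. The testing identities are as follows.
\begin{itemize}
\item For the interior term, $(-\tfrac12\mathcal{I}+\mathcal{K}_D^0)\bm\psi=0$ kills the order-zero part, and $\mathcal{K}_{D,1}^0=0$ by \eqref{eq:potentialsingle} kills the first-order part.
\item The $\omega^2$ contribution is $\int_{\partial D}(\mathcal{K}_{D,2}^0)^*(\mathcal{S}_D^0)^{-1}\bm\psi_i^{(j)}\cdot\bm\psi_{i'}^{(j')}$. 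By Lemma~\ref{lem:aux1} this equals $-\rho\tau_i^2\bm M_{i,i'}^{(j,j')}$, up to contributions from the $\omega$-terms of $\mathcal{S}_D^{\bm\tau\omega}$. Those contributions pair with $\Psi$ to zero, again because $(-\tfrac12\mathcal{I}+\mathcal{K}_D^0)\bm\psi=0$.
\item For the exterior term, use $\int_{\partial D}(\tfrac12\mathcal{I}+(\widehat{\mathcal{K}}_D^{0,0})^*)\bm f\cdot\bm\psi=\int_{\partial D}\bm f\cdot\bm\psi$. This holds because $(-\tfrac12\mathcal{I}+\widehat{\mathcal{K}}_D^{0,0})$ annihilates $\Psi$, which is the analogue of the lemma on $\Psi$. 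Combined with \eqref{eq:falpha0expansion}, the exterior term yields $-\delta_i\widehat{\bm C}_{i,i',0}^{(j,j')}$, with the $\omega\hat{\bm f}_{i,1}$ correction absorbed into the error.
\end{itemize}
The result is the system $(\rho\omega^2\mathfrak{M}\uptau-\widehat{\mathfrak{C}}_0\updelta)\bm c=\mathcal{O}(\omega^3+\omega\bm\delta+\dots)$. It gives $\rho\omega^2=\lambda_{k,0}^\alpha+\mathcal{O}(\omega^2\cdot\,)$, hence the stated formula, and $\omega=\mathcal{O}(\bm\delta^{1/2})$.

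\textbf{Step 3: positive definiteness.} The matrix $\mathfrak{M}$ is a Gram matrix of linearly independent rigid motions on disjoint domains, so it is positive definite. For $\widehat{\mathfrak{C}}_0$, let $\bm f=\sum c_i^{(j)}\bm f_{i,0}^{(j)}$ and $v=\widehat{\mathcal{S}}_D^{0,0}[\bm f]$. Then $-\bar{\bm c}^\top\widehat{\mathfrak{C}}_0\bm c=\int_{\partial D}\bm f\cdot\overline{v}$. Green's identity on $Y\setminus\overline D$, using the jump relations and the periodicity of $\bm G^{0,0}$ (boundary terms on $\partial Y$ cancel), expresses this as minus the elastic energy of $v$ in $Y\setminus\overline D$. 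The constant matrix $\bm K_0\ln\omega+\bm K_1$ contributes a term $\big(\int\bm f\big)^\top(\bm K_0\ln\omega+\bm K_1)\int\bm f$ of definite sign for small $\omega$, since $\bm K_0$ is positive definite and $\ln\omega<0$.

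\textbf{Main obstacle.} This step is the main obstacle. The energy form alone vanishes on translations, and Theorem~\ref{thm:kernels00} shows $\mathcal{S}_D^{0,0}$ has a three-dimensional kernel made of densities with nonzero mean. One must check that the $\ln\omega$ term compensates exactly on those directions. The plan is to split $\bm f$ into its mean-zero part, on which $\mathcal{S}_D^{0,0}$ is invertible and strict positivity follows from Korn's inequality and uniqueness of the Dirichlet problem, and the three-dimensional complement, controlled by $\bm K_0$. We then interpret $\widehat{\mathfrak{C}}_0$ through its leading behaviour, consistent with \eqref{eq:falpha0expansion}.
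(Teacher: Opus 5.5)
Your Steps~1--2 are essentially the paper's proof. The paper works with the densities $(\bm\psi,\bm\phi)$ rather than the common trace $\bm\eta$, writes $\bm\psi=\sum c_i^{(j)}\bm f_i^{(j)}+\mathcal{O}(\omega^2+\bm\delta)$, and tests against $\bm\psi_{i'}^{(j')}$, using Lemma~\ref{lem:aux1} for the interior and \eqref{eq:falpha0expansion} for the exterior, to reach the same matrix problem. It performs no Rouch\'e count and gives no argument at all for positive definiteness.

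\textbf{The gap: Step~3 is false, not just hard.} For $\widehat{\mathfrak C}_0$ as defined through the limits $\bm f_{i,0}^{(j)}$, positive definiteness fails.
\begin{enumerate}
\item Put $A_\omega:=\bm K_0\ln\omega+\bm K_1$. It is invertible for small $\omega$ because $\bm K_0$ is, with $\|A_\omega^{-1}\|=\mathcal{O}(1/|\ln\omega|)$.
\item Put $\bm e_l:=\sum_{i=1}^N\bm\psi_i^{(l)}$, $l=1,2,3$; this is a common translation of all inclusions. Let $\widehat{\mathcal S}_D^{\omega\alpha_0,\omega}[\bm f_\omega]=\bm e_l$.
\item Pair with $\bm\phi\in\mathrm{Ker}\,\mathcal S_D^{0,0}$ and use self-adjointness of $\mathcal S_D^{0,0}$. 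This gives $(A_\omega\int_{\partial D}\bm f_\omega)\cdot\overline{\int_{\partial D}\bm\phi}=\bm e_l\cdot\overline{\int_{\partial D}\bm\phi}$.
\item By Theorem~\ref{thm:kernels00}, the kernel is three-dimensional and its nonzero elements have nonzero mean, so these means fill $\mathbb{C}^3$. Hence $A_\omega\int_{\partial D}\bm f_\omega=\bm e_l$, and then $\mathcal S_D^{0,0}[\bm f_\omega]=0$. So $\bm f_\omega$ is the kernel element with mean $A_\omega^{-1}\bm e_l=\mathcal{O}(1/|\ln\omega|)$.
\item Therefore $\sum_i\bm f_{i,0}^{(l)}=\lim_{\omega\to0^+}\bm f_\omega=0$. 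Consequently $\bm c^\top\widehat{\mathfrak C}_0=0$ for the three nonzero vectors $\bm c$ with $c_i^{(l)}=1$ for all $i$ and all other entries zero.
\end{enumerate}
So the $\ln\omega$ term cannot compensate in the limit. Its blow-up is exactly what forces every $\bm f_{i,0}^{(j)}$ to have zero mean, and it removes the very directions you wanted it to control. Your energy argument (Green's identity on mean-zero densities, Korn, Dirichlet uniqueness) proves the correct statement instead: $\widehat{\mathfrak C}_0$ is positive semi-definite, with kernel spanned by the three uniform translations. This matches the three constraints in Lemma~\ref{lem:multiplicity}, i.e.\ the acoustic branches.

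\textbf{Consequences for Steps~1--2.}
\begin{itemize}
\item Along those three directions $\lambda_{k,0}^{\alpha}=0$, so the leading-order relation of Step~2 carries no information there, and the discarded corrections determine those frequencies.
\item For $\omega>0$, the $\omega$-dependent matrix built from $(\widehat{\mathcal S}_D^{\omega\alpha_0,\omega})^{-1}[\bm\psi_i^{(j)}]$ is only of size $1/|\ln\omega|$ on these directions. No uniform positivity is available there either.
\item The same computation shows that $(\widehat{\mathcal S}_D^{\omega\alpha_0,\omega})^{-1}$ depends on $\omega$ through $(\bm K_0\ln\omega+\bm K_1)^{-1}$, i.e.\ through $1/\ln\omega$. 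Your Step~1 holomorphy (needed for Gohberg--Sigal) and the Step~2 $\mathcal{O}(\omega\bm\delta)$ remainder come from Lemma~\ref{lem:holomorphic} and \eqref{eq:falpha0expansion}. They are really expansions in $1/\ln\omega$ and must be handled as such.
\end{itemize}

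In short, Steps~1--2 are on a par with the paper, but Step~3 cannot be completed as stated. The most one can prove is positive semi-definiteness with this explicit three-dimensional kernel.
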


\begin{proof}
When $\omega$ is a resonant frequency, the following system admits a nontrivial solution $\bm{\psi}, \bm{\phi}\in (H^{-1/2}(\partial D))^3$:
    \begin{equation}\label{eq:bandsys1}
    \left\{
    \begin{aligned}
        &\mathcal{S}_D^0[\bm{\psi}] - \mathcal{S}_D^{\bm{\alpha},0}[\bm{\phi}] = \mathcal{O}(\omega^2), \\
        &\Big(-\frac{1}{2}\mathcal{I} + (\mathcal{K}_D^0)^* + \omega^2(\mathcal{K}_{D,2}^0)^*\Big)[\bm{\psi}] -\Big(\frac{1}{2}\mathcal{I} + (\mathcal{K}_D^{\bm{\alpha},0})^*\Big)\bm{\delta}[\bm{\phi}] = \mathcal{O}(\omega^3 + \bm{\delta}\omega^2).
    \end{aligned}
    \right.
    \end{equation}
   Expanding the second equation in \eqref{eq:bandsys1}, we obtain
    \begin{equation}\label{eq:omega0-1}
        \Big(-\frac{1}{2}\mathcal{I}+(\mathcal{K}_D^0)^*+\omega^2(K_{D,2}^0)^*+\mathcal{O}(\omega^3)\Big)[\bm{\psi}] - \Big(\frac{1}{2}\mathcal{I}+(\mathcal{K}_D^{0,0})^* + \omega^2(\mathcal{K}_{D,1}^{0,0})^* + \mathcal{O}(\omega^4)\Big)\bm{\delta}[\bm{\phi}] = 0.
    \end{equation}
    Letting $\omega\to 0$ and $\bm{\delta}\to 0$ in the above relation, we conclude that the leading-order term of $\bm{\psi}$ lies in the span of ${\bm{f}_i^{(j)}}$, which forms a basis of $\mathrm{Ker}(-\frac{1}{2}\mathcal{I}+(\mathcal{K}_D^0)^*)$. Consequently, we may write $\bm{\psi} = \sum_{i,j} c_i^{(j)}\bm{f}_i^{(j)} + \mathcal{O}(\omega^2+\bm{\delta})$.

    Taking the inner product on $\partial D$ of both sides of \eqref{eq:omega0-1} with $\bm{\psi}_{i'}^{j'}$, and using Lemma~\ref{lem:aux1}, we obtain
    \begin{equation}\label{eq:omega0-2}
        \omega^2\int_{\partial D} (\mathcal{K}_{D,2}^0)^*\Big(\sum\limits_{i,j} c_i^{(j)}\bm{f}_i^{(j)}\Big)\cdot\bm{\psi}_{i'}^{(j')}\mathrm{d}\sigma = -\omega^2\rho\tau_{i'}^2\sum\limits_{i,j} c_i^{(j)}\int_D \bm{\psi}_i^{(j)}\cdot\bm{\psi}_{i'}^{(j')}\mathrm{d}\bm{x}.
    \end{equation}
    From \eqref{eq:expansionsingle} and \eqref{eq:potentialsingle}, we further obtain the expansion
    \begin{equation*}
        \mathcal{S}_D^{\bm{\tau}\omega}[\bm{f}_i^{(j)}] = \bm{\psi}_i^{(j)} - \sum\limits_{k=1}^N \tau_k\omega\frac{\ii\alpha_1}{12\pi}\Big(\frac{2}{c_s}+\frac{1}{c_p}\Big)\int_{\partial D_k}\bm{f}_i^{(j)}(\bm{y})\mathrm{d}\sigma(\bm{y})\chi_{\partial D_k},
    \end{equation*}
    and hence
    \begin{equation*}
    \begin{aligned}
        &\mathcal{S}_D^{\omega\alpha_0,\omega}[\bm{\phi}] = \mathcal{S}_D^{\bm{\tau}\omega}[\bm{\psi}] = \sum\limits_{i,j} c_i^{(j)}\mathcal{S}_D^{\bm{\tau}\omega}[\bm{f}_i^{(j)}] + \mathcal{O}(\omega^2+\bm{\delta}) \\
        &= \sum\limits_{i,j} c_i^{(j)}\bigg(\bm{\psi}_i^{(j)} - \frac{\ii\alpha_1}{12\pi}\Big(\frac{2}{c_s}+\frac{1}{c_p}\Big)\sum\limits_{k=1}^N \tau_k\omega\int_{\partial D_k} \bm{f}_i^{(j)}(\bm{y})\mathrm{d}\sigma(\bm{y})\chi_{\partial D_k}\bigg) +\mathcal{O}(\omega^2+\bm{\delta}).
    \end{aligned}
    \end{equation*}

From \eqref{eq:singlelayeralphaomega}, together with the Neumann series expansion and \eqref{eq:falpha0expansion}, we obtain
    \begin{align*}
        (\mathcal{S}_D^{\omega\alpha_0,\omega})^{-1}[\bm{\psi}_i^{(j)}] &= (\widehat{\mathcal{S}}_D^{\omega\alpha_0,\omega})^{-1}[\bm{\psi}_i^{(j)}] - \omega(\widehat{\mathcal{S}}_D^{\omega\alpha_0,\omega})^{-1}\mathcal{S}_1^{\alpha_0}(\widehat{\mathcal{S}}_D^{\omega\alpha_0,\omega})^{-1}[\bm{\psi}_i^{(j)}] + \mathcal{O}(\omega^2), \\
        &= \bm{f}_i^{(j),\omega\alpha_0,\omega} - \omega(\widehat{\mathcal{S}}_D^{\omega\alpha_0,\omega})^{-1}\mathcal{S}_1^{\alpha_0}[\bm{f}_i^{(j),\omega\alpha_0,\omega}] + \mathcal{O}(\omega^2) \\
        &= \bm{f}_{i,0}^{(j)} + \omega\widehat{\bm{f}}_{i,1}^{(j),\alpha_0} - \omega(\widehat{\mathcal{S}}_D^{\omega\alpha_0,\omega})^{-1}\mathcal{S}_1^{\alpha_0}[\bm{f}_i^{(j),\omega\alpha_0,\omega}] + \mathcal{O}(\omega^2) \\
        &= \bm{f}_{i,0}^{(j)} + \omega\bm{f}_{i,1}^{(j),\alpha_0}+\mathcal{O}(\omega^2).
    \end{align*}
    Hence
    \begin{align}\label{eq:alpha0-phi}
    \bm{\phi} &= (\mathcal{S}_D^{\omega\alpha_0,\omega})^{-1}\Bigg[\sum\limits_{i,j} c_i^{(j)}\bigg(\bm{\psi}_i^{(j)} - \frac{\ii\alpha_1}{12\pi}\Big(\frac{2}{c_s}+\frac{1}{c_p}\Big)\notag\\
    &\hspace{2cm} \times\sum\limits_{k=1}^N \tau_k\omega\int_{\partial D_k} \bm{f}_i^{(j)}(\bm{y})\mathrm{d}\sigma(\bm{y})\chi_{\partial D_k}\bigg)\Bigg] + \mathcal{O}(\omega^2+\bm{\delta})\notag \\
    &= \sum\limits_{i,j} c_i^{(j)}\bigg((\mathcal{S}_D^{\omega\alpha_0,\omega})^{-1}[\bm{\psi}_i^{(j)}] - \frac{\ii\alpha_1}{12\pi}\Big(\frac{2}{c_s}+\frac{1}{c_p}\Big)\notag\\
    &\hspace{2cm}\times\sum\limits_{k=1}^N \tau_k\omega\int_{\partial D_k} \bm{f}_i^{(j)}(\bm{y})\mathrm{d}\sigma(\bm{y})(\mathcal{S}_D^{\omega\alpha_0,\omega})^{-1}[\chi_{\partial D_k}]\bigg) + \mathcal{O}(\omega^2+\bm{\delta})\notag \\
    &= \sum\limits_{i,j} c_i^{(j)}\bigg(\bm{f}_{i,0}^{(j)} + \omega\bm{f}_{i,1}^{(j),\alpha_0} - \omega\frac{\ii\alpha_1}{12\pi}\Big(\frac{2}{c_s}+\frac{1}{c_p}\Big)\notag\\
    &\hspace{2cm}\times\sum\limits_{k=1}^N \Big(\tau_k\int_{\partial D_k} \bm{f}_i^{(j)}(\bm{y})\mathrm{d}\sigma(\bm{y})\sum\limits_{l=1}^3\bm{f}_{k,0}^{(l)}\Big)\bigg)+\mathcal{O}(\omega^2+\bm{\delta}).
    \end{align}
    Notice that $\widehat{\mathcal{K}}_D^{\omega\alpha_0,\omega} = \mathcal{K}_D^{0,0}$, and that $\bm{\psi}_{i}^{(j)}\in\mathrm{Ker}(-\frac{1}{2}\mathcal{I}+\mathcal{K}_D^{0,0})$. Therefore,
    \begin{align}\label{eq:omega0-3}
        &\int_{\partial D} (\frac{1}{2}I+(\mathcal{K}_D^{0,0})^*+\omega^2(\mathcal{K}_{D,1}^{0,0})^*+\mathcal{O}(\omega^4))\bm{\delta}[\bm{\phi}]\cdot\bm{\psi}_{i'}^{(j')}\mathrm{d}\sigma  \notag\\
        &=\sum\limits_{k=1}^N\int_{\partial D_k} \delta_k\bm{\phi}\cdot\bm{\psi}_{i'}^{(j')}\mathrm{d}\sigma+\mathcal{O}(\omega^2\bm{\delta})\notag\\
        &=\sum\limits_{k=1}^N\sum_{i,j}\delta_kc_i^{(j)}\int_{\partial D_k} \bm{f}_{i,0}^{(j)}\cdot\bm{\psi}_{i'}^{(j')}\mathrm{d}\sigma +\mathcal{O}(\omega^2\bm{\delta})\notag\\
        &= \sum\limits_{i,j}\delta_{i'}c_i^{(j)}\int_{\partial D_k} \bm{f}_{i,0}^{(j)}\cdot\bm{\psi}_{i'}^{(j')}\mathrm{d}\sigma+\mathcal{O}(\omega^2\bm{\delta}).
    \end{align}
Equations \eqref{eq:omega0-2} and \eqref{eq:omega0-3} determine the leading-order contributions in the inner product of \eqref{eq:omega0-1} with $\bm{\psi}_{i'}^{(j')}$ on $\partial D$. The result then follows directly.
\end{proof}

It is worth noting that in the case $d_l=2$, the leading-order term in the expansion of $\bm{c}_0$ behaves like $\omega^{-1}$.  Hence, the remainder term $\mathcal{O}(\omega\ln\omega)$ in \eqref{eq:greensfunctiondifference} becomes $\mathcal{O}(\omega\omega^{-1})=\mathcal{O}(1)$. This introduces an additional contribution in $\bm{G}^{\omega\bm{\alpha}_0,\omega} - \bm{G}^{0,0}$, and the resulting singularity in $\mathcal{S}_D^{\omega\bm{\alpha}_0,\omega}$ is of order $\omega^{-1}$.  In contrast, when $d_l=3$, the leading-order term in the expansion of $\bm{c}_0$ is of order $\omega^{-2}$, and accordingly the singularity of $\mathcal{S}_D^{\omega\bm{\alpha}_0,\omega}$ becomes $\omega^{-2}$.

Analogously to Proposition \ref{prop:normalalpha-field}, the total field in the case $\alpha=\omega\alpha_0$ is given by
\begin{equation*}
\bm{u} = \left\{
\begin{aligned}
&\mathcal{S}_D^{\omega\alpha_0,\omega}[\bm{\phi}](\bm{x}), && \bm{x}\in Y \backslash \bar{D}, \\
&\mathcal{S}_D^{\tau_i\omega}[\bm{\psi}](\bm{x}), && \bm{x}\in D_i.
\end{aligned}
\right.
\end{equation*}
When $d_l=1$, only the leading-order terms of $\bm{\phi}$ and $\bm{\psi}$ are required in the reduced system for $\bm{u}$, which is similar to Proposition \ref{prop:normalalpha-field}. In contrast, when $d_l=2$, the $\omega^{-1}$ singularity of $\mathcal{S}_D^{\omega\bm{\alpha}_0,\omega}$ requires retaining the $\mathcal{O}(\omega)$ correction term in $\bm{\phi}$, as explicitly derived in \eqref{eq:alpha0-phi}. Finally, when $d_l=3$, the stronger $\omega^{-2}$ singularity requires inclusion of terms up to order $\mathcal{O}(\omega^2)$ in the expansion of $\bm{\phi}$.

It is worth mentioning that there is no inconsistency between the definitions of $\widehat{\bm{C}}_{i,i',0}^{(j,j')}$ and $\widehat{\bm{C}}_{i,i'}^{(j,j')}$, due to the holomorphy of $(\mathcal{S}_D^{\omega\alpha_0,\omega})^{-1}$. More precisely, we have the following proposition.

\begin{proposition}
The coefficients $\widehat{\bm{C}}_{i,i'}^{(j,j')}$ for $\alpha\in Y^*\backslash\{0\}$ and their periodic counterpart $\widehat{\bm{C}}_{i,i',0}^{(j,j')}$ satisfy
$$
\widehat{\bm{C}}_{i,i',0}^{(j,j')} = \lim_{\alpha \to 0}\widehat{\bm{C}}_{i,i'}^{(j,j')}.
$$
\end{proposition}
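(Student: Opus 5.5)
The plan is to reduce the statement to a convergence statement for the densities $\bm f_{i,\mathrm{per}}^{(j)}=(\mathcal{S}_D^{\bm\alpha,0})^{-1}[\bm\psi_i^{(j)}]$ as $\bm\alpha\to0$. Since $\widehat{\bm C}_{i,i'}^{(j,j')}$ is a bounded linear functional of $\bm f_{i,\mathrm{per}}^{(j)}$, namely pairing with $\bm\psi_{i'}^{(j')}\in (H^{1/2}(\partial D))^3$, it suffices to show that $\bm f_{i,\mathrm{per}}^{(j)}\to\bm f_{i,0}^{(j)}$ in $(H^{-1/2}(\partial D))^3$, or at least weakly. Recall that $\bm f_{i,0}^{(j)}$ is the $\omega=0$ Taylor coefficient in \eqref{eq:falpha0expansion}, i.e. the limit of $(\widehat{\mathcal{S}}_D^{\omega\alpha_0,\omega})^{-1}[\bm\psi_i^{(j)}]$.

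The key observation is that the non-invertibility of $\mathcal{S}_D^{0,0}$ comes only from constant terms in the Green's function. As $\bm\alpha\to0$ with $\omega=0$, the lattice-sum representation of $\bm G^{\bm\alpha,0}$ (analogous to (SM2.15)--(SM2.18)) has a singular part from the zero dual-lattice mode. I expect it to have the form $\bm G^{\bm\alpha,0}=\bm G^{0,0}+\bm K(\bm\alpha)+o(1)$, uniformly on $\partial D-\partial D$. Here $\bm K(\bm\alpha)$ is a constant matrix that blows up, like $\ln|\bm\alpha|$ for $d_l=1$ and like $|\bm\alpha|^{-1}$ or $|\bm\alpha|^{-2}$ for $d_l=2,3$. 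Hence $\mathcal{S}_D^{\bm\alpha,0}=\mathcal{S}_D^{0,0}+\bm K(\bm\alpha)\int_{\partial D}\cdot\,\mathrm{d}\sigma+o(1)$. This has exactly the structure of $\widehat{\mathcal{S}}_D^{\omega\alpha_0,\omega}$, with $\bm K_0\ln\omega+\bm K_1$ replaced by $\bm K(\bm\alpha)$.

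I would then compute the inverse of $\mathcal{S}_D^{0,0}+\bm K\int$ by a Schur complement, using Theorem~\ref{thm:kernels00}. Split $(L^2(\partial D))^3=\mathrm{Ker}(\mathcal{S}_D^{0,0})\oplus\mathrm{Ker}(\mathcal{S}_D^{0,0})^\perp$, where the kernel is 3-dimensional and, by that proof, maps injectively to $\mathbb{R}^3$ under $\bm\phi\mapsto\int_{\partial D}\bm\phi\,\mathrm{d}\sigma$. Write $\bm f=\bm f^\perp+\bm k$ with $\bm k\in\mathrm{Ker}$, and solve $\mathcal{S}_D^{0,0}\bm f^\perp+\bm K(\int\bm f^\perp+\int\bm k)=\bm\psi_i^{(j)}$. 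Since $\bm K$ is invertible and its inverse tends to $0$, the constant vector $\bm c:=\bm K(\int\bm f)$ stays bounded. The limiting system is then the following: find $\bm f^\perp$, $\bm k$ and a constant $\bm c$ with $\mathcal{S}_D^{0,0}\bm f^\perp+\bm c=\bm\psi_i^{(j)}$ and $\int_{\partial D}\bm f=0$.

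The limit of $\bm f$ is independent of how $\bm K\to\infty$. It therefore coincides with the limit obtained from $\bm K_0\ln\omega+\bm K_1$ as $\omega\to0$, which is $\bm f_{i,0}^{(j)}$ by Lemma~\ref{lem:holomorphic}. The solvability and uniqueness of this limiting saddle-point system follow from the facts used in Theorem~\ref{thm:kernels00}: the constants $\bm\psi^{(1)},\bm\psi^{(2)},\bm\psi^{(3)}$ lie outside $\mathrm{Im}(\mathcal{S}_D^{0,0})$ and complement it, and $\int$ is injective on the kernel.

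The main obstacle is the uniform expansion of $\bm G^{\bm\alpha,0}$ near $\bm\alpha=0$. In particular, one must show that the remainder is $o(1)$ in operator norm from $(L^2)^3$ to $(H^1)^3$, including derivatives, and that the blow-up matrix $\bm K(\bm\alpha)$ has an inverse tending to zero, even though its direction may depend on $\bm\alpha/|\bm\alpha|$. I would first write the lattice sum and separate the zero dual-lattice term. The nonzero modes are smooth in $\bm\alpha$ and converge to their values in $\bm G^{0,0}$. Once this is done, a Neumann-series perturbation of the Schur-complement inverse gives norm convergence of $(\mathcal{S}_D^{\bm\alpha,0})^{-1}[\bm\psi_i^{(j)}]$, and passing to the limit in the pairing yields $\widehat{\bm C}_{i,i',0}^{(j,j')}=\lim_{\alpha\to0}\widehat{\bm C}_{i,i'}^{(j,j')}$.
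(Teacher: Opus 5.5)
Your argument is correct in the setting where $\widehat{\bm C}_{i,i',0}^{(j,j')}$ is actually defined ($d_l=1$, through $\widehat{\mathcal S}_D^{\omega\alpha_0,\omega}$). It follows a genuinely different route from the paper. The paper gives no proof beyond the remark that the identity holds ``due to the holomorphy of $(\mathcal S_D^{\omega\alpha_0,\omega})^{-1}$''. In effect it reads $\bm f_{i,0}^{(j)}$ off as the zeroth Taylor coefficient in \eqref{eq:falpha0expansion} and tacitly identifies it with $\lim_{\alpha\to0}\bm f_{i,\mathrm{per}}^{(j)}$. Holomorphy along $\alpha=\omega\alpha_0$ does not by itself relate that regime to the regime $\omega=0$, $\alpha\to0$, where the blow-up constant is $\bm K_0\ln|\alpha|$ plus a different constant, not $\bm K_0\ln\omega+\bm K_1(\alpha_0)$.

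Your key observation supplies the missing link: both regimes reduce to $S+\bm K J$, with $S:=\mathcal S_D^{0,0}$, $J[\bm\phi]:=\int_{\partial D}\bm\phi\,\mathrm{d}\sigma$ and $\bm K^{-1}\to0$, and the limit does not depend on $\bm K$. You can make this fully explicit. Since $\bm K J[\bm f]$ must be the unique constant $\bm c^*$ with $\bm\psi-\bm c^*\in\mathrm{Ran}\,S$, one gets $(S+\bm K J)^{-1}\bm\psi=\bm f^*+(J|_{\mathcal N})^{-1}\bm K^{-1}\bm c^*$. Here $\mathcal N=\mathrm{Ker}\,S$ and $(\bm f^*,\bm c^*)$ solves your saddle-point system. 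This formula gives three things at once:
\begin{itemize}
\item the uniform bound on the model inverses that your Neumann-series step needs, which you should state explicitly;
\item the $\omega\to0$ limit of $(\widehat{\mathcal S}_D^{\omega\alpha_0,\omega})^{-1}[\bm\psi_i^{(j)}]$ without Lemma~\ref{lem:holomorphic} (it also shows the convergence is in general only at rate $\mathcal O(1/|\ln\omega|)$);
\item a concrete identification of the limit: for a global translation $\bm\psi=\sum_i\bm\psi_i^{(k)}$, $k=1,2,3$, one has $\bm c^*=\bm\psi$ and $\bm f^*=0$, so $\sum_i\bm f_{i,0}^{(k)}=0$.
\end{itemize}

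There are two caveats. First, the paper only shows that each $\bm\psi^{(i)}$, $i=1,2,3$, lies outside $\mathrm{Ran}\,S$. Your saddle-point system needs the whole span of constants to meet $\mathrm{Ran}\,S$ trivially. This follows by running the same argument with an arbitrary constant vector, together with $\mathrm{codim}\,\mathrm{Ran}\,S=3$.

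Second, your expected expansion $\bm G^{\bm\alpha,0}=\bm G^{0,0}+\bm K(\bm\alpha)+o(1)$ holds only for $d_l=1$. For $d_l=2,3$, the zero dual-lattice mode multiplied by $e^{\ii\bm\alpha\cdot(\bm x-\bm y)}$ leaves terms linear in $\bm x-\bm y$ that depend on $\bm\alpha/|\bm\alpha|$. These are of size $\mathcal O(1)$ for $d_l=2$ and $\mathcal O(|\bm\alpha|^{-1})$ for $d_l=3$, so the remainder is not $o(1)$ and the Neumann-series step fails there. This does not affect the statement, since $\widehat{\bm C}_{i,i',0}^{(j,j')}$ is only defined for $d_l=1$.
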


\subsection{Subwavelength bandgaps}

According to Floquet--Bloch theory, for each fixed $\bm\alpha$, the periodic Lam\'{e} system admits a discrete spectrum
$$
    0\leq \omega_1(\alpha)\leq \omega_2(\alpha)\leq\cdots \rightarrow +\infty,
$$
and each spectral branch $\alpha\mapsto \omega_i(\alpha)$ is continuous. Define the spectral bands $$I_k:=[\min\limits_{\alpha\in Y^*} \omega_k(\alpha), \max\limits_{\alpha\in Y^*}\omega_k(\alpha)],$$
then the full spectrum is given by the union of all such bands, $\cup_{k=1}^\infty I_k$. In Theorem~\ref{thm:resonantfreq}, we have derived asymptotic approximations for the lowest $6N$ spectral bands. In the special case $\alpha=0$, we have the following result.

\begin{lemma}\label{lem:multiplicity}
When $\alpha=0$ and $\bm{\delta}\neq 0$, $\omega=0$ is a characteristic value of $\mathcal{A}_{\bm{\delta}}^{0,0}$ with multiplicity $6N-3$.
\end{lemma}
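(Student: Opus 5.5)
Since the logarithmic term makes $\mathcal{A}_{\bm\delta}^{0,\omega}$ non-holomorphic at $\omega=0$, I interpret "characteristic value of multiplicity $6N-3$" as follows: at $\omega=0$ the kernel of $\mathcal{A}_{\bm\delta}^{0,0}$ has dimension exactly $6N-3$. These kernel elements correspond to the branches of Bloch frequencies that tend to $0$ as $\alpha\to0$. Equivalently, among the $6N$ subwavelength branches, exactly three (the acoustic-type ones) collapse to zero at $\alpha=0$; the other $6N-3$ branches have $\omega=0$ as a genuine zero of the reduced system. The plan is to compute $\mathrm{Ker}\,\mathcal{A}_{\bm\delta}^{0,0}$ directly and then check that no generalized eigenvectors arise, using Theorem~\ref{thm:kernels00}.

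\textbf{Step 1 (kernel description).} At $\omega=0$ we have $\mathcal{S}_D^{\bm\tau\cdot0}=\mathcal{S}_D^0$. A pair $(\bm\psi,\bm\phi)$ lies in the kernel exactly when
\[
\mathcal{S}_D^0[\bm\psi]=\mathcal{S}_D^{0,0}[\bm\phi]
\quad\text{and}\quad
\bigl(-\tfrac12\mathcal{I}+(\mathcal{K}_D^0)^*\bigr)[\bm\psi]=\bigl(\tfrac12\mathcal{I}+(\mathcal{K}_D^{0,0})^*\bigr)\bm\delta[\bm\phi].
\]
Set $\bm u=\mathcal{S}_D^0[\bm\psi]$ in $D$ and $\bm u=\mathcal{S}_D^{0,0}[\bm\phi]$ in $Y\setminus\overline D$, extended periodically. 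Then $\bm u$ is a periodic solution of the static transmission problem with contrast $\delta_i$. Multiplying by $\bar{\bm u}$ and integrating by parts over $D$ and over $Y\setminus\overline D$, the periodic boundary terms cancel. This gives
\[
\sum_i\delta_i^{-1}E_{D_i}(\bm u)+E_{Y\setminus D}(\bm u)=0,
\]
where $E$ is the elastic energy. Hence the strain of $\bm u$ vanishes everywhere, so $\bm u$ is a rigid motion in $D_i$ and in the connected set $Y\setminus\overline D$. Continuity across $\partial D$ and periodicity force $\bm u$ to be a single global constant $\bm a\in\mathbb{R}^3$: a nonzero rotation is not periodic. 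Then $\partial_\nu\bm u|_\pm=0$, and the jump relations give
\[
\bigl(-\tfrac12\mathcal{I}+(\mathcal{K}_D^0)^*\bigr)[\bm\psi]=0,
\qquad
\bigl(\tfrac12\mathcal{I}+(\mathcal{K}_D^{0,0})^*\bigr)[\bm\phi]=0.
\]
Consequently $\bm\psi=\sum_i(\mathcal{S}_D^0)^{-1}[\bm a\chi_{\partial D_i}]$, and $\bm\phi$ satisfies $\mathcal{S}_D^{0,0}[\bm\phi]=\bm a$ on $\partial D$.

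\textbf{Step 2 (counting).} By the proof of Theorem~\ref{thm:kernels00}, constants are not in $\mathrm{Im}\,\mathcal{S}_D^{0,0}$, so $\bm a=0$. I therefore expect the relevant kernel to come instead from the $\bm\phi\in\mathrm{Ker}\,\mathcal{S}_D^{0,0}$ directions (three of them) together with the rigid-motion directions.

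This suggests the claimed multiplicity is best obtained as a count of branches rather than of a kernel. There are $6N$ branches in total, and we subtract the three for which $\omega(\alpha)\to0$ with nonzero limit ratio $\omega/|\alpha|$. So I would instead argue with Theorem~\ref{thm:resonantfreq} and Theorem~\ref{thm:resonantfreq0}. I would show that $\widehat{\mathfrak C}_0$ has a three-dimensional subspace on which the $\log\omega$ singularity, through $\bm K_0\ln\omega$ acting on $\int_{\partial D}\bm f$, pushes the corresponding eigenvalues away from $0$. On the complement of this subspace, of dimension $6N-3$, Gohberg--Sigal theory applied to $\widehat{\mathcal A}_{\bm\delta}$ counts zeros at $\omega=0$.

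\textbf{Main obstacle.} The main difficulty is making the multiplicity rigorous despite the logarithmic singularity. I would project onto the three-dimensional space $\mathrm{Ker}\,\mathcal{S}_D^{0,0}$ and its complement, which by Theorem~\ref{thm:kernels00} are identified via their means $\int_{\partial D}\bm\phi$. The operator restricted to the complement is holomorphic, and there I would apply the generalized Rouché theorem. On the three mean directions the operator behaves like $\ln\omega$ times an invertible $3\times3$ matrix, with $\bm K_0$ invertible, so these directions contribute no zero at $\omega=0$.
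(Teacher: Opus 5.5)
\textbf{There is a genuine gap}: neither route in your proposal yields $6N-3$.

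\textbf{Step~1 works against the claim.} Under your own interpretation (the exact kernel of $\mathcal{A}_{\bm\delta}^{0,0}$), Step~1 refutes the statement rather than proving it. Once $\bm a=0$ you have $\bm u\equiv 0$. Then $\mathcal{S}_D^0[\bm\psi]=0$ gives $\bm\psi=0$. Also $\mathcal{S}_D^{0,0}[\bm\phi]$ vanishes on both sides of $\partial D$, so the jump relation gives $\bm\phi=0$, and the kernel is trivial.

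The ``$\mathrm{Ker}\,\mathcal{S}_D^{0,0}$ directions'' you then hope for are excluded too. The interior traction on each $\partial D_i$ has zero resultant, so the transmission condition and the jump relation force $\int_{\partial D}\bm\phi\,\mathrm{d}\sigma=0$. But every nonzero element of $\mathrm{Ker}\,\mathcal{S}_D^{0,0}$ has nonzero mean. This zero-mean fact is also what Step~1 needs when $d_l=1$. There $Y$ is an unbounded slab, and without it $\mathcal{S}_D^{0,0}[\bm\phi]$ grows like $\ln|\bm x_\perp|$, so the energy identity picks up a divergent boundary term. Moreover, for $d_l=1$ the rotation about the lattice axis is periodic, so it is decay at infinity, not periodicity, that removes it.

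\textbf{The fallback branch count only asserts its decisive step.} First, the complement of $\mathrm{Ker}\,\mathcal{S}_D^{0,0}$ in $(L^2(\partial D))^3$ is infinite-dimensional, not $(6N-3)$-dimensional. Second, your description of the branches contradicts itself: three branches ``collapse to zero'', yet the other $6N-3$ are said to vanish at $\omega=0$. Third, the mechanism runs the wrong way. If $\mathcal{S}_D^{0,\omega}$ blows up like $\ln\omega$ times an invertible matrix on the three mean directions, then $(\mathcal{S}_D^{0,\omega})^{-1}$ maps constants to densities of size $\mathcal{O}(1/|\ln\omega|)$. The associated capacitance-type entries therefore tend to zero rather than being pushed away from it. On the complementary directions, Theorem~\ref{thm:resonantfreq0} gives a positive definite matrix, so nothing there forces a zero at $\omega=0$.

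\textbf{The missing idea} is the paper's counting argument, which works with leading-order terms rather than the exact kernel. By the proof of Theorem~\ref{thm:resonantfreq0} (letting $\omega,\bm\delta\to0$), the leading-order part of $\bm\psi$ is $\sum_{i,j}c_i^{(j)}\bm f_i^{(j)}$, a $6N$-parameter family. All of it is admissible when $\alpha\neq0$, because $\mathcal{S}_D^{\alpha,0}$ is invertible. At $\alpha=0$, the first row of \eqref{eq:A_F} requires $\sum_{i,j}c_i^{(j)}\bm\psi_i^{(j)}\in\mathrm{Ran}\,\mathcal{S}_D^{0,0}$. Since $\mathcal{S}_D^{0,0}$ is self-adjoint with three-dimensional kernel (Theorem~\ref{thm:kernels00}), this range has codimension three. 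The condition therefore imposes three linear constraints on the $c_i^{(j)}$, leaving $6N-3$. The constraints are independent: pairing the uniform translations with a basis of the kernel gives the matrix of the basis elements' means, which is invertible by the nonzero-mean property.

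You already had both ingredients, namely that constants are not in the range and that kernel elements are detected by their means. You used them to eliminate $\bm a$ instead of as constraints on the $6N$ rigid-motion coefficients. Because the paper's count is made at this leading order, an exact energy computation of $\mathrm{Ker}\,\mathcal{A}_{\bm\delta}^{0,0}$ at fixed $\bm\delta\neq0$, as in your Step~1, cannot reproduce it.
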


\begin{proof}
The multiplicity of the characteristic value $\omega=0$ equals the dimension of the solution space of \eqref{eq:A_F} with $\alpha=0$ and $\omega=0$. From the proof of Theorem~\ref{thm:resonantfreq0}, the leading-order term of $\bm{\psi}$, i.e., $\sum_{i,j} c_i^{(j)}\bm{f}_i^{(j)}$, characterizes the structure of the solution space. When $\alpha\neq 0$, the operator $\mathcal{S}_D^{\alpha,0}$ is invertible, and hence $\sum_{i,j} c_i^{(j)}\bm{f}_i^{(j)}$ spans a $6N$-dimensional space of admissible leading-order terms. In contrast, when $\alpha=0$, we have shown in Theorem~\ref{thm:kernels00} that $\mathcal{S}_D^{0,0}$ is not invertible and that ${\rm dim}{\rm Ker}(\mathcal{S}_D^{0,0})=3$. Equivalently, the codimension of $\operatorname{Ran}(\mathcal{S}_D^{0,0})$ is three, which imposes three independent constraints on the coefficients ${c_i^{(j)}}$. These constraints remove three degrees of freedom from the $6N$-dimensional parameter space, yielding a total multiplicity of $6N-3$ for the characteristic value $\omega=0$.
\end{proof}

According to the conclusion of Lemma~\ref{lem:multiplicity}, we group the first $6N-3$ eigenvalue branches together and define $I_{\rm bot}:=\cup_{k=1}^{6N-3}I_k$. In the following theorem, we show that a spectral gap exists between the band formed by the first $6N-3$ branches and the band corresponding to the remaining higher branches.

Combining the results of Theorems~\ref{thm:resonantfreq} and \ref{thm:resonantfreq0}, we define
\begin{equation*}
\tilde{\omega}_k^{\alpha}(\bm{\delta}) =
\left\{
\begin{aligned}
    & \sqrt{\lambda_k^{\alpha}/\rho}, && \alpha\in Y^*\backslash\{0\}, \\
    & \sqrt{\lambda_{k,0}^0/\rho}, && \alpha = 0.
\end{aligned}
\right.
\end{equation*}
Define $\omega^*(\bm{\delta}):=\max\limits_{\substack{\alpha\in Y^* \\ k=1,\cdots,6N}}\tilde{\omega}_k^{\alpha}(\bm{\delta})$. The proof of the following theorem is analogous to that of \cite[Theorem 3.2]{AMMARI20175610}.

\begin{theorem}
Let $\delta_M:=\max_{i=1}^N \delta_i$. For every $\epsilon>0$, there exist $\delta_0>0$ and $\tilde{\omega}>\omega^*+\epsilon$ such that
    $$
        [\omega^*+\epsilon, \tilde{\omega}]\subset \bigg[\max_{\substack{\alpha\in Y^* \\ k=1,\cdots, 6N}}\omega_k^{\alpha},\,\,  \min_{\alpha\in Y^*} \omega_{6N+1}^{\alpha}\bigg]
    $$
    for $\delta_M<\delta_0$.
\end{theorem}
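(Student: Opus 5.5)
The plan follows \cite[Theorem 3.2]{AMMARI20175610}. We need two facts. First, the lowest $6N$ bands converge uniformly in $\alpha$ to their leading-order values $\tilde\omega_k^\alpha$. Second, the $(6N+1)$-st band stays bounded below by a constant independent of $\bm\delta$. Since $\omega^*(\bm\delta)=\mathcal{O}(\bm\delta^{1/2})$ by Theorems~\ref{thm:resonantfreq} and \ref{thm:resonantfreq0}, the gap then opens for small $\delta_M$.

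\textbf{Step 1: uniform upper estimate for the first $6N$ branches.} By Theorem~\ref{thm:resonantfreq}, for $\alpha$ in a compact subset of $Y^*\setminus\{0\}$ we have $\omega_k^\alpha=\tilde\omega_k^\alpha+\mathcal{O}(\bm\delta)$. The constants are uniform because $\alpha\mapsto(\mathcal{S}_D^{\alpha,0})^{-1}$ is continuous, so $\widehat{\mathfrak C}$ depends continuously on $\alpha$. Near $\alpha=0$, with $\alpha=\omega\alpha_0$, Theorem~\ref{thm:resonantfreq0} gives the same estimate with $\widehat{\mathfrak C}_0$. The preceding Proposition, $\widehat{\bm C}_{i,i',0}^{(j,j')}=\lim_{\alpha\to0}\widehat{\bm C}_{i,i'}^{(j,j')}$, glues the two regimes, so $\alpha\mapsto\tilde\omega_k^\alpha$ is continuous on all of $Y^*$. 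A compactness argument on $Y^*$ then gives $\max_{\alpha,k}\omega_k^\alpha\le\omega^*+o(1)\le\omega^*+\epsilon$ for $\delta_M<\delta_0$.

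\textbf{Step 2: lower bound for $\omega_{6N+1}^\alpha$.} We count characteristic values by Gohberg--Sigal theory. Fix a small radius $r>0$ independent of $\bm\delta$. At $\bm\delta=0$, the operator $\mathcal{A}_0^{\alpha,\omega}$ (or $\widehat{\mathcal A}_0$ near $\alpha=0$) has exactly $12N$ characteristic values with multiplicity in the disk $|\omega|<r$, all located at $0$, and no other characteristic values there. This holds because $\mathcal{A}_0^{\alpha,\omega}$ is Fredholm of index zero, holomorphic in $\omega$, and invertible for $0<|\omega|<r$ when $r$ is below the first nonzero Neumann eigenvalue of $D$ and the first Dirichlet/quasi-periodic exterior resonance. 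The generalized Rouché theorem then shows that for $\delta_M<\delta_0$ there are still exactly $12N$ characteristic values in $|\omega|<r$, uniformly in $\alpha$, again using compactness of $Y^*$ and continuity in $\alpha$. Since the eigenfrequencies are real and $\pm$-symmetric, exactly $6N$ are nonnegative, and these are the branches $\omega_1^\alpha,\dots,\omega_{6N}^\alpha$. Hence $\omega_{6N+1}^\alpha\ge r$ for all $\alpha$, and we set $\tilde\omega:=r/2$. Because $\omega^*=\mathcal{O}(\bm\delta^{1/2})$, shrinking $\delta_0$ ensures $\omega^*+\epsilon<\tilde\omega$ whenever $\epsilon<r/4$. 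For larger $\epsilon$ the statement is read with $\epsilon$ small, as in the cited reference.

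\textbf{Main obstacle.} The hard step is uniformity in $\alpha$ near $0$. There the operator $\mathcal{A}^{\alpha,\omega}$ is not holomorphic at $\omega=0$ because of the $\ln\omega$ term. The Rouché count must therefore be carried out for the reformulated operator $\widehat{\mathcal A}_{\bm\delta}(\alpha,\omega)$, using Lemma~\ref{lem:holomorphic}, holomorphy of $(\widehat{\mathcal S}_D^{\omega\alpha_0,\omega})^{-1}$, and the expansion \eqref{eq:singlelayeralphaomega}. I would first cover a neighborhood of $\alpha=0$ by the scaling $\alpha=\omega\alpha_0$, treating $|\alpha_0|^2<\rho/(\lambda+2\mu)$ and its complement separately. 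On the complement, the large-quasi-momentum regime applies, and Theorem~\ref{thm:kernels00} controls the degeneracy at $\alpha=0$. Some care is also needed at $\alpha=0$, where Lemma~\ref{lem:multiplicity} shows that only $6N-3$ branches vanish. The total count of $6N$ nonnegative characteristic values in the disk is unaffected, since the remaining $3$ branches are still $\mathcal{O}(\bm\delta^{1/2})$ by Theorem~\ref{thm:resonantfreq0}.
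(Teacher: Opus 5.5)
Your proposal follows the paper's proof essentially step for step. The paper first obtains a uniform-in-$\alpha$ upper bound $\max_{\alpha,k}\omega_k^\alpha\le\omega^*+\epsilon$ from the asymptotics of Theorems~\ref{thm:resonantfreq} and~\ref{thm:resonantfreq0}, and then a $\bm\delta$-independent lower bound on $\omega_{6N+1}^\alpha$ from the decoupled $\bm\delta=0$ problem (Neumann in $D_i$, Dirichlet in $Y\setminus\overline{D}$) via the generalized Rouch\'{e} theorem on a fixed circle in $\omega$. The remaining differences are minor: you count $12N$ characteristic values and use the $\pm\omega$ symmetry, consistent with the counting lemma, where the paper writes $6N$; you make explicit the smallness of $\epsilon$ that the paper's choice $\omega^*+\epsilon<c/2$ also tacitly requires; and the near-$\alpha=0$ uniformity you single out as the main obstacle is also asserted rather than proved in the paper.
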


\begin{proof}
The proof consists of two steps. We first estimate the upper bound of the first $6N$ eigenvalue branches. By Theorems~\ref{thm:resonantfreq} and \ref{thm:resonantfreq0}, we have $\omega_k^{\alpha}(0)=0$ for $k=1,\ldots,6N$. Since $\omega_k^{\alpha}(\bm{\delta})$ depends continuously on both $\alpha$ and $\bm{\delta}$, for any fixed $\epsilon>0$ there exist $\alpha_0>0$ and $\delta_1>0$ such that $|\alpha|<\alpha_0$ and $0<\delta_M<\delta_1$ imply $\omega_1^{\alpha}(\bm{\delta})<\omega^* + \epsilon$. Moreover, Theorem~\ref{thm:resonantfreq} shows that for $\alpha\neq 0$, $\omega_k^{\alpha} = \tilde{\omega}_k^{\alpha} + \mathcal{O}(\bm{\delta})$, where the $\mathcal{O}(\bm{\delta})$ term is uniform in $\alpha$. Hence, possibly reducing $\delta_1$, we obtain a constant $\delta_2>0$ such that $\omega_k^{\alpha}<\omega^*+\epsilon$ holds for any $\alpha\neq 0$ and $\delta_M<\delta_2$.

Next, we show the existence of a spectral gap above the first $6N$ branches. From the proof of Theorem~\ref{thm:resonantfreq}, when $\alpha\neq 0$, the first $6N$ eigenvalues $\omega_k^{\alpha}(\bm{\delta})$ are the lowest eigenvalues of the system and satisfy $\omega_k^{\alpha}(\bm{\delta})\to 0$ as $\delta_M\to 0$. In the limit $\bm{\delta}=0$, problem \eqref{eq:A_F} decouples into Neumann problems in $D_i$ and Dirichlet problems in $Y\setminus D$. The next eigenvalue $\omega_{6N+1}^{\alpha}(0)$ is therefore determined by the first positive Neumann eigenvalues in $D_i$ and Dirichlet eigenvalues in $Y\setminus D$, which are strictly positive and independent of $\bm{\delta}$. Hence there exists a constant $c>0$, independent of $\alpha$, such that $\omega_{6N+1}^{\alpha}(0)>c$.

Let $\Gamma:=\{\omega\in\mathbb{C}: |\omega|=c/2\}$. We have $\|\mathcal{A}_{\bm{\delta}}^{\alpha,\omega}-\mathcal{A}_0^{\alpha,\omega}\| = \mathcal{O}(\bm{\delta})$ uniformly for $\omega\in\Gamma$ and $\alpha\geq \alpha_0$. Choosing $\delta_3>0$ sufficiently small, we ensure
$$
    \|\mathcal{A}_{\bm{\delta}}^{\alpha,\omega}-\mathcal{A}_0^{\alpha,\omega}\|<\|(\mathcal{A}_0^{\alpha,\omega})^{-1}\|^{-1}\quad\text{on }\Gamma.
$$
By the generalized Rouch\'{e} theorem, $\mathcal{A}_{\bm{\delta}}^{\alpha,\omega}$ is invertible on $\Gamma$ for all $0<\delta_M<\delta_3$ and $\alpha\in Y^*$. Therefore, the number of characteristic values inside $|\omega|\leq c/2$ is preserved and equals $6N$, which implies $|\omega_{6N+1}^{\alpha}(\bm{\delta})|>c/2$. Since $\omega^*(\bm{\delta})\to 0$ as $\delta_M\to 0$, we may choose $\delta_4>0$ such that $\omega^*+\epsilon<c/2$ for $0<\delta_M<\delta_4$.

Finally, for $\alpha=0$, the same argument applies, yielding the existence of $\delta_0>0$ with $\delta_0<\delta_4$ such that $\omega_{6N+1}^{0}(\bm{\delta})$ remains separated from zero for all $\delta_M<\delta_0$. This completes the proof.
\end{proof}

As will be shown in the subsequent section, the existence of a band gap is crucial for the formation of localized modes. This issue will be discussed in detail in the final part of Section~\ref{sec:sub-floquet}.

\section{Perturbed systems}

In this section, we consider the perturbed system \eqref{eq:elas_sys}. In contrast to the unperturbed periodic setting, the presence of perturbations destroys periodicity, and therefore the corresponding solution $\bm{u}$ is no longer expected to be quasi-periodic. In the first subsection, we reformulate the problem using the Floquet transform.

The main objective of this section is to investigate Anderson localization under random perturbations of the scatterers. However, analyzing systems with infinitely many random perturbations is highly challenging, both theoretically and computationally. For this reason, we first consider the case of a single scatterer per periodic cell and introduce random perturbations on $M$ scatterers. We observe that the decay rate of the eigenmodes increases as $M$ increases, suggesting stronger localization and, in the limiting case $M\to\infty$, perfect localization in the sense described by Anderson \cite{PhysRev.109.1492}.

\subsection{Floquet transform}\label{sec:sub-floquet}

Define the Floquet transform of $\bm{u}$ by
$$
\mathcal{F}[\bm{u}](\bm{\alpha},\bm{x}):= \sum\limits_{\bm{m}\in\Lambda} \bm{u}(\bm{x}+\bm{m})e^{-\ii\bm{\alpha}\cdot \bm{m}}.
$$
Then $\mathcal{F}\bm{u}(\bm{\alpha},\bm{x})$ is $\bm{\alpha}$-quasi-periodic in $\bm{x}$. The inverse Floquet transform is defined by
$$
\mathcal{F}^{-1}[\bm{v}](\bm{m}, \bm{x}):=\frac{1}{|Y^*|}\int_{Y^*}\bm{v}(\bm{\alpha},\bm{x}-\bm{m})e^{\ii\bm{\alpha}\cdot\bm{m}}\mathrm{d}\alpha.
$$

Let $b_i^{\bm{m}}:=(\tau_i^{\bm{m}})^2/\tau_i^2$ be the perturbation factor at each lattice point $\bm{m}\in\Lambda$, and denote $\bm{u}^{\bm{\alpha}}(\bm{x}) := \mathcal{F}\bm{u}(\bm{\alpha},\bm{x})$. Applying the Floquet transform to \eqref{eq:elas_sys} yields the following system:
\begin{equation}\label{eq:elas_sys_floquet}
	\left\{
	\begin{aligned}
	&\displaystyle (\mathcal{L}^{\lambda, \mu} +\rho\omega^2) \bm{u}^{\bm{\alpha}}= \bm{0}  &&   \mathrm{in }\  Y \setminus\bar{D}, \medskip \\
	&\displaystyle \mathcal{L}^{\lambda, \mu} \bm{u}^{\bm{\alpha}} +\rho\tau_i^2\omega^2\sum\limits_{m\in\Lambda}b_i^{\bm{m}}\bm{u}(\bm{x}+\bm{m})e^{-\ii\bm{\alpha}\cdot \bm{m}} = \bm{0}  &&  \mathrm{in }\  D_i, i=1,\cdots, N,  \medskip\\
	&\displaystyle \bm{u}^{\bm{\alpha}}|_+ - \bm{u}^{\bm{\alpha}}|_- = \bm{0}   && \mathrm{on }\  \partial D, \medskip \\
    &\displaystyle \delta_i \partial_{\bm\nu}\bm{u}^{\bm{\alpha}}|_+ - \partial_{\bm\nu}\bm{u}^{\bm{\alpha}}|_-  = \bm{0}   &&\mathrm{on }\  \partial D_i, i=1, \cdots, N, \medskip \\
    &\bm{u}^{\bm{\alpha}}(\bm{x}+\bm{m}) = e^{\ii\bm{\alpha}\cdot \bm{m}}\bm{u}^{\bm{\alpha}}(\bm{x}) && \mathrm{for}~\mathrm{all}~\bm{m}\in\Lambda,
    \end{aligned}
	\right.
\end{equation}
together with the outgoing radiation condition for $\bm{u}(\bm{x}_l, x_0)$ in the non-periodic direction as $|x_0|\to\infty$.

Define $\bm{v}_i^{(j),\bm{\alpha}}:=\mathcal{S}_D^{\bm{\alpha},0}[\bm{f}_{i,\mathrm{per}}^{(j)}]$ on $Y\backslash D$. Then $\bm{v}_i^{(j),\bm{\alpha}}$ satisfies
\begin{equation*}
\left\{
\begin{aligned}
    \mathcal{L}^{\lambda,\mu}\bm{v}_i^{(j),\bm{\alpha}} & = 0 &&\mathrm{in}~ Y \backslash \bar{D}, \\
    \bm{v}_i^{(j),\bm{\alpha}} &= \bm{\psi}_i^{(j)} &&\mathrm{on}~\partial D, \\
    \bm{v}_i^{(j),\bm{\alpha}}(\bm{x}+\bm{l}) &= e^{\ii\bm{\alpha}\cdot \bm{l}}\bm{v}_i^{(j),\bm{\alpha}}(\bm{x}) &&\mathrm{for~ any} ~ \bm{l}\in\Lambda.
\end{aligned}
\right.
\end{equation*}
Since $\partial_{\bm\nu}\bm{u}|_- = \delta_i\partial_{\bm\nu}\bm{u}|_+,$ the solution $\bm{u}$ of \eqref{eq:elas_sys} in each inclusion $D_i^m$ is, in the limit $\delta_i \to 0$, well approximated by Neumann eigenmodes associated with $\Psi$. Hence, we expect that $\bm{u}^{\bm{\alpha}}$ can be approximated in $Y\setminus \overline{D}$ by linear combinations of $\bm{v}_i^{(j),\bm{\alpha}}$, $i=1,\ldots,N$, $j=1,\ldots,6$. We now make this statement precise.

Let $\bm{\psi}_{i,\bm{m}}^{(j)}(\bm{x}):= \bm{\psi}_i^{(j)}(\bm{x}+\bm{m})$, $\bm{m}=(m_1,m_2,m_3)\in\Lambda$. From the definition of $\bm{\psi}_i^{(j)}$, we obtain
\begin{align*}
&\bm{\psi}_{i,\bm{m}}^{(j)} = \bm{\psi}_i^{(j)}, \quad j=1,2,3, \\
&\bm{\psi}_{i,\bm{m}}^{(4)} = \bm{\psi}_i^{(4)} - m_2\bm{\psi}_i^{(1)} + m_1\bm{\psi}_i^{(2)}, \\
&\bm{\psi}_{i,\bm{m}}^{(5)} = \bm{\psi}_i^{(5)} - m_3\bm{\psi}_i^{(2)} + m_2\bm{\psi}_i^{(3)}, \\
&\bm{\psi}_{i,\bm{m}}^{(6)} = \bm{\psi}_i^{(6)} + m_3\bm{\psi}_i^{(1)} - m_1\bm{\psi}_i^{(3)}.
\end{align*}
Therefore, for each fixed $\bm{m}\in\Lambda$, $\bm{\psi}_{i,\bm{m}}^{(j)}$ can be expressed as a linear combination of ${\bm{\psi}_i^{(j)}}, j=1, \ldots, 6$. From Proposition~\ref{prop:normalalpha-field}, it follows that for each fixed $i$ and $\bm{m}$ there exist coefficients $\tilde c_{i,\bm{m}}^{(j)}$ such that, for $\bm{x}\in D_i^{\bm{m}}$,
$$
\bm{u}(\bm{x}) = \sum_{j} \tilde{c}_{i,\bm{m}}^{(j)}\bm{\psi}_{i,\bm{m}}^{(j)}(\bm{x}) + \mathcal{O}(\omega^2+\bm{\delta}) = \sum_j c_{i,\bm{m}}^{(j)}\bm{\psi}_i^{(j)}(\bm{x}-\bm{m}) + \mathcal{O}(\omega^2+\bm{\delta}).
$$
Consequently, when $\bm{x}\in D_i$, we obtain
\begin{equation*}
\begin{aligned}
\bm{u}^{\bm{\alpha}}(\bm{x}) &= \sum\limits_{\bm{m}\in\Lambda} \bm{u}(\bm{x}+\bm{m})e^{-\ii\bm{\alpha}\cdot \bm{m}} = \sum\limits_{\bm{m}\in\Lambda}\sum_j c_{i,\bm{m}}^{(j)}\bm{\psi}_{i}^{(j)}(\bm{x})e^{-\ii\bm{\alpha}\cdot \bm{m}} + \mathcal{O}(\omega^2+\bm{\delta}) \\
&= \sum_j c_{i,\bm{\alpha}}^{(j)}\bm{\psi}_i^{(j)}(\bm{x}) + \mathcal{O}(\omega^2+\bm{\delta}),
\end{aligned}
\end{equation*}
where $c_{i,\bm{\alpha}}^{(j)}:=\sum\limits_{\bm{m}\in\Lambda} c_{i,\bm{m}}^{(j)}e^{-\ii\bm{\alpha}\cdot \bm{m}}$. Hence, for $\bm{x}\in Y \backslash\overline{D}$,
\begin{align*}
    \bm{u}^{\bm{\alpha}}(x) &= \sum_{i,j} c_{i,\bm{\alpha}}^{(j)} \mathcal{S}_D^{\bm{\alpha},\omega}[\bm{f}_{i,\mathrm{per}}^{(j)}] + \mathcal{O}(\omega^2+\bm{\delta}) = \sum_{i,j} c_{i,\bm{\alpha}}^{(j)}\mathcal{S}_D^{\bm{\alpha},0}[\bm{f}_{i,\mathrm{per}}^{(j)}] + \mathcal{O}(\omega^2+\bm{\delta}) \\ &= \sum_{i,j} c_{i,\bm{\alpha}}^{(j)}\bm{v}_i^{(j),\bm{\alpha}}+\mathcal{O}(\omega^2+\bm{\delta}).
\end{align*}
As a result, we obtain the representation
\begin{equation*}
\bm{u}^{\bm{\alpha}}(\bm{x}) = \left\{
\begin{aligned}
    &\sum_{i,j} c_{i,\bm{\alpha}}^{(j)}\bm{\psi}_i^{(j)}(\bm{x}) + \mathcal{O}(\omega^2+\bm{\delta}), && \bm{x} \in D, \\
    &\sum_{i,j} c_{i,\bm{\alpha}}^{(j)}\bm{v}_i^{(j),\bm{\alpha}}(\bm{x}) + \mathcal{O}(\omega^2+\bm{\delta}), && \bm{x}\in Y\backslash\overline{D}.
\end{aligned}
\right.
\end{equation*}

Our subsequent analysis consists of two main parts. First, we identify the frequencies $\omega$ for which \eqref{eq:elas_sys_floquet} admits nontrivial solutions, i.e., the resonant frequencies. Second, we characterize the corresponding nontrivial solutions $\bm{u}$ associated with these resonant frequencies. To this end, we aim to determine all coefficients $c_{i,\bm{m}}^{(j)}$.

By Green's second identity for the Lam\'{e} system (cf. \cite[(1.50)]{ammari2015mathematical}), we obtain
\begin{align*}
&\int_{\partial D} \partial_{\bm\nu} \bm{u}^{\bm{\alpha}}|_+\cdot \bm{\psi}_{i'}^{(j')}\mathrm{d}\sigma = \sum_i\frac{1}{\delta_i}\int_{\partial D_i} \partial_{\bm\nu} \bm{u}^{\bm{\alpha}}|_-\cdot \bm{\psi}_{i'}^{(j')}\mathrm{d}\sigma \\=& \sum_i\frac{1}{\delta_i}\int_{D_i} (-\rho\tau_i^2\omega^2)\left(\sum\limits_{\bm{m}\in\Lambda} b_i^{\bm{m}} \bm{u}(\bm{x}+\bm{m}) e^{-\ii\bm{\alpha}\cdot \bm{m}}\right)\cdot\bm{\psi}_{i'}^{(j')}\mathrm{d}x \\
=& -\sum_i\frac{\omega^2\rho\tau_i^2}{\delta_i}\sum\limits_{\bm{m}\in\Lambda} b_i^{\bm{m}} e^{-\ii\bm{\alpha}\cdot \bm{m}}\sum\limits_j c_{i,\bm{m}}^{(j)}\int_{D_i}\bm{\psi}_i^{(j)}\cdot\bm{\psi}_{i'}^{(j')}\mathrm{d}x + \mathcal{O}\big(\frac{\omega^4}{\bm{\delta}}+\omega^2\big) \\
=& -\sum_{i,j}\frac{\omega^2\rho\tau_i^2}{\delta_i}\left(\sum\limits_{\bm{m}\in\Lambda} b_i^{\bm{m}} e^{-\ii\bm{\alpha}\cdot \bm{m}} c_{i,\bm{m}}^{(j)}\right) \bm{M}_{i,i'}^{(j,j')} + \mathcal{O}\big(\frac{\omega^4}{\bm{\delta}}+\omega^2\big).
\end{align*}
On the other hand,
\begin{align*}
&\int_{\partial D} \partial_{\bm\nu} \bm{u}^{\bm{\alpha}}|_+\cdot \bm{\psi}_{i'}^{(j')}\mathrm{d}\sigma = \int_{\partial D} \sum\limits_{i,j} c_{i,\bm{\alpha}}^{(j)} \partial_{\bm\nu} \bm{v}_i^{(j),\bm{\alpha}}|_+\cdot \bm{\psi}_{i'}^{(j')}\mathrm{d}\sigma +\mathcal{O}(\omega^2+\bm{\delta})\\=& \sum\limits_{i,j}c_{i,\bm{\alpha}}^{(j)}\int_{\partial D} \bm{f}_{i,\mathrm{per}}^{(j)}\cdot \bm{\psi}_{i'}^{(j')}\mathrm{d}\sigma + \mathcal{O}(\omega^2+\bm{\delta})= -\sum_{i,j} c_{i,\bm{\alpha}}^{(j)}\widehat{\bm{C}}_{i,i'}^{j,j'} + \mathcal{O}(\omega^2+\bm{\delta}).
\end{align*}
Combining the above identities yields
$$
    \sum_{i,j}\frac{\omega^2\rho\tau_i^2}{\delta_i}\left(\sum\limits_{\bm{m}\in\Lambda} b_i^{\bm{m}} e^{-\ii\bm{\alpha}\cdot \bm{m}} c_{i,\bm{m}}^{(j)}\right) \bm{M}_{i,i'}^{(j,j')} = \sum_{i,j} \sum_{\bm{m}\in\Lambda} c_{i,\bm{m}}^{(j)}e^{-\ii\bm{\alpha}\cdot \bm{m}}\widehat{\bm{C}}_{i,i'}^{j,j'}.
$$
Using the notation of Theorem~\ref{thm:resonantfreq}, this relation can be written in matrix form as
\begin{equation}\label{eq:floquetrel}
    \omega^2\mathfrak{M}^\top\uprho\sum\limits_{\bm{m}\in\Lambda}\mathfrak{B}_{
    \bm{m}}\mathfrak{c}_{\bm{m}} e^{-\ii\bm{\alpha}\cdot \bm{m}} = \widehat{\mathfrak{C}}_{\bm{\alpha}}^\top\sum\limits_{\bm{m}\in\Lambda} \mathfrak{c}_{\bm{m}} e^{-\ii\bm{\alpha}\cdot \bm{m}}.
\end{equation}
Here $\uprho:=\mathrm{diag}(\rho\tau_1^2/\delta_1,\cdots, \rho\tau_N^2/\delta_N)$ and $\mathfrak{B}_{\bm{m}}:=\mathrm{diag}(b_1^{\bm{m}},\cdots, b_N^{\bm{m}})$ are $6N\times 6N$ diagonal matrices, where each entry is repeated six times along the diagonal. The vector $\mathfrak{c}_{\bm{m}}\in\mathbb{R}^{6N}$ is defined by $(\mathfrak{c}_{\bm{m}})_{6(i-1)+j}:=c_{i,\bm{m}}^{(j)}$. We write $\widehat{\mathfrak{C}}_\alpha^\top$ instead of $\widehat{\mathfrak{C}}^\top$ here to emphasize its dependence upon $\bm{\alpha}$.

Define the discrete Floquet transform and its inverse
$$\mathcal{F}_D:l^2(\Lambda,\mathbb{C}^{6N}) \rightarrow L^2(Y^*, \mathbb{C}^{6N}),\quad \mathcal{F}^{-1}_D:L^2(Y^*,\mathbb{C}^{6N})\rightarrow l^2(\Lambda,\mathbb{C}^{6N})$$
by
$$
\mathcal{F}_D[\bm{\phi}](\bm{\alpha}):=\sum\limits_{\bm{m}\in\Lambda}\bm{\phi}(\bm{m})e^{-\ii\bm{\alpha}\cdot \bm{m}}, \quad \mathcal{F}^{-1}_D[\bm{\psi}](\bm{m}):=\frac{1}{|Y^*|}\int_{Y^*}\bm{\psi}(\bm{\alpha})e^{\ii\bm{\alpha}\cdot \bm{m}}\mathrm{d}\bm{\alpha}.
$$
Viewing $\mathfrak{c}_{\bm{m}}$ as a sequence in $\ell^2(\Lambda,\mathbb{C}^{6N})$, the left-hand side of \eqref{eq:floquetrel} can be written as $\omega^2\mathfrak{M}^\top\uprho\mathcal{F}_D[\mathfrak{B}\mathfrak{c}](\bm{\alpha})$,
while the right-hand side becomes $\widehat{\mathfrak{C}}_{\bm{\alpha}}^\top\mathcal{F}_D[\mathfrak{c}]$. We further define $\mathfrak{C}_n:=\mathcal{F}_D^{-1}[\widehat{\mathfrak{C}}_\alpha^\top](\bm{n})$. Applying the inverse Floquet transform to both sides of \eqref{eq:floquetrel} yields
\begin{equation}\label{eq:resfreqrel}
    \omega^2\mathfrak{M}^\top\uprho\mathfrak{B}_{\bm{n}}\mathfrak{c}_{\bm{n}} = \sum\limits_{\bm{m}\in\Lambda}\mathfrak{C}_{\bm{n}-\bm{m}}\mathfrak{c}_{\bm{m}},
\end{equation}
which is an infinite system of coupled equations for the coefficients ${\mathfrak{c}_{\bm{m}}}, {\bm{m}\in\Lambda}$, and characterizes the resonant modes. The values of $\omega$ for which \eqref{eq:resfreqrel} admits nontrivial solutions correspond to the resonant frequencies of \eqref{eq:elas_sys_floquet}.

Define the operator
$\mathcal{C}:l^2(\Lambda, \mathbb{C}^{6N})\rightarrow l^2(\Lambda, \mathbb{C}^{6N})$ by
$$
    \mathcal{C}[\bm{p}](\bm{m}) = \sum\limits_{\bm{m}\in\Lambda}\mathfrak{C}_{\bm{n}-\bm{m}}\bm{p}_{\bm{m}}.
$$
Then its Floquet transform
$$
\mathcal{M}_{\widehat{\mathfrak{C}}}:=\mathcal{F}_D\mathcal{C}\mathcal{F}_D^{-1}: l^2(Y^*, \mathbb{C}^{6N})\rightarrow l^2(Y^*, \mathbb{C}^{6N})
$$
is a multiplication operator, since $(\mathcal{M}_{\widehat{\mathfrak{C}}}\bm{f})(\bm{\alpha}) = \widehat{\mathfrak{C}}^{\bm{\alpha}}\bm{f}(\bm{\alpha})$. It follows that
$$
\sigma(\mathcal{C}) = \sigma(\mathcal{M}_{\widehat{\mathfrak{C}}}) = \cup \{\lambda_j(\bm{\alpha}):\bm{\alpha}\in Y^*\},
$$
where $\lambda_j(\bm{\alpha})$ denote the eigenvalues of $\widehat{\mathfrak{C}}^{\bm{\alpha}}$.

When the perturbation is absent in \eqref{eq:resfreqrel}, we have $\mathfrak{B}_{\bm{n}}=\bm{I}$. In this case, the previous discussion implies that the spectrum of $\mathcal{C}$ is purely absolutely continuous, and no point spectrum occurs (except in the presence of flat bands). After introducing perturbations, $\mathfrak{B}_{\bm{n}}\neq \bm{I}$, and \eqref{eq:resfreqrel} may admit eigenvalues, which correspond to localized modes of interest. Such eigenvalues cannot lie in the absolutely continuous spectrum; hence they must be located in spectral gaps. Theorem 4.4 of \cite{doi:10.1137/S0036139997320536} further shows that eigenmodes induced by compact perturbations and lying in spectral gaps exhibit exponential decay.

We emphasize that localization induced by compact perturbations, while leading to exponentially decaying modes, does not correspond to Anderson localization. In contrast, Anderson localization refers to genuinely localized eigenmodes arising in the limiting regime of an infinite system with random perturbations applied to all scatterers. Since such infinite random perturbations are not directly accessible, we instead employ a key qualitative feature of Anderson localization: in the unperturbed periodic system all modes are propagating, whereas increasing disorder leads to progressively stronger localization of eigenmodes. In the next section, we first establish localization induced by compact perturbations, and then investigate the emergence of Anderson localization by gradually increasing the strength and spatial extent of the perturbations.

\subsection{Compact perturbation}

We now discuss the system with compact perturbations in detail. We focus on chain-like periodic structures, i.e., the case $d_l=1$; the cases $d_l=2$ and $d_l=3$ can be treated in a similar manner.

Let $\Lambda=\mathbb{Z}$ be the lattice. The corresponding Brillouin zone is $Y^* = (-\pi, \pi]$. We assume that there exists a fixed integer $M\in\mathbb{Z}_+$ such that the perturbation factor $b_i^m$ satisfies
\begin{equation*}
b_i^m = \left\{
\begin{aligned}
    &1, && m<0~\mathrm{or}~m>M,\\
    &1+\eta_i^m, && 0\leq m\leq M,
\end{aligned}
\right.
\end{equation*}
where $\eta_i^m>-1$ and $i=1,\ldots,N$ indexes the scatterers in the fundamental cell $Y$. This setting corresponds to a compact perturbation, in the sense that only finitely many periodic cells are perturbed. For $0\leq m\leq M$, we define $\mathfrak{H}_m := \mathrm{diag}(\eta_1^m,\ldots,\eta_N^m)$, where each entry $\eta_i^m$ is repeated six times along the diagonal. It follows that $\mathfrak{B}_m=\bm{I} + \mathfrak{H}_m$ for $0\leq m\leq M$.

\subsubsection{Single scatterer}

We first consider the case in which each periodic cell contains a single scatterer. Assume that the scatterer $D$ is a ball of radius $r_0$. A direct calculation shows that the $6\times 6$ matrix $\mathfrak M$ is given by
$$
\mathfrak{M} = \mathrm{diag}\left(\frac{4\pi r_0^3}{3},\frac{4\pi r_0^3}{3},\frac{4\pi r_0^3}{3},\frac{8\pi r_0^5}{15},\frac{8\pi r_0^5}{15},\frac{8\pi r_0^5}{15}\right).
$$

In the following numerical simulations, all monomers are taken to be identical balls with radius $r_0=0.1$. The physical parameters are set as $\lambda=\mu=1$, $\rho=1$, $\delta_1=10^{-4}$, and $\tau_1=2$ for all monomers. The left panel of Figure~\ref{fig:eigenmodedecaymonomerdimer} shows a localized mode in the monomer system with 100 monomers and the scatterer centered at the origin is perturbed. The perturbation strength of the central monomer is $\eta=-0.2$. The numerical computation yields one mode at $\omega=0.216527$. Strong localization around the perturbation is clearly observed.

\begin{figure}[htbp]
    \centering
   \includegraphics[width=\textwidth]{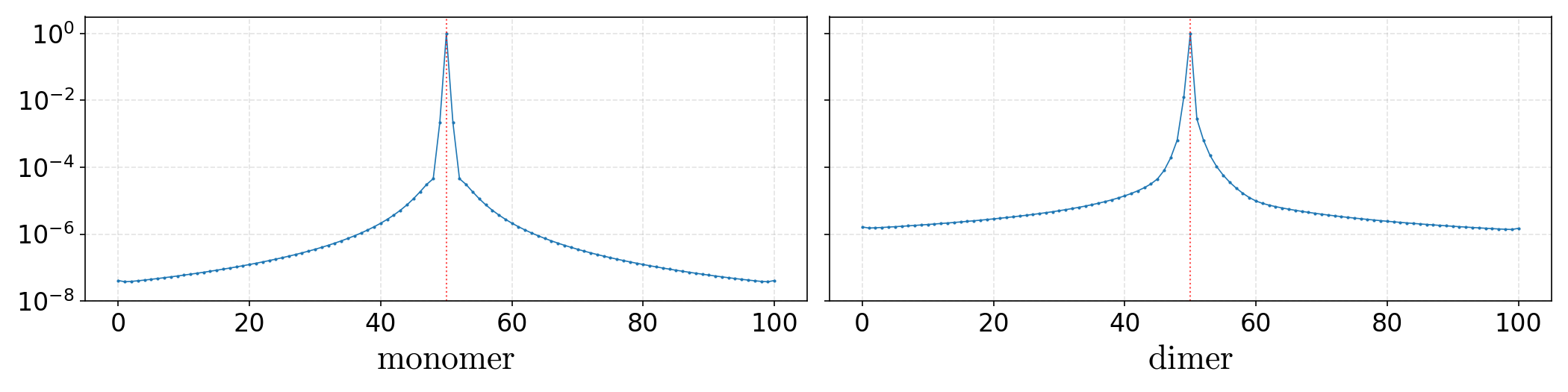}
    \caption{Decay of eigenmodes in monomer and dimer systems with a single perturbation.}
    \label{fig:eigenmodedecaymonomerdimer}
\end{figure}

Figure~\ref{fig:random50-1} shows several examples of a system with 50 monomers, randomly perturbed. We set the perturbation parameter as $\eta_1^m:= \eta(1+r)$, where $r$ is a random variable uniformly distributed in $(-1,1)$ and $\eta=-0.2$. Thus, $\eta_1^m$ lies in the interval $(-0.4,0)$. Among all eigenmodes, we select the one corresponding to the largest frequency. In each subplot, the upper panel shows the random perturbation $\eta_1^m$ applied to each monomer, while the lower panel displays the amplitude of the corresponding eigenmode. The plots show that, although the perturbation profiles differ significantly across the numerical examples, strong localization consistently emerges in each case.

\begin{figure}[htbp]
    \centering
    \includegraphics[width=\textwidth]{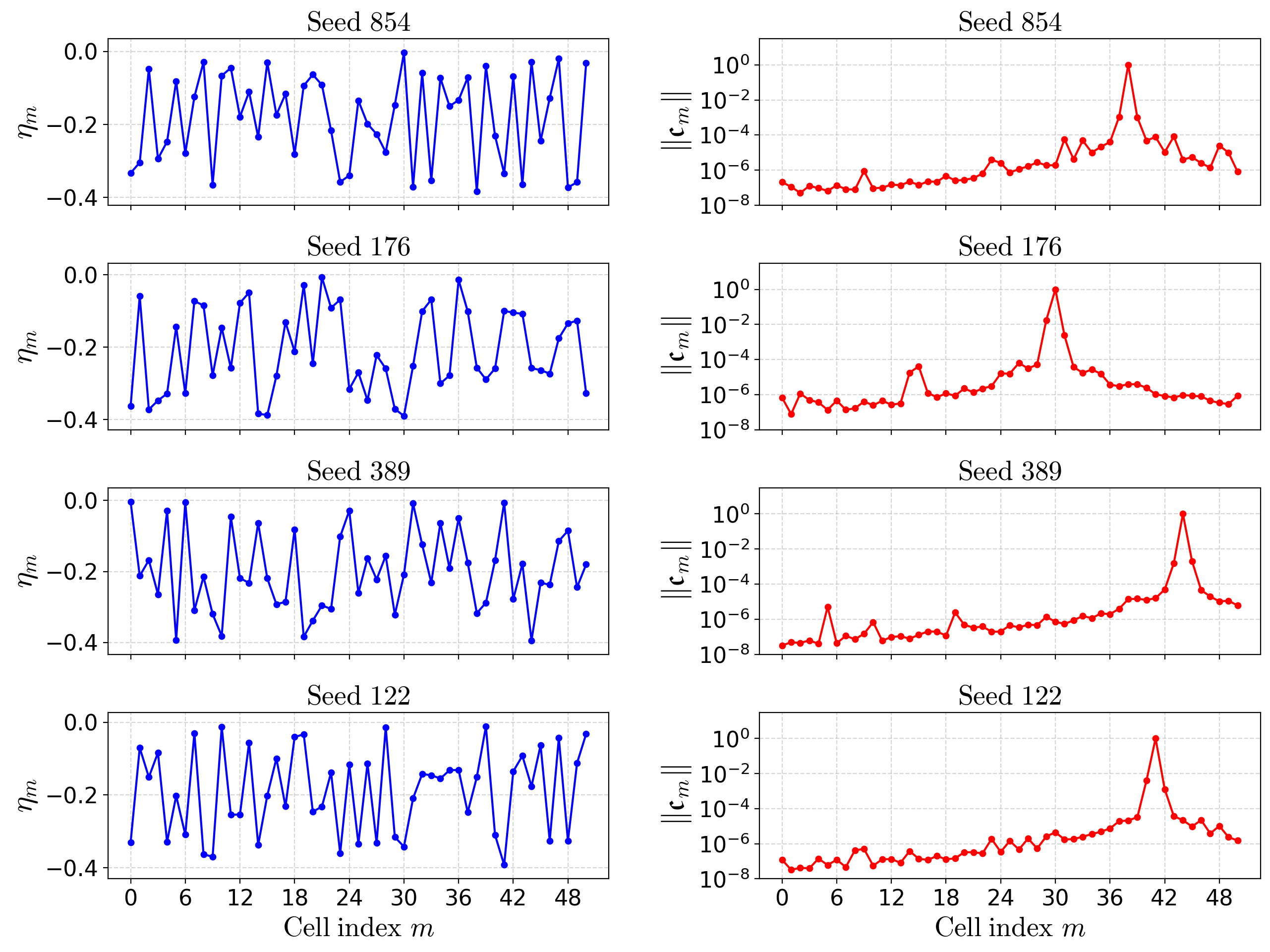}
     \caption{Random perturbations of 50 monomers.}
      \label{fig:random50-1}
\end{figure}

Figure~\ref{fig:eigenmodelocalization} illustrates the distributions of resonant frequencies and the average localization intensities of eigenmodes of 1000 randomly perturbed monomers for different frequency ranges and perturbation strengths. The localization intensity is defined as in (4.5) of \cite{Ammari2024} by
\begin{equation}\label{eq:localizationintensity}
    l(\bm{u}) = \frac{\|\mathfrak{c}_{\bm{m}}\|_{\infty}}{\|\mathfrak{c}_{\bm{m}}\|_2},
\end{equation}
where $\mathfrak{c}_{\bm{m}}$ denotes the solution of \eqref{eq:resfreqrel} associated with the eigenmode $\bm{u}$. As observed from the numerical results, the eigenmodes are separated into two distinct groups by a spectral gap. The lower-frequency group originates from the three lower bands among the lowest six energy bands, whereas the higher-frequency group corresponds to the three upper bands. The eigenmodes in the lower-frequency group exhibit significantly weaker localization compared with those in the higher-frequency group. As the perturbation strength increases, the frequency ranges of both groups expand, and the spectral gap is gradually filled by newly emerging localized modes. Moreover, the average localization intensity increases with the perturbation strength, indicating a progressive transition from propagating modes in the periodic system to localized modes in the presence of increasingly strong disorder. In the limiting regime with infinitely many random perturbations, this phenomenon corresponds to Anderson localization. The details of the numerical implementation are provided in Section~SM3.2 in supplementary material.

\begin{figure}[htbp]
    \centering
    \includegraphics[width=\textwidth]{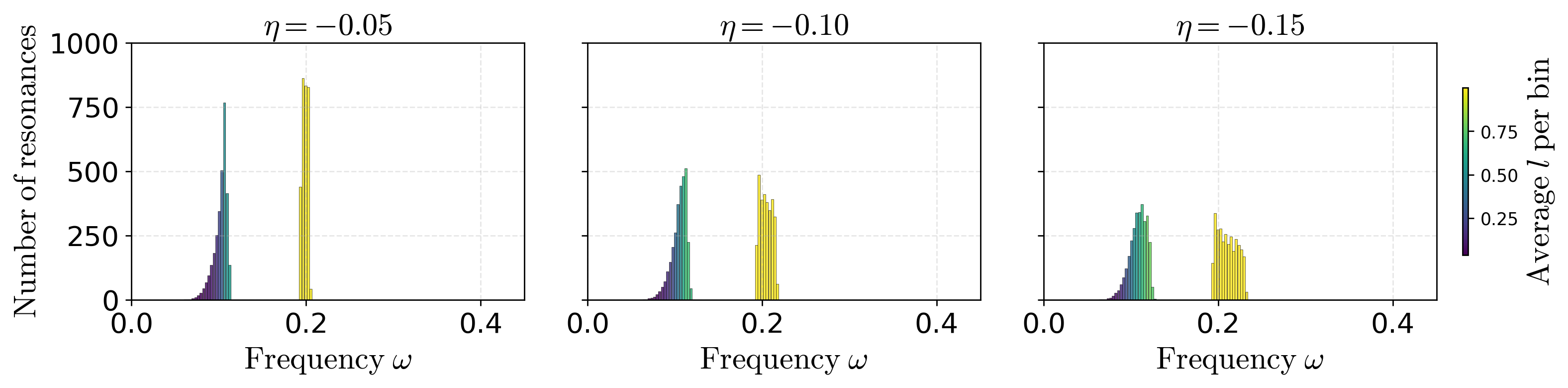}
    \includegraphics[width=\textwidth]{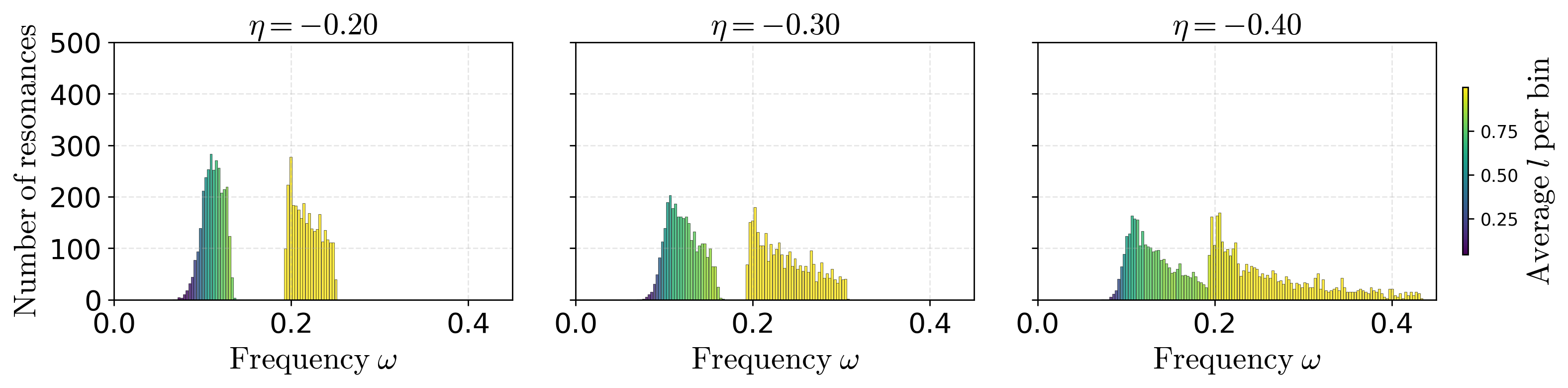}
    \caption{Distribution of eigenmodes in a randomly perturbed monomer system for different perturbation strengths.}
    \label{fig:eigenmodelocalization}
\end{figure}

\subsubsection{Two scatterers}

We now consider a slightly more complicated case in which each periodic cell contains two scatterers. Let $D:=D_1\cup D_2$ denote the union of the scatterers in the fundamental cell $Y=[-1/2,1/2]$. Assume that $D_1$ and $D_2$ are balls centered at $(-d,0,0)$ and $(d,0,0)$, respectively, with $d>0$. Both balls have radius $r_0$, where $0<r_0<d$. In this case, $\mathfrak M$ is a $12\times 12$ block diagonal matrix of the form
$$
    \mathfrak{M} = \begin{pmatrix}
    \mathfrak{M}_1 & 0 \\
    0 & \mathfrak{M}_2
    \end{pmatrix}.
$$
The $6\times 6$ blocks are given by $\mathfrak{M}_i:=\mathfrak{M}+\delta\mathfrak{M}_i, i=1,2$, where $\mathfrak M$ is the monomer mass matrix introduced in the previous subsection, and $\delta\mathfrak M_i=(\delta m_{jk,i})$ has the following nonzero entries:
\begin{small}
\begin{equation*}
\delta m_{jk,1}:=\left\{
\begin{aligned}
-\frac{4}{3}\pi r_0^3 d, ~ (j,k)=(2,4)~\mathrm{or}~(4,2),\\
\frac{4}{3}\pi r_0^3 d, ~ (j,k)=(3,6)~\mathrm{or}~(6,3), \\
\frac{4}{3}\pi r_0^3 d^2, ~ (j,k)=(4,4)~\mathrm{or}~(6,6),
\end{aligned}
\right.
\delta m_{jk,2}:=\left\{
\begin{aligned}
\frac{4}{3}\pi r_0^3 d, ~ (j,k)=(2,4)~\mathrm{or}~(4,2),\\
-\frac{4}{3}\pi r_0^3 d, ~(j,k)=(3,6)~\mathrm{or}~(6,3), \\
\frac{4}{3}\pi r_0^3 d^2, ~(j,k)=(4,4)~\mathrm{or}~(6,6).
\end{aligned}
\right.
\end{equation*}
\end{small}

All other entries of $\delta\mathfrak M_i$ are zero. The matrix $\widehat{\mathfrak{C}}_{\alpha}^\top$ is also of size $12\times 12$.

As in the previous subsection, we compute the resonant frequencies by identifying the values of $\omega$ for which \eqref{eq:resfreqrel} admits nontrivial solutions. In the numerical simulations for the dimer system, each fundamental cell contains two identical balls of radius $r_0=0.1$, centered at $(-0.2,0,0)$ and $(0.2,0,0)$, respectively. The physical parameters are $\lambda=\mu=1$, $\rho=1$, $\delta_1=10^{-4}$, and $\tau_1=2$ for all scatterers. The right panel of Figure~\ref{fig:eigenmodedecaymonomerdimer} shows a localized mode in the dimer system when the two scatterers in the cell centered at the origin are perturbed. The spherical scatterer centered at $(-0.2,0,0)$ is perturbed with strength $\eta=-0.2$, while the one centered at $(0.2,0,0)$ is perturbed with strength $\eta=-0.1$. The corresponding resonant frequency is $\omega=0.217165$.

Figure~\ref{fig:random50-2} shows several examples of a system with 50 randomly perturbed dimers. The perturbations applied to the two scatterers in each dimer are chosen independently at random. Specifically, we set $\eta_i^m:= \eta(1+r), i=1,2$, where $r$ is uniformly distributed in $(-1,1)$ and $\eta=-0.2$. Thus, $\eta_i^m$ lies in the interval $(-0.4,0)$. The blue lines represent the random perturbations and field amplitudes associated with the left scatterers in the dimers, while the red lines correspond to the right scatterers. The plots show strong localization for both scatterers within the same dimer, independently of the particular realization of the random perturbations.

\begin{figure}[htbp]
    \centering
    \includegraphics[width=\textwidth]{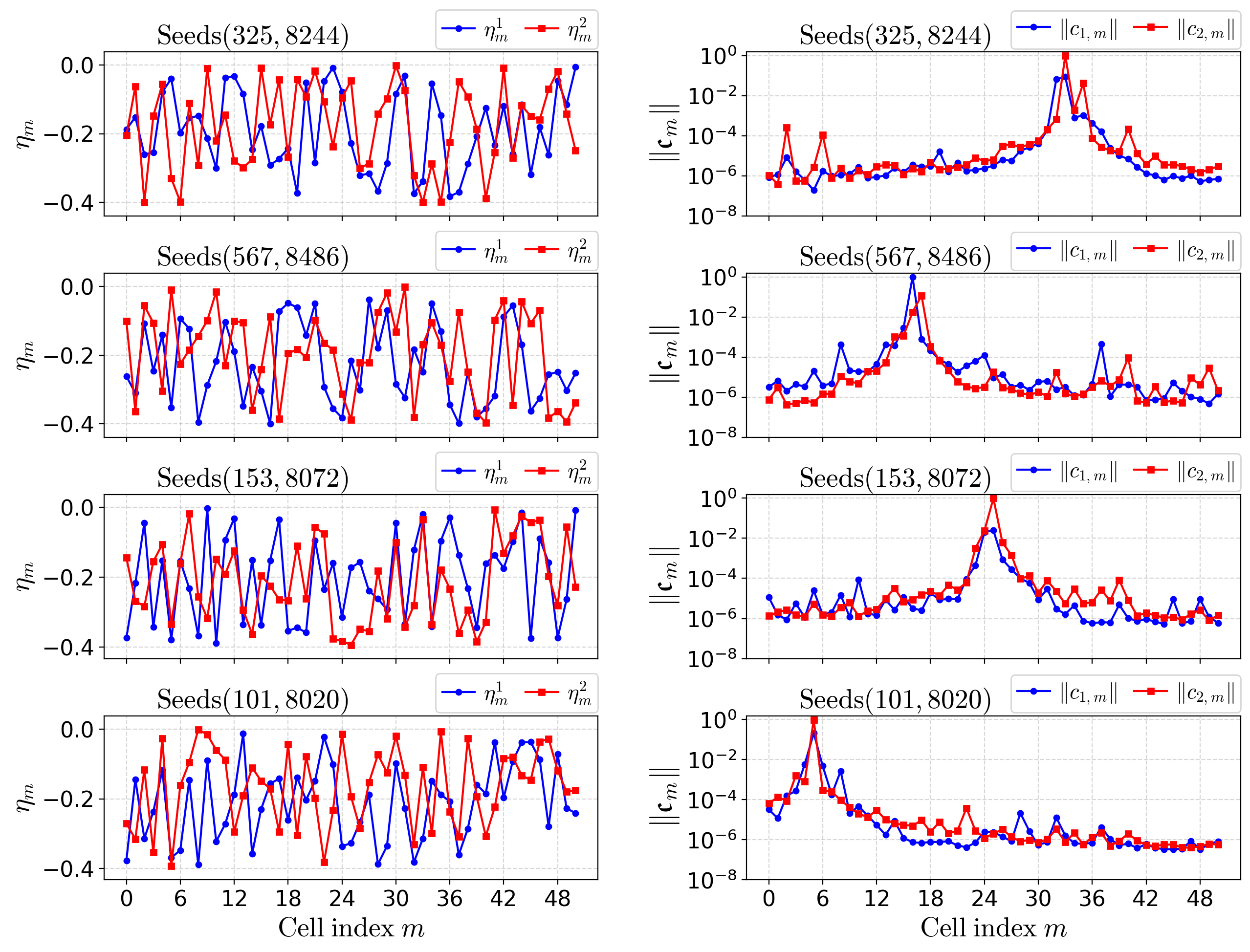}
     \caption{Random perturbations of 50 dimers.} 
     \label{fig:random50-2}
\end{figure}

Figure~\ref{fig:eigenmodelocalizationdimer} shows the distribution of resonant frequencies and the average localization intensity, defined in \eqref{eq:localizationintensity}, over different frequency intervals and for different perturbation strengths. 1000 dimers are randomly perturbed. A band gap between the lower- and higher-frequency modes is clearly observed. As the perturbation strength increases, the localization intensity also increases, consistent with the behavior observed in the monomer system. Details of the numerical implementation are provided in Section~SM3.3 in supplementary material.

\begin{figure}[htbp]
    \centering
    \includegraphics[width=\textwidth]{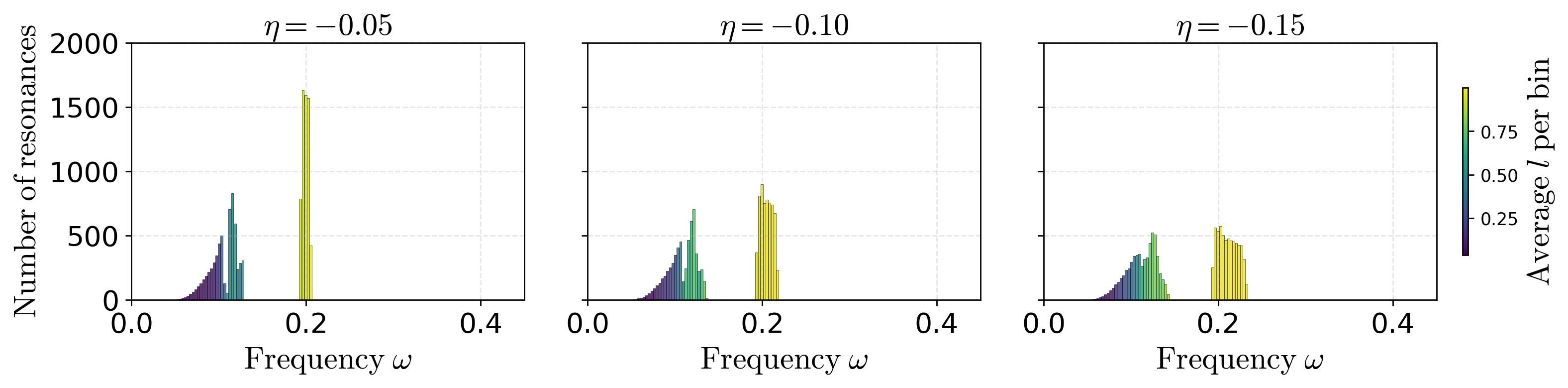}
   \includegraphics[width=\textwidth]{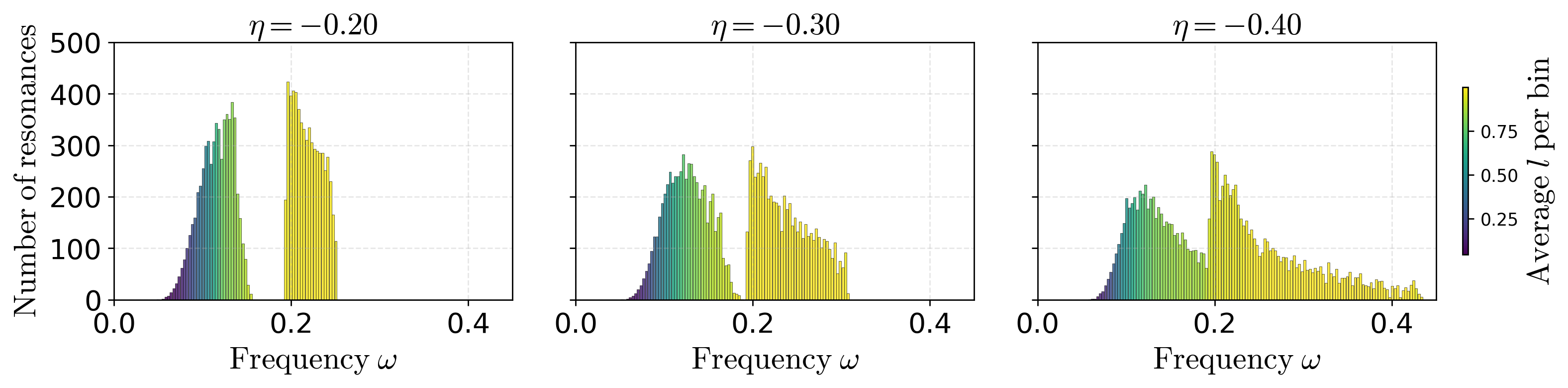}
    \caption{Distribution of eigenmodes in a randomly perturbed dimer system for different perturbation strengths.}
    \label{fig:eigenmodelocalizationdimer}
\end{figure}

\section{Conclusion}

In this paper, we present a mathematical analysis of subwavelength resonances in periodic elastic systems, both in the absence and in the presence of random perturbations. Using layer potential techniques, we reformulate the scattering problem as a system of boundary integral equations. Based on the properties of the associated layer potential operators and the generalized Rouch\'{e} theorem, we derive asymptotic approximations for the subwavelength resonant frequencies. For systems with random perturbations, we apply the Floquet transform to reformulate the problem in a periodic framework and derive the leading-order equations governing both the resonant frequencies and the total fields under general perturbations. Finally, we carry out numerical simulations for randomly perturbed periodic monomer and dimer systems. The numerical results agree well with the theoretical predictions and demonstrate the emergence of Anderson localization.

Due to the complexity of the problem, we did not pursue a full theoretical treatment of more general configurations. Anderson localization has been widely observed in various disordered systems, and understanding how different types of perturbations induce localization remains an important theoretical question. This provides a natural direction for future research.


\begin{thebibliography}{99}

\bibitem{AmmariBarandunCaoDaviesHiltunen2024SkinEffect}
H.~Ammari, S.~Barandun, J.~Cao, B.~Davies, and E.~O.~Hiltunen, Mathematical foundations of the non-Hermitian skin effect,
Arch. Ration. Mech. Anal., 248 (2024), no.~33.

\bibitem{ammari2015mathematical}
H.~Ammari, E.~Bretin, J.~Garnier, H.~Kang, H.~Lee, and A.~Wahab, Mathematical Methods in Elasticity Imaging, Princeton Ser. Appl. Math., Princeton Univ. Press, Princeton, NJ, 2015.

\bibitem{Ammari2024}
H.~Ammari, B.~Davies, and E.~O.~Hiltunen, Anderson localization in the subwavelength regime, Comm. Math. Phys., 405 (2024),
no.~1.

\bibitem{ammari2022exceptional}
H.~Ammari, B.~Davies, E.~O.~Hiltunen, H.~Lee, and S.~Yu, Exceptional points in parity--time-symmetric subwavelength
metamaterials, SIAM J. Math. Anal., 54 (2022), 6223--6253.

\bibitem{ammari2024functional}
H.~Ammari, B.~Davies, and E.~O.~Hiltunen, Functional analytic methods for discrete approximations of
subwavelength resonator systems, Pure Appl. Anal., 6 (2024), 873--939.

\bibitem{Ammari2018PhotonicsPhononics}
H.~Ammari, B.~Fitzpatrick, H.~Kang, M.~Ruiz, S.~Yu, and H.~Zhang, Mathematical and Computational Methods in Photonics and
Phononics, Math. Surveys Monogr., vol.~235, Amer. Math. Soc., Providence, RI, 2018.

\bibitem{AMMARI20175610}
H.~Ammari, B.~Fitzpatrick, H.~Lee, S.~Yu, and H.~Zhang, Subwavelength phononic bandgap opening in bubbly media, J. Differential Equations, 263 (2017), 5610--5629.

\bibitem{Ammari2007}
H.~Ammari and H.~Kang, Polarization and Moment Tensors: With Applications to Inverse Problems and Effective Medium Theory,
Appl. Math. Sci., vol.~162, Springer, New York, 2007.

\bibitem{AmmariKangLee2009}
H.~Ammari, H.~Kang, and H.~Lee, Layer Potential Techniques in Spectral Analysis, Math. Surveys Monogr., vol.~153, Amer. Math. Soc., Providence, RI, 2009.

\bibitem{AmmariKosche2024TopologicalHoneycombFloquet}
H.~Ammari and T.~Kosche, Topological phenomena in honeycomb Floquet metamaterial, Math. Ann., 388 (2024), 2755--2785.

\bibitem{PhysRev.109.1492}
P.~W.~Anderson, Absence of diffusion in certain random lattices, Phys. Rev., 109 (1958), 1492--1505.

\bibitem{BaoLi2022MaxwellPeriodic}
G.~Bao and P.~Li, Maxwell's Equations in Periodic Structures, Appl. Math. Sci., vol.~208, Springer, Singapore, 2022.

\bibitem{ChenGaoLiRen2025VariationalResonance}
B.~Chen, Y.~Gao, P.~Li, and Y.~Ren, Analysis of subwavelength resonances in high contrast elastic media
by a variational method, arXiv:2501.07315, 2025.

\bibitem{he2025uniquenessphaselessinverseelastic}
Y.~He, W.~Wu, and H.~Dang, Uniqueness for phaseless inverse elastic scattering problem for periodic structures in 2D and 3D,
Inverse Probl. Imaging, (2026), doi:10.3934/ipi.2026039.

\bibitem{John1987}
S.~John, Strong localization of photons in certain disordered dielectric superlattices, Phys. Rev. Lett., 58 (1987), 2486--2489.

\bibitem{doi:10.1137/S0036139997320536}
A.~Klein and A.~Figotin, Midgap defect modes in dielectric and acoustic media, SIAM J. Appl. Math., 58 (1998), 1748--1773.

\bibitem{LI2026113822}
H.~Li and L.~Xu, Resonant modes of two hard inclusions within a soft elastic material and their stress estimates,
J. Differential Equations, 453 (2026), no.~113822.

\bibitem{RenChenGaoLi2025SubwavelengthBandgaps}
Y.~Ren, B.~Chen, Y.~Gao, and P.~Li, Subwavelength phononic bandgaps in high-contrast elastic media, Multiscale Model. Simul.,  24 (2026), 429--455.

\bibitem{Sheng2006}
P.~Sheng, Introduction to Wave Scattering, Localization and Mesoscopic Phenomena, Springer Ser. Mater. Sci., vol.~88, 2nd ed., Springer, Berlin, 2006.

\bibitem{VodickaMantic2004}
R.~Vodi\v{c}ka and V.~Manti\v{c}, On invertibility of elastic single-layer potential operator, J. Elasticity, 74 (2004), 147--173.

\bibitem{Weaver1990}
R.~L.~Weaver, Anderson localization of ultrasound, Wave Motion, 12 (1990), 129--142.

\end{thebibliography}

\end{document}